\newtheorem{thm}{Theorem}[section]
\newtheorem{cor}[thm]{Corollary}
\newtheorem{lem}[thm]{Lemma}
\theoremstyle{definition}
\newtheorem{defn}[thm]{Definition}
\theoremstyle{remark}
\newtheorem{rem}[thm]{Remark}
\newtheorem{ex}[thm]{Example}
\numberwithin{equation}{section}
\DeclareMathOperator{\co}{\mbox{\textsf{c}}}
\DeclareMathOperator{\si}{\mbox{\textsf{s}}}
\DeclareMathOperator{\ta}{\mbox{\textsf{t}}}
\DeclareMathOperator{\cota}{\mbox{\textsf{cot}}}
\newcommand{\BibTeX}{B\kern-0.1emi\kern-0.017emb\kern-0.15em\TeX}
\newcommand{\XYpic}{$\mathrm{X\kern-0.3em\raisebox{-0.18em}{Y}}$-$\mathrm{pic}\,$}
\newcommand{\cl}{C \kern -0.1em \ell}  
\newcommand{\ed}{\end{document}}
\begin{document}

\title[Outer and eigen: tangent concepts]
 {Outer and eigen: tangent concepts}
%
\author[Eelbode]{David Eelbode}
\email{david.eelbode@uantwerpen.be}
\author[Roelfs]{Martin Roelfs}
\email{martin.roelfs@uantwerpen.be}
\author[De Keninck]{Steven De Keninck}
\email{enkimute@gmail.com}
\subjclass{15A66, 11E88, 15A20}
\keywords{Invariant decomposition of bivectors, eigenvalues, the outer exponential, Cayley-Hamilton}
\date{\today}
\dedicatory{Last Revised:\\ \today}
\begin{abstract}
In this paper we use the power of the outer exponential $\Lambda^B$ of a bivector $B$ to see the so-called invariant decomposition from a different perspective. This is deeply connected with the eigenvalues for the adjoint action of $B$, a fact that allows a version of the Cayley-Hamilton theorem which factorises the classical theorem (both the matrix version and the geometric algebra version). 
\end{abstract}
\label{page:firstblob}
\maketitle
\tableofcontents

\begin{center}
    {\em Dedicated to the late professor Frank Sommen, \\
    the Gandalf meets Master Yoda of Clifford analysis.} 
\end{center}

\section{Introduction}
Apart from the generators of a geometric algebra, elements of grade 1 which can be interpreted as reflections in a plane-based view, one can argue that the bivectors (grade-2 elements which generate the even subalgebra) play an equally important role. Under the exponential map, these bivectors generate spin group elements (rotors $R$) and hence also have an undeniable interpretation in geometric terms (leading to rotations, for instance, in case of an Euclidean signature). 
In \cite{GSGPS}, Roelfs and De Keninck proved the `invariant decomposition' for bivectors, a generalisation of a conjecture by M. Riesz which states that a bivector in a geometric algebra $\mathbb{R}_{p,q,r}$ with $p+q+r=n$ can be decomposed into a sum of at most $\lfloor \frac{n}{2} \rfloor$ commuting bivectors (see \cite{Riesz}). The subscript $(p,q,r)$ hereby stands for the signature of the underlying (and possibly degenerate) orthogonal space of dimension $n$, where $r$ is the number of degenerate units (basis vectors $v_j$ for which $v_j^2 = 0$). 
The main result in \cite{GSGPS}, which in itself is a refinement of the classical Mozzi-Chasles Theorem, says that any $\ell$-reflection $U = u_1u_2\cdots u_\ell$ can be written as a product of exactly $\lceil \tfrac{\ell}{2} \rceil$ {\em commuting} factors. 
These are $\lfloor \tfrac{\ell}{2} \rfloor$ bireflections and, for odd $\ell$, one extra reflection. 
These factors are all {\em simple} and are thus generated by an element that squares to a scalar (for a bireflection this means that it is generated by a 2-blade, i.e. a product $B = vw$ of 2 orthogonal vectors). 
One can also see this result as a Lie algebra statement, because the commuting simple bivectors form a basis for the Cartan algebra for the orthogonal Lie algebra generated by the bivectors (see e.g. \cite{FH}).  \\
The aim of the present paper is to change the narrative slightly, to reinterpret the results from \cite{GSGPS} using the language of eigenvectors for the bivector $B$, and to explain why the so-called outer exponential $\Lambda^B$ plays an important role in this story. This function (defined in \cref{sec:outerexp}), not to be confused with the classical exponential $R = e^B$, already appeared in the work of Lounesto (see \cite{Lou}), but we will show how this particular function appears naturally in the framework of the eigenvalue equation for a bivector $B$. In particular, it leads to a factorisation (`a square root') of the classical matrix equation and its associated Cayley-Hamilton theorem (\cref{sec:ch}). The eigenvectors can then be combined into commuting simple bivectors which express the invariant decomposition (\cref{sec:eigenouter}). Moreover, this can all be expressed in terms of the so-called outer tangent function (defined in \cref{sec:outertrig}), which also appears due to the behaviour of the Cayley transform (\cref{sec:cayley}). Finally, we assume the reader is familiar with basic notions in a geometric (or Clifford) algebra such as grades and the properties of the multiplicative structure, if not we recommend classical sources such as \cite{HS, Lou, Po} or more pedagogical resources such as \cite{GA4Ph,GA4CS}. More familiarity with the (even more) geometric interpretation can be gained in e.g. the overview paper \cite{DKD}. 
\section{The outer exponential}\label{sec:outerexp}
Consider an arbitrary bivector $B \in \mathbb{R}_{p,q,r}^{(2)}$. The dimension of the subspace in which the bivector $B$ lives can be measured by computing the quantities
\begin{equation}
    W_j \coloneqq \frac{1}{j!} \overbrace{B \wedge B \wedge \cdots \wedge B}^{\text{j times}}\ = \frac {B^{{\scriptscriptstyle \wedge} j}}{j!}\ = \frac{1}{j!} \expval{ B^j }_{2j} ,
\end{equation}
whereby $B$ is wedged with itself $j$ times, and where $\expval{\cdots}_\ell$ is the $\ell$-grade selection operator. Indeed, the largest integer $k \in \mathbb{N}_0$ for which $W_k \neq 0$ determines the so-called {\em effective dimension} $2k$ for $B$. The $2k$-blade $W_k$ can then be seen as the pseudoscalar of the space in which the $B$-action takes place, and is referred to as the {\em effective pseudoscalar} (see \ref{W_k^2_real} for a proof that $W_k$ is indeed a pseudoscalar). Bivectors $B$ for which $W_k^2 = 0$ are special, because they always lead to a $2k$-blade $W_{k} = v_1 \wedge \cdots \wedge v_{2k}$ which contains at least one null generator $v_j$ (a reflection $v_j$ with $v_j^2 = 0$). To distinguish these bivectors from those for which $W_k^2 \neq 0$, we introduce the following concept: 
\begin{defn}
    A bivector $B \in \mathbb{R}_{p,q,r}^{(2)}$ is called \emph{pseudo-null} if its associated effective pseudoscalar $W_k$ is null, and hence $W_k^2 = 0$. A bivector which is not pseudo-null will be referred to as a {\em regular} bivector. 
\end{defn}
\begin{rem}
    In the special case where $B = W_1$ is null, the bivector itself is null but this is not true in general (hence the prefix `pseudo'). 
\end{rem}
\noindent
Note that most of the results that will be derived in this paper will hold for both regular and pseudo-null bivectors, but it will turn out to be handy to have a word to distinguish these. We shall see however, that the results of proofs which are only valid for the regular case will often be extendable to the pseudo-null case using a suitable limit involving the eigenvalues. Given a bivector $B$, it is well-known that $B$ has a well-defined grade preserving commutator action on the space of $\ell$-vectors, for $0 \leq \ell \leq n$, known as the adjoint representation:
\begin{align*}
    \textup{ad}(B) : \mathbb{R}_{p,q,r}^{(\ell)} \rightarrow\ \mathbb{R}_{p,q,r}^{(\ell)} : \alpha \mapsto \textup{ad}(B)[\alpha] \coloneqq \comm{B}{\alpha}\ = B\alpha - \alpha B.
\end{align*}
This notation is standard in the framework of Lie algebras, the appearance of which should not come as a surprise because the space of bivectors is indeed a model for the Lie algebra $\mathfrak{so}(p,q,r)$, a fact which lies behind one of the more influential papers by Sommen and others \cite{LGaSG}. However, it will not be the notation used in the present paper. Here, we will start from the fact that a product $AB$ can always be decomposed into a commutator and an anti-commutator:
\begin{equation*}
        AB = \frac{1}{2}[A,B] + \frac{1}{2}\{A,B\}\ .
\end{equation*}
Due to its special role, the commutator part is then traditionally defined as the commutator product:
\[ A \times B \coloneqq \frac{1}{2} [A,B]\ . \]
In Geometric Algebra (GA) one uses this product instead of the commutator, since it is often directly identifiable with one of the common products of GA. For example, for two vectors $u,v$ the commutator product is identical to the wedge product $u \times v = u \wedge v$, whereas for a bivector $B$ with a vector $u$ it equals the dot product $B \times u = B \cdot u$. 
A more geometric motivation for the commutator product stems from the fact that rotors are exponentials of bivectors. 
For example, in order to rotate an element $X$ in the plane by an angle $\theta$ around a point $p$ one forms the rotor $R = \exp(\tfrac{1}{2} \theta p)$ and transforms the element $X$ under the group action $X \to R X \widetilde{R}$.
At first order, this is equivalent to
    \[ X \to X + \theta p \times X, \]
where we used the expansion $R = 1 + \tfrac{1}{2} \theta p + \order{\theta^2}$.
The commutator product therefore also makes a natural appearance in transformations.
Furthermore, this establishes that the group action of $\textup{Spin}(p, q, r)$ and the adjoint action of $\mathfrak{spin}(p, q, r)$ are deeply linked, and both are \emph{grade preserving} operations. The commutator product is therefore more useful in GA than the classical commutator itself. It is then natural to wonder about the invariants of the commutator product with $B$, or equivalently the invariants under the group action. Because the commutator with $B$ is a linear map acting as a derivation $B \times uv = (B \times u)v + u(B \times v)$, the commutator with a composite element such as $uv$ can be understood fully by its action on vectors. This thus leads in a natural way to the study of eigenvectors $v_\mu$ under the commutator product with $B$, where $\mu$ will refer to the eigenvalue. 
\begin{defn}
    A vector $v_\mu \neq 0$ is an eigenvector for the bivector $B$ with eigenvalue $\mu \in \mathbb{C}$ if it satisfies $B\times v_\mu = B\cdot v_\mu = \mu v_\mu$. The set of eigenvalues for $B$, also known as its spectrum, will be denoted by $\sigma(B) \subset \mathbb{C}$. 
\end{defn}
\noindent
Rather than working with $B \times v_\mu = \mu v_\mu$ directly however, we will first show how one can recast this equation into an equation involving a single operator $M_\mu$ acting on $v$ from the left. 
This is motivated by our desire to work with the invertible geometric product, rather than the commutator product, as it enables us to define the spectrum of $B$ in terms of `singular' values $\mu$ (values for which the operator $M_\mu$ is not invertible). In order to derive the operator $M_\mu$, we need the following powerful lemma.
\begin{lem}\label{lem:W_recursive}
Suppose $B \in \mathbb{R}_{p,q,r}^{(2)}$ and $v \in \mathbb{R}_{p,q,r}^{(1)}$. If the image of the vector $v$ under the action of $B$ is denoted by means of $w = B \times v$, one has that
\begin{align}
W_j \cdot v &= W_{j-1} \wedge w = W_{j-1} \wedge (B \times v)
\end{align}
for all $j$, where we define $W_0 \coloneqq 1$. 
\end{lem}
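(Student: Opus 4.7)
The plan is to prove the identity by induction on $j$, using the fact that $W_j$ is defined recursively via a wedge product with $B$, combined with the antiderivation property of the right contraction with a vector.

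First I would set up the base case $j=1$: here $W_1 = B$ so $W_1 \cdot v = B \cdot v$, while the right-hand side reads $W_0 \wedge (B \cdot v) = 1 \wedge (B \cdot v) = B \cdot v$, using the identification $B \times v = B \cdot v$ (which already appears in the text for the commutator product of a bivector with a vector). This checks out.

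For the inductive step, I would use two ingredients. First, the recursive formula following directly from the definition:
\[
W_j \;=\; \frac{1}{j}\, W_{j-1} \wedge B,
\]
since $W_{j-1} \wedge B = \tfrac{1}{(j-1)!} B^{\w j} = j\, W_j$. Second, the standard antiderivation rule for right contraction with a vector, applied to a wedge $X \wedge Y$ with $Y$ of even grade (so the sign is $+1$):
\[
(X \wedge Y) \cdot v \;=\; X \wedge (Y \cdot v) + (X \cdot v) \wedge Y.
\]
Specialising to $X = W_{j-1}$ and $Y = B$ yields
\[
j\, W_j \cdot v \;=\; W_{j-1} \wedge (B \cdot v) \;+\; (W_{j-1} \cdot v) \wedge B.
\]

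Next I would invoke the induction hypothesis $W_{j-1} \cdot v = W_{j-2} \wedge (B \cdot v)$. Since $B \cdot v$ is a vector (grade $1$) and $B$ is a bivector (grade $2$), these two factors commute under $\wedge$, so
\[
(W_{j-1} \cdot v) \wedge B \;=\; W_{j-2} \wedge (B \cdot v) \wedge B \;=\; (W_{j-2} \wedge B) \wedge (B \cdot v) \;=\; (j-1)\, W_{j-1} \wedge (B \cdot v),
\]
again using $W_{j-2} \wedge B = (j-1) W_{j-1}$. Substituting back gives $j\, W_j \cdot v = j\, W_{j-1} \wedge (B \cdot v)$, which is the claim after dividing by $j$.

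I expect the main obstacle to be less about ideas than about getting the combinatorial factors and the antiderivation sign correct; in particular, making sure the sign $(-1)^{\mathrm{grade}(B)} = 1$ in the antiderivation rule is used with the correct convention for $X \cdot v$ (so that the two summands above both appear with a plus sign). Once this is done, the induction closes with a clean cancellation of the factorial weights.
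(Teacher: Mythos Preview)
Your induction argument is correct and is genuinely different from the paper's proof. The paper argues directly with the geometric product: it writes $W_{j+1} = \tfrac{1}{j+1}(W_jB - W_j\cdot B)$, observes that the $W_j\cdot B$ term is too low in grade to contribute to $W_{j+1}\cdot v$, and then expands $\langle B^{j+1}\cdot v\rangle_{2j+1}$ as a sum of terms $B^{a}(B\cdot v)B^{b}$, commuting $w=B\cdot v$ to the right with the relation $wB = Bw - 2B\cdot w$ (the dot-product corrections being discarded by the grade projection). No induction is used; everything is done by grade bookkeeping inside the Clifford product.

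Your route stays entirely in the exterior algebra: the recursion $W_j=\tfrac{1}{j}W_{j-1}\wedge B$ together with the antiderivation rule $(X\wedge B)\cdot v = X\wedge(B\cdot v)+(-1)^{|B|}(X\cdot v)\wedge B$ reduces the claim to the induction hypothesis and a one-line wedge commutation. This is shorter and more transparent, and it isolates exactly which structural fact is doing the work (the derivation property of right contraction by a vector). The paper's approach, on the other hand, has the advantage of not invoking the antiderivation identity as a black box and of rehearsing the grade-projection techniques that are used repeatedly later in the paper. Your caution about the sign convention is well placed: the $+1$ you need is indeed $(-1)^{\mathrm{grade}(B)}=(-1)^2$, and the commutation $w\wedge B=B\wedge w$ is $(-1)^{1\cdot 2}=+1$, so both signs work out.
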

\begin{proof}
    The proof for this statement relies on the simple observation that $W_j \cdot v$ is an element of grade $(2j-1)$. Since $W_j \times B = 0$, we get that 
    \[ W_j B = W_j \cdot B + \cancel{W_j \times B} + W_j \wedge B = W_j \cdot B + (j+1 )W_{j+1}\ . \]
    This means that $W_{j+1} = \tfrac{1}{j+1} \big( W_j B - W_j \cdot B\big)$. The upshot is that the term $W_j \cdot B$ has grade $(2j-2)$, which means that it cannot contribute to $W_{j+1} \cdot v$, as this is an element of grade $(2j+1)$. In other words, we have that 
    \begin{align*}
        W_{j+1} \cdot v &= \frac{1}{j+1} (W_j B) \cdot v = \frac{1}{(j+1)!}\expval{B^{j+1} \cdot v}_{2j+1} \\
        &= \frac{1}{(j+1)!}\expval{(B \cdot v) B^j + B (B \cdot v) B^{j-1} + \ldots + B^j (B \cdot v)}_{2j+1} \\
        &= \frac{1}{(j+1)!}\expval{w B^j + B w B^{j-1} + \ldots + B^j w }_{2j+1}\\ 
        &= \frac{1}{j!}\expval{B^jw}_{2j+1} = W_{j} \wedge w\ .
    \end{align*}
    Here we used the relation $wB = Bw - 2B\cdot w$ to commute $w$ to the right-hand side in the penultimate equality, thereby ignoring the dot products (which is allowed in view of the grade selection operator). 
\end{proof}
\noindent
\noindent
The previous lemma has a few consequences, which we will now investigate. First of all, we can now show that $W_k \neq 0$ is indeed an effective {\em pseudoscalar} in the space defined by a bivector $B$ (see the introduction). 
\begin{lem}\label{W_k^2_real}
    For a bivector $B \in \mathbb{R}_{p,q,r}^{(2)}$, the effective pseudoscalar $W_k$ satisfies the following properties:
    \begin{enumerate}
        \item For any vector $w = B \times v$ one has that $W_k \wedge w = 0$.
        \item The square of the pseudo-scalar is real, i.e. $W_k^2 \in \mathbb{R}$.
    \end{enumerate}
\end{lem}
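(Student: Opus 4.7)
The plan is to tackle the two parts in sequence, using part (1) as the key input for part (2).

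For \textbf{(1)}, the natural move is to invoke \cref{lem:W_recursive} at index $j=k+1$. The lemma gives the identity $W_{k+1}\cdot v = W_k \wedge (B\times v)$ for every vector $v$, and by the very definition of $k$ as the effective half-dimension, $W_{k+1}=0$. Hence the left-hand side vanishes identically, and with $w = B\times v$ one reads off the claim $W_k \wedge w = 0$.

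For \textbf{(2)}, my strategy is to show that $W_k$ is a $2k$-blade; once that is established, $W_k^2\in\mathbb{R}$ follows at once, since the square of any blade of grade $\ell$ equals $(-1)^{\ell(\ell-1)/2}$ times its scalar squared norm. To prove decomposability, I would study the subspace
\[
V_B := \{v \in \mathbb{R}_{p,q,r}^{(1)} : W_k \wedge v = 0\}.
\]
Part (1) shows that $V_B$ contains the image of the linear map $v\mapsto B\cdot v$. In the regular, non-degenerate case this image has dimension exactly $2k$ (the inequality $\ge 2k$ coming from $B^{\wedge k}\neq 0$ and the reverse from $B^{\wedge(k+1)}=0$), and together with the elementary Cartan-lemma bound $\dim V_B \le 2k$ for any nonzero $2k$-vector, this forces $\dim V_B = 2k$ and identifies $W_k$ with the pseudoscalar of $V_B$ up to a scalar, hence a blade.

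The step I expect to be the main obstacle is the pseudo-null / degenerate setting, where the contraction $B\cdot$ can drop in rank below $2k$. For instance, taking $B = e_1 e_2$ with orthogonal null basis vectors in $\mathbb{R}_{0,0,2}$ yields $\mathrm{Im}(B\cdot)=0$ even though $V_B$ still has dimension $2$. I would handle this either by perturbing $B$ to a regular bivector where the previous argument applies and passing to the limit (using that $W_k^2$ is polynomial in the coefficients of $B$), or by appealing directly to the classical exterior-algebra normal form $B = e_1\wedge f_1 + \cdots + e_k \wedge f_k$, which gives $W_k = e_1\wedge f_1 \wedge \cdots \wedge e_k \wedge f_k$ as a manifest blade and bypasses the rank subtleties entirely.
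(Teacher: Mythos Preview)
Your argument for (1) is exactly the paper's: apply \cref{lem:W_recursive} with $j=k+1$ and use $W_{k+1}=0$.

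For (2) you take a genuinely different route. The paper does \emph{not} argue that $W_k$ is a blade; instead it shows directly that $W_k^2$ is central. For an arbitrary vector $v_0$ it computes $W_k^2\times v_0 = \{W_k, W_{k-1}\wedge v_1\}$ with $v_1=B\cdot v_0$, then uses $W_k\wedge v_1=0$ from part (1) and commutativity of the $W_i$ to reduce this to $-\{W_k,W_{k-1}\cdot v_1\}$, and iterates via \cref{lem:W_recursive} down to zero. Since $W_k^2$ is an even, central element it must be scalar. This argument is uniform across all signatures $(p,q,r)$ and stays entirely inside the Clifford machinery already developed.

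Your approach instead upgrades the conclusion to ``$W_k$ is a $2k$-blade'', which is stronger and matches what the paper asserts informally in the introduction. The Darboux normal form $B=\sum e_i\wedge f_i$ is a clean, metric-free way to get this and is certainly correct. Note, however, that once you invoke it, neither part (1) nor the annihilator subspace $V_B$ is actually needed for (2), so that line of argument becomes decorative. Your primary route through $\dim\operatorname{Im}(B\cdot\,)=2k$ is more in the spirit of the paper but, as you correctly flag, breaks down when $r>0$; the perturbation fix works but is less elegant than either the paper's centrality argument or your own normal-form shortcut. In short: your proposal is correct, proves a bit more, but leans on an external classical result where the paper gives a self-contained inductive computation tailored to the $W_j$ calculus it is building.
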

\begin{proof}
    The proof to the first statement is a corollary of \ref{lem:W_recursive}:
    \[ W_k \wedge w = W_k \wedge (B\cdot v) = W_{k+1} \cdot v = 0. \] 
    Defining the vector $w$ as $B \times v$ ensures that the vector $w$ lies in the subspace of $B$ and thus also of $W_k$.
    In order to prove that $W_k^2 \in \mathbb{R}$, it suffices to show that $W_k^2$ is a central, i.e. commuting, element in the Clifford algebra: because $W_k^2$ has even degree, it can then only be a scalar. 
    Take an arbitrary vector $v_0$. 
    Either $B \cdot v_0 = 0$, in which case it trivially follows that also $W_k^2$ commutes with $v_0$, or $B \cdot v_0 = v_1 \neq 0$. 
    In the latter case we have that 
    \[ W_k^2\times v_0 = W_k(W_k \cdot v_0) + (W_k \cdot v_0)W_k = \{W_k,W_{k-1} \wedge v_1\}\ , \]
    where we have used the previous lemma. Since $W_k \wedge v_1 = 0$, and using the fact that the elements $W_i$ and $W_j$ always commute (for all $i$ and $j$), we have that
    \begin{align*}
       \{W_k,W_{k-1} \wedge v_1\} &= \{W_k,W_{k-1}v_1 - W_{k-1}\cdot v_1\}\\ 
       &= W_{k-1}(W_k \wedge v_1) - \{W_k,W_{k-1}\cdot v_1\}\\ 
       &= - \{W_k,W_{k-1}\cdot v_1\}\ . 
    \end{align*}
     If $B\cdot v_1 = 0$, then also $W_{k-1}\cdot v_1 = 0$ and the result is proved. If not, we know that $W_{k-1}\cdot v_1 = W_{k-2} \wedge v_2$ with $B \cdot v_1 = v_2$. But since 
    \[ \{W_k,W_{k-2} \wedge v_2\} = \{W_k,W_{k-2}v_2 - W_{k-2}\cdot v_2\} = -\{W_k,W_{k-2}\cdot v_2\}\ ,  \]
    hereby using that $\{W_k,W_{k-2}v_2\} = W_{k-2}(W_k\wedge v_2) = 0$, we have reduced the proof to an induction argument on the parameter $k$. After $(k-1)$ steps we arrive at the expression $\{W_k,W_1\cdot v_{k-1}\}$, which is always zero because the outer product of $W_k$ with a vector in the image of $B$ is trivial. 
\end{proof}
\begin{rem}
    The precise meaning of the scalar $W_k^2$ will become clear at the end of this section, see \cref{cor_prod_eigenvals}. 
\end{rem}
\noindent
Suppose that we now have an eigenvector $v_\mu \neq 0$ for a regular bivector $B$ with eigenvalue $\mu \in \mathbb{C}_0$. Note that zero eigenvalues $\mu = 0$ are excluded at this point, because we will concoct an argument which involves being able to divide by $\mu$. However, once we have reached our final conclusion, we will be able to extend it in such a way that no restrictions on the eigenvalues have to be imposed. Lemma \ref{lem:W_recursive} tells us that $W_j\cdot v_\mu = \mu W_{j - 1} \wedge v_\mu$. Repeatedly rewriting the wedge product in terms of the geometric and dot products, we find: 
\begin{align*}
        W_k v_\mu &= \mu W_{k-1} \wedge v_\mu\ =\ \mu W_{k-1} v_\mu - \mu W_{k-1} \cdot v_\mu \\
        &= \mu W_{k-1} v_\mu - \mu^2 W_{k-2} \wedge v_\mu = \cdots\\
        &= \big(\mu W_{k-1} - \mu^2 W_{k-2} + \ldots + (-1)^{k-1} \mu^k\big) v_\mu\ .
\end{align*}
This calculation suggests looking at the equation 
\[ M_\mu v_\mu \coloneqq \big(W_k - \mu W_{k-1} + \mu^2 W_{k-2} + \ldots + (-1)^k \mu^k\big)v_\mu = 0\ . \]
For $\mu \neq 0$, this equation is actually {\em equivalent} to the original eigenvalue equation $B\times v_\mu = \mu v_\mu$. This follows from the projection of the equation $M_\mu v_\mu = 0$ on the 1-graded part. Since $\mu \neq 0$, we can rewrite this eigenvalue equation as follows: 
\[ M_\mu v_\mu = (-\mu)^k\left(1 - \frac{B}{\mu} + \frac{W_2}{\mu^2} - \ldots + (-1)^k \frac{W_k}{\mu^k}\right)v_\mu = 0\ , \]
whereby the sum between brackets can be recognised as {\em the outer exponential} of the bivector $- B / \mu$, defined below: 
\begin{defn}
The outer exponential $\Lambda^B$ of a bivector $B \in \mathbb{R}_{p,q,r}^{(2)}$ is given by
\begin{equation}
\Lambda^B \coloneqq 1 + B + \frac{1}{2!}B \wedge B + \ldots = \sum_{j  = 0}^\infty  \frac{B^{\scriptscriptstyle \wedge j}}{j!} = \sum_{j  = 0}^k W_j\ .
\end{equation}
Note that the outer exponential will always be a finite sum, ending with the term for $j = k$ (the effective pseudoscalar is the last term in the summation). This stands in sharp contrast with the classical exponential, defined using an infinite series. We will discuss a geometrical interpretation of the outer exponential in \cref{sec:outergeo}
\end{defn}
\noindent
So, given a bivector $B \in \mathbb{R}_{p,q,r}^{(2)}$ and an eigenvalue $\mu \in \mathbb{C}_0$, the equation for the eigenvectors can be rewritten as 
\begin{align}\label{eigenvalue_outer_exp}
    \Lambda^{-\frac{B}{\mu}}v_\mu &= 0\ .
\end{align}
This suggests that the eigenvalues are precisely those numbers $\mu \in \mathbb{C}$ for which the outer exponential in the formula above is non-invertible. As a matter of fact, one has the following property (which also appears in \cite{Lou}, but we decided to include it here with an explicit proof): 
\begin{lem}\label{lem:mfs}
For any bivector $B \in \mathbb{R}_{p,q,r}^{(2)}$, the quantity $\vert \Lambda^{B}\vert ^2 = \Lambda^{B} \Lambda^{-B}$ is always a real-valued scalar. Note also that $|\Lambda^B|^2 = |\Lambda^{-B}|^2$. 
\end{lem}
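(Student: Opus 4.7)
The plan is to invoke the invariant decomposition of \cite{GSGPS}, which factorises $\Lambda^B$ so cleanly that both claims of the lemma fall out by direct inspection. Concretely, I would write $B = B_1 + \cdots + B_k$ as a sum of pairwise commuting simple bivectors $B_i$ supported in mutually orthogonal planes, with $B_i^2 \in \mathbb{R}$ for each $i$.

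Two observations then package the outer exponential as a product over the $B_i$. First, since each $B_i$ is simple one has $B_i \wedge B_i = 0$, so its wedge-exponential collapses to $1 + B_i$. Second, because the planes are mutually orthogonal, the wedge and geometric products agree on distinct $B_i$'s. Combining these with the exponential addition formula $\Lambda^{B+B'} = \Lambda^B \wedge \Lambda^{B'}$ (valid since any two bivectors commute in the exterior algebra), I obtain
\[
    \Lambda^B = \prod_{i=1}^k (1 + B_i),
\]
a product that may equally well be read as a wedge or as a geometric product.

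From here both statements of the lemma are immediate. Replacing $B$ by $-B$ gives $\Lambda^{-B} = \prod_i (1 - B_i)$, and since the factors $(1 \pm B_j)$ pairwise commute (each lives in its own orthogonal plane),
\[
    \Lambda^B \Lambda^{-B} \;=\; \prod_{i=1}^k (1 + B_i)(1 - B_i) \;=\; \prod_{i=1}^k (1 - B_i^2) \;\in\; \mathbb{R},
\]
and carrying out the same computation in the reverse order yields $|\Lambda^B|^2 = |\Lambda^{-B}|^2$. The one point to handle with care is the pseudo-null case, where some $B_i^2$ vanish: the invariant decomposition of \cite{GSGPS} is nevertheless stated for arbitrary bivectors in $\mathbb{R}_{p,q,r}$, so the formula $\prod_i(1 - B_i^2)$ still produces a real scalar (possibly zero in some factors). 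Alternatively, one may realise the pseudo-null case as a limit of regular bivectors, as alluded to earlier in the text.
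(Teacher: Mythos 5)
Your factorisation $\Lambda^B = \prod_i(1+b_i)$ and the resulting closed form $|\Lambda^B|^2 = \prod_i(1-b_i^2)$ are correct \emph{when the invariant decomposition exists with mutually orthogonal simple components}, and on that locus your argument is sound (it is essentially \cref{QuadB_product} applied $k$ times). But as a proof of the lemma it has a genuine scope gap: the lemma is asserted for \emph{every} bivector $B \in \mathbb{R}_{p,q,r}^{(2)}$, whereas a decomposition into commuting \emph{simple} bivectors does not always exist, and your claim that the theorem of \cite{GSGPS} ``is stated for arbitrary bivectors'' overreads it --- its closed form requires distinct eigenvalues, and even existence fails in general. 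The paper itself supplies the counterexample: the ``Jordanesque'' bivector $B = \sum_{a<b}e_{ab} \in \mathbb{R}_{2,2}^{(2)}$ (remark after \cref{lem:pairing_eigenv}) has $\sigma(B) = \{\pm 1\}$ with algebraic multiplicity two but only two eigenvectors, so $\textup{ad}(B)$ is not semisimple. If $B = b_1 + b_2$ with commuting simple $b_i$, then $\textup{ad}(b_1)+\textup{ad}(b_2)$ is semisimple (both $b_i^2 \neq 0$), nilpotent (both null), or a Jordan--Chevalley decomposition whose semisimple part $\textup{ad}(b_i)$ has at most two nonzero eigenvalues --- in no case can the semisimple part have spectrum $\{1,1,-1,-1\}$, so no such decomposition exists, yet the lemma must (and, by the paper's proof, does) still hold for this $B$. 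There is also a structural objection: within this paper, \cref{lem:mfs} is precisely what makes $P_{2k}(\mu) = M_\mu\widetilde{M}_\mu$ a scalar polynomial, i.e.\ it is the engine behind the definition of the spectrum and the rederivation of the invariant decomposition in \cref{theorem_decomp}; deducing it \emph{from} the invariant decomposition inverts that logic, and is non-circular only insofar as \cite{GSGPS} covers the case at hand, which as above it does not always do.

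Two further repairs would be needed even on the decomposable locus. In split signatures the $\mu_i^2$ can be genuinely complex, so the simple components come in complex-conjugate pairs with $b_i^2 \in \mathbb{C}$ rather than $\mathbb{R}$; your product is then still real, but only because the factors pair off, not factorwise as you assert. And your fallback of realising pseudo-null bivectors as limits of regular ones fails inside a degenerate algebra: in $\mathbb{R}_{3,0,1}$, for instance, \emph{every} bivector with $W_2 \neq 0$ is pseudo-null (the unique $4$-blade contains the null generator), so such $B$ form an open set and are not limits of regular bivectors of the same algebra --- the limits ``alluded to earlier in the text'' are limits in the eigenvalue $\mu$ of the formula $\mu\ta_\wedge(B/\mu)$, not limits of bivectors, and a genuine patch would require a polynomial-identity argument over deformations of the quadratic form, which you do not supply. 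Contrast this with the paper's proof, which is decomposition-free: it observes that $\Lambda^B\Lambda^{-B}$ is a self-reverse even element (hence of grades $4j$) and shows directly, using $\Lambda^{\theta B}\times v = \theta\,\Lambda^{\theta B}\wedge w$ from \cref{lem:W_recursive}, that it commutes with every vector, hence is central, hence a real scalar --- with no hypothesis on $B$ whatsoever.
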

\begin{proof}
Because $\Lambda^{B} \Lambda^{-B}$ is a self-reverse element of the even subalgebra, it can only contain elements of grade $4j$ with $j \in  \mathbb{N}$. To prove that $\Lambda^{B} \Lambda^{-B}$ is scalar, it therefore suffices to show that it commutes with any vector $v$ and hence defines a central element. Putting $B \times v = B \cdot v = w$, we then have that 
    \begin{align*}
        (\Lambda^{B} \Lambda^{-B}) \times v &= (\Lambda^{B} \times v) \Lambda^{-B} + \Lambda^{B} (\Lambda^{-B} \times v) \\ 
        &= (\Lambda^{B} \wedge w) \Lambda^{-B} - \Lambda^{B} (\Lambda^{-B} \wedge w)\ ,
    \end{align*}
where we used the fact that for all $\theta \in \mathbb{R}$ one has
\begin{align*}
    \Lambda^{\theta B} \times v &= \sum_{j = 1}^k\: \theta ^j W_j \times v = \sum_{j = 1}^k \theta ^j W_{j-1} \wedge w = \theta \, \Lambda^{\theta B} \wedge w\ .
\end{align*}
Invoking the relation $\{A,B\}C - A\{B,C\} = [B,AC] = -[AC,B]$, and using that $\Lambda^B \wedge w = \tfrac{1}{2}\acomm{\Lambda^B}{w}$ is an anti-commutator, we find
\begin{align*}
    (\Lambda^{B} \wedge w) \Lambda^{-B} - \Lambda^{B} (\Lambda^{-B} \wedge w) = - (\Lambda^{B} \Lambda^{-B}) \times w\ .
\end{align*}
Using a similar reasoning, we find that $(\Lambda^{-B}\Lambda^{B}) \times v = +(\Lambda^{-B}\Lambda^{B}) \times w$, where the relative minus sign is crucial. Since $\Lambda^{B}$ and $\Lambda^{-B}$ commute, this leads to
\[ - (\Lambda^{B}\Lambda^{-B}) \times w = (\Lambda^{B}\Lambda^{-B}) \times v = (\Lambda^{-B}\Lambda^{B}) \times v = +(\Lambda^{-B}\Lambda^{B}) \times w\ . \]
This says that $(\Lambda^{B}\Lambda^{-B}) \times v$ is equal to plus {\em and} minus the same expression, and hence trivial, which concludes the proof. 
\end{proof}
\noindent
The equation $M_\mu v = 0$ can only have non-trivial solutions when $M_\mu$ is not invertible.
In view of the fact that $M_\mu$ can be written as an outer exponential, lemma then \ref{lem:mfs} implies that eigenvectors for $B$ are associated to $\mu$ for which $M_\mu \widetilde{M}_\mu = 0$.
\begin{defn}
    The spectrum $\sigma(B)$ of a bivector $B \in \mathbb{R}^{(2)}_{p,q,r}$ is defined as the solutions (over $\mathbb{C}$) of the scalar equation 
    \begin{align} \label{master_polynomial}
        P_{2k}(\mu) \coloneqq M_\mu \widetilde{M}_\mu = \mu^{2k}\Lambda^{-\frac{B}{\mu}}\Lambda^{+\frac{B}{\mu}} = 0\ .
    \end{align}
The integer $2k \in \mathbb{N}_0$ is the previously defined effective dimension of $B$. 
\end{defn}
\begin{rem}
    Note that $P_{2k}(\mu)$ is still defined if $\mu = 0$ is an eigenvalue, despite the fact that the outer exponentials will not be defined. Indeed, the factor $\mu^{2k}$ cancels the apparent poles. 
\end{rem}
\noindent
Note that $P_{2k}(\mu)$ is a polynomial of degree $2k$ in the variable $\mu \in \mathbb{C}$, with real coefficients. It is also clear that $P_{2k}(\mu)$ is even in $\mu$, which implies that solutions appear in pairs $\pm \mu_i \in \mathbb{C}$ (this even holds for zero eigenvalues). Put differently, we have that $P_{2k}(\mu) = Q_k(\lambda)$, where $Q_k$ is a polynomial of degree $k$ and $\lambda = \mu^2$. As a result, there is a connection between the eigenvalues $\pm\mu_i$ (for $1 \leq i \leq k$) and the quantities $W_j$ associated to $B$. To see this, we will need the so-called elementary symmetric polynomials of degree $p$ in $d$ variables, given by
\[ e_p(x_1,\ldots,x_d) \coloneqq \sum_{1 \leq i_1 < \ldots < i_p \leq d}x_{i_1}\ldots x_{i_p}\ , \]
where it is tacitly assumed that $1 \leq p \leq d$ (one can allow the index $p = 0$ if one defines $e_0$ as the constant function 1). 
\begin{lem}\label{W_j^2_sympol}
    For a bivector $B \in \mathbb{R}_{p,q,r}^{(2)}$, one has that $\expval{W_j^2}_0 = e_j(\mu_1^2,\ldots,\mu_k^2)$, and this for all indices $1 \leq j \leq k$ where $2k$ is the effective dimension. 
\end{lem}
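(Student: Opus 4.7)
The key idea is to compute the coefficients of the polynomial $P_{2k}(\mu) = M_\mu \widetilde{M}_\mu$ in two different ways. On one hand, these coefficients can be read off from the expansion of $M_\mu \widetilde{M}_\mu$ as an explicit product of polynomials in $\mu$ whose coefficients are products $W_a W_b$; on the other, since $P_{2k}$ has roots $\pm\mu_1,\ldots,\pm\mu_k$ and monic leading term, it factors as $\prod_{i=1}^k (\mu^2-\mu_i^2)$, whose coefficients are (up to signs) the elementary symmetric polynomials in the $\mu_i^2$. Matching the two expressions term by term will give the result.

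\textbf{First step.} I would write $M_\mu = \sum_{a=0}^k (-1)^{k-a}\mu^{k-a}W_a$ and compute $\widetilde{M}_\mu$ using the standard reversion rule $\widetilde{W}_a = (-1)^a W_a$ (since $W_a$ is a $2a$-vector), obtaining $\widetilde{M}_\mu = (-1)^k \sum_{b=0}^k \mu^{k-b}W_b$. Multiplying gives
\begin{equation*}
P_{2k}(\mu) \;=\; \sum_{j=0}^k \mu^{2j}\!\!\sum_{a+b=2(k-j)} (-1)^a \, W_a W_b,
\end{equation*}
and by Lemma \ref{lem:mfs} this polynomial is automatically scalar-valued.

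\textbf{Second step.} Here I use the crucial grade observation: since $W_a$ has grade $2a$ and $W_b$ has grade $2b$, the scalar part $\langle W_a W_b\rangle_0$ vanishes unless $a=b$. In the inner sum with $a+b=2(k-j)$, the only surviving term is therefore $a=b=k-j$, so the coefficient of $\mu^{2j}$ reduces to $(-1)^{k-j}\langle W_{k-j}^2\rangle_0$. Meanwhile, from the factorisation $P_{2k}(\mu)=\prod_i(\mu^2-\mu_i^2)$ (monic thanks to $W_0=1$, and with all roots $\pm\mu_i$ by definition of the spectrum), the coefficient of $\mu^{2j}$ equals $(-1)^{k-j} e_{k-j}(\mu_1^2,\ldots,\mu_k^2)$. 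Comparing and reindexing $i=k-j$ yields $\langle W_i^2\rangle_0 = e_i(\mu_1^2,\ldots,\mu_k^2)$.

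\textbf{Expected obstacle.} Nothing in the argument is deep; the steps that need most care are the bookkeeping of signs (the interplay between the $(-1)^{k-a}$ in $M_\mu$, the $(-1)^a$ coming from reversion, and the $(-1)^{k-j}$ coming from the factored form), and the justification that $P_{2k}$ is genuinely monic of degree $2k$, which follows from the fact that the $\mu^k$--coefficient of both $M_\mu$ and $\widetilde{M}_\mu$ equals $(-1)^k$. One should also remark that the spectrum may contain zero eigenvalues (e.g.\ in the pseudo-null case), but since the identity is polynomial in the $\mu_i^2$ and the elementary symmetric polynomials are continuous, the formula persists by a limiting argument as already hinted at in the text after the definition of a regular bivector.
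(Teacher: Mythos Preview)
Your proposal is correct and follows essentially the same route as the paper: expand $P_{2k}(\mu)=M_\mu\widetilde{M}_\mu$ directly, compare with the factorisation $\prod_i(\mu^2-\mu_i^2)$, and use that $\langle W_aW_b\rangle_0=0$ for $a\neq b$ to isolate the diagonal term. Your sign bookkeeping is in fact more explicit than the paper's; the only unnecessary addition is the limiting remark at the end, since the argument already works verbatim for pseudo-null $B$ (the polynomial $P_{2k}$ is monic of degree $2k$ regardless, so its $2k$ complex roots—zero or not—are always defined).
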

\begin{proof}
    On the one hand, the characteristic equation for $B$ is given by 
    \begin{align*}
        0 = P_{2k}(\mu) &= \big(\mu^k - \mu^{k-1}B + \ldots + (-1)^k W_k\big)\big(\mu^k + \mu^{k-1}B + \ldots + W_k\big)\ ,
    \end{align*}
    where we once again stress that this is a {\em scalar} equation by lemma \ref{lem:mfs}. On the other hand, given the solutions $\pm \mu_i$ (with $1 \leq i \leq k$), it is well-known that the characteristic equation can be expressed in terms of these eigenvalues as
    \[ P_{2k}(\mu) = \prod_{j=1}^k (\mu^2 - \mu^2_j) = \sum_{j = 0}^{2k} (-1)^j e_{j}(\mu_1,-\mu_1,\ldots,\mu_k,-\mu_k)\mu^{2k-j}\ . \]
    However, the odd symmetric polynomials will be trivial in the summation above, precisely because the eigenvalues appear in pairs $\pm \mu_i \in \mathbb{C}$. Even more, for even indices we get that $e_{2j}(\mu_1,-\mu_1,\ldots,\mu_k,-\mu_k) = (-1)^j e_{j}(\mu_1^2,\ldots,\mu_k^2)$, which means that 
    \[ \sum_{j = 0}^k (-1)^j e_{j}(\mu_1^2,\ldots,\mu_k^2) \mu^{2k-2j} = \big(\mu^k \Lambda^{-\frac{B}{\mu}}\big)\big(\mu^k \Lambda^{+\frac{B}{\mu}}\big)\ .  \]
    To arrive at the conclusion of the lemma, it is therefore sufficient to compare the coefficients of $\mu^{2k-2j}$ at both sides of the equation (hereby taking into account that this equation is real-valued). Now, when looking at $\mu^{2k-2j}$, it is clear that this leads to an equation of the form $e_{j}(\mu_1^2,\ldots,\mu_k^2) = W_j^2 + \mbox{rest}$, whereby `rest' stands for products of the form $W_aW_b$ whereby $a \neq b$ (note that `rest' can be zero). These products are then necessary to ensure that the non-scalar parts of $W_j^2$ disappear, but they never contribute to the scalar part itself. Taking the scalar part of the last equality above then proves the statement. 
\end{proof}
\noindent
The case $j = k$ is interesting in its own right, as it gives us a connection between regular/pseudo-null bivectors and their spectrum:  
\begin{cor}\label{cor_prod_eigenvals}
    If $W_k$ is the effective pseudoscalar associated to a bivector $B$, then $W_k^2$ is the product of all the squared eigenvalues for $B$.
\end{cor}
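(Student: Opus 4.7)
The plan is to observe that this corollary is essentially the $j = k$ specialisation of \cref{W_j^2_sympol}, combined with the scalar-valuedness statement from \cref{W_k^2_real}. So the work has already been done and what remains is to assemble the pieces.

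First I would invoke \cref{W_j^2_sympol} with $j = k$ to obtain
\[ \expval{W_k^2}_0 = e_k(\mu_1^2,\ldots,\mu_k^2). \]
Next I would point out that the elementary symmetric polynomial of top degree in $k$ variables collapses to the single monomial $e_k(x_1,\ldots,x_k) = x_1 x_2 \cdots x_k$, since there is exactly one way to choose all of the indices. Hence
\[ \expval{W_k^2}_0 = \prod_{i=1}^{k} \mu_i^2. \]

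To upgrade the scalar-part equation to an equation for $W_k^2$ itself, I would appeal to \cref{W_k^2_real}, which guarantees that $W_k^2 \in \mathbb{R}$, so $W_k^2 = \expval{W_k^2}_0$ with no hidden contribution from higher grades. Substituting then yields $W_k^2 = \prod_{i=1}^{k} \mu_i^2$, which is exactly the claim.

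There is essentially no obstacle here, since both ingredients have been established. The only small subtlety to flag for the reader is the pseudo-null case: if $B$ is pseudo-null then $W_k^2 = 0$ must force at least one of the eigenvalues $\mu_i$ to vanish, which is consistent with the earlier remark that eigenvalues come in pairs $\pm \mu_i$ and with the convention that $P_{2k}(\mu)$ remains well-defined even when $\mu = 0$ occurs in the spectrum. So the formula $W_k^2 = \prod_i \mu_i^2$ holds uniformly for regular and pseudo-null bivectors alike, and this is the clean statement the corollary records.
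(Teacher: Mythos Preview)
Your proposal is correct and matches the paper's approach exactly: the corollary is stated immediately after \cref{W_j^2_sympol} as the $j=k$ instance, and the step $W_k^2 = \expval{W_k^2}_0$ via \cref{W_k^2_real} is precisely the missing link you supply. The paper leaves this implicit, so your write-up is if anything slightly more explicit than the original.
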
 

\subsection{Outer Exponential Geometry}\label{sec:outergeo}
While the ordinary exponential, with its infinite Taylor expansion, has a continuous feel to it, the outer exponential as a sum of at most $k$ discrete terms is in effect a sum of just a few discrete transformations. To get a geometric intuition for how the outer exponential still generates (unnormalized) $\textup{Spin}$ transformations,
consider as an example a simple bireflection $R = a + bB$ encoding a rotation in the Euclidean plane, where $a,b \in \mathbb R, a^2 + b^2 = 1$ and $B$ is a normalized bivector representing the point we are rotating around.

We aim to understand how all such rotations can be written as a linear combination of the identity bireflection and a point reflection in the center of rotation $B$.
From this perspective, the weight $a$ tells us how important the identity contribution is, while $b$ tells us the importance of our point-reflected contribution. Since both these contributions lie on the same line, it seems strange that this could ever produce rotations.
However, the paradox is resolved when we work out the transformation of, e.g. a point $P$ under the bireflection $R$:
\[P' = R P \widetilde{R} = (a + bB) P (a - bB) = a^2P + 2ab(B \times P) - b^2 BPB \ . \]
The final rewrite now shows us that the transformed point $P'$ is, in fact, the linear combination of \emph{three} terms.
As shown in \cref{fig:outergeo}, with a weight of $a^2$ we have the identity contribution (blue), the reflected contribution gets a weight of $b^2$ (pink), and the remaining contribution is given by the commutator product (black). 
The geometry of this commutator is the key to the discrete interpretation of a rotation: since both $P$ and $B$ are points in our example, their product is a translation.
However, $PB$ and $BP$ are translations in opposite directions, making their difference a point at infinity, orthogonal to the line between $P$ and $B$. Adding this point at infinity will therefore make the resulting point move off the line $P \vee B$.
\begin{figure}
    \centering
    \includegraphics[width=1\linewidth]{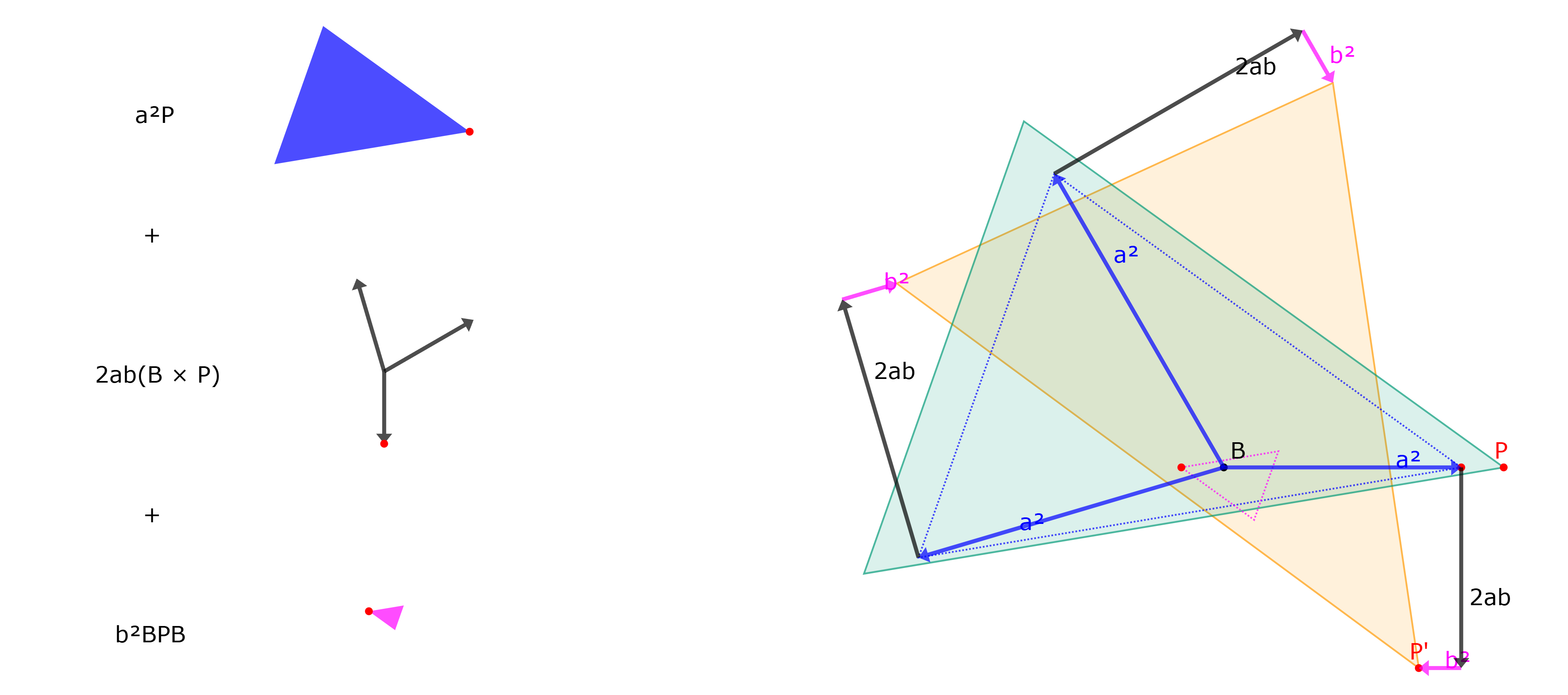}
    \caption{The rotation of the green triangle into the orange one as a weighted sum of the untransformed (blue), point reflected (pink) and commutator (black) contributions.}
    \label{fig:outergeo}
\end{figure}

The situation is similar for other grades of $P$, since the identity, reflected, and commutator product contributions are all grade preserving.
Moreover, the same mindset works in 3D.
For example, in 3DPGA, given $B$ a bivector line and $P$ a trivector point, their compositions $BP$ and $PB$ are both tri-reflections (transflections) that share the same reflection and have opposite translation parts. 
Their difference will again be an ideal point orthogonal to the plane $B \vee P$, just as it is in the 2D case.

Hence, after normalization the outer exponential's natural interpretation as a linear combination of discrete transformations offers an alternative way to think about transformations.
\section{Outer trigonometry}\label{sec:outertrig}
Consider a bivector $B \in \mathbb{R}_{p,q,r}^{(2)}$ with $W_k \neq 0$ the effective pseudoscalar. We will then show how the outer exponential can be used to say something about the decomposition of $B$ into commuting simple bivectors. To do so, we need the outer trig functions. In view of the natural connection between the exponential function and the trigonometric functions (be it elliptic or hyperbolic), it seems natural to define the outer trigonometric functions as follows: 
\begin{defn}
    For $B \in \mathbb{R}_{p,q,r}^{(2)}$, the outer trigonometric functions are given by: 
    \begin{align*}
        \co_{\wedge}(B) = \tfrac{1}{2}(\Lambda^B + \Lambda^{-B}) \qquad \si_{\wedge}(B) = \tfrac{1}{2}(\Lambda^B - \Lambda^{-B}) \qquad \ta_{\wedge}(B) = \frac{\si_\wedge(B)}{\co_\wedge(B)}
    \end{align*}
    The last expression can obviously be ill-defined, when $\co_\wedge(B)$ is not invertible. 
\end{defn}
\noindent
These outer trigonometric functions share many of the classical properties. In order to prove these properties, we will use the fact that the elements $W_j$ satisfy the following relation:
\begin{align}\label{wedge_W}
    W_i \wedge W_j &= \frac{1}{i!j!}\expval{B^i}_{2i}\wedge\expval{B^{j}}_{2j} = {i + j \choose j}\frac{\expval{B^{i + j}}_{2(i + j)}}{(i+j)!} = {i + j \choose j}W_{i+j}
\end{align}
with $0 \leq i, j \leq k$ and where $W_{i+j}$ is trivial whenever $i+j > k$. Also note that when $i \neq j$, one has that the scalar part $\expval{W_iW_j}_0 = 0$. The first thing to look at is the main identity, but here it matters whether the Clifford product or the outer product is used. Since $\Lambda^B \wedge \Lambda^{-B} = 1$, which follows from straight-forward calculations based on formula (\ref{wedge_W}), one gets that 
\begin{align*}
    \big(\co_{\wedge}(B)\big)^{\wedge 2} - \big(\si_{\wedge}(B)\big)^{\wedge 2} = 1\ ,
\end{align*}
where $\alpha^{\wedge k}$ stands for the $k$-fold outer product of an element $\alpha$ with itself, and 
\begin{align*}
    \big(\co_{\wedge}(B)\big)^{ 2} - \big(\si_{\wedge}(B)\big)^{ 2} = \big\vert\Lambda^B\big\vert^2
\end{align*}
Note that the signs appearing here are reminiscent of the hyperbolic functions rather than the trigonometric functions. To derive relations which mimic the classical sum and difference relations for sine and cosine, the following relation will come in handy: 
\begin{lem}
If $B_1$ and $B_2 \in \mathbb{R}_{p,q,r}^{(2)}$ are commuting bivectors, i.e. $B_1 \times B_2 = 0$, then one has that
\[ W^{(12)}_a = \sum_{j = 0}^a W_j^{(1)}\wedge W_{a-j}^{(2)}\ , \]
where the upper indices $(1)$ and $(2)$ refer to the bivectors $B_1$ and $B_2$, and where $(12)$ refers to the bivector sum $B = B_1 + B_2$. 
\end{lem}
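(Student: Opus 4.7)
The plan is to expand $(B_1 + B_2)^a$ using the ordinary binomial theorem for the geometric product and then project onto the top-grade component. The assumption $B_1 \times B_2 = 0$ is precisely what makes $B_1$ and $B_2$ commute with respect to the geometric product, and this is where the commutativity hypothesis enters.

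First I would write $W_a^{(12)} = \tfrac{1}{a!}\langle (B_1+B_2)^a\rangle_{2a}$ using the definition from \cref{sec:outerexp}. Because $B_1 B_2 = B_2 B_1$, the binomial expansion gives
\[
(B_1 + B_2)^a = \sum_{j=0}^a \binom{a}{j} B_1^j\, B_2^{a-j}\ .
\]
Each factor $B_i^{m}$ is an even multivector whose grades range from $0$ up to the top grade $2m$, with the top grade equal to $\expval{B_i^m}_{2m} = m!\, W_m^{(i)}$.

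Next I would extract the grade-$2a$ component from each summand. Since $B_1^j$ has maximum grade $2j$ and $B_2^{a-j}$ has maximum grade $2(a-j)$, the only way to produce grade $2a$ in the product $B_1^j B_2^{a-j}$ is by taking the top grade of each factor and wedging (the top grade of a product is always the wedge of the tops). Thus
\[
\langle B_1^j B_2^{a-j}\rangle_{2a} = \expval{B_1^j}_{2j} \wedge \expval{B_2^{a-j}}_{2(a-j)} = j!(a-j)!\, W_j^{(1)} \wedge W_{a-j}^{(2)}\ .
\]
Substituting back and cancelling $a!$ with $\binom{a}{j}j!(a-j)!$ yields the claimed identity.

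The only mildly delicate point is the top-grade factorisation used in the second step. I would justify it by noting that for any two multivectors $X,Y$, the grade $|X|+|Y|$ component of $XY$ equals $\langle X\rangle_{\max}\wedge\langle Y\rangle_{\max}$, because any lower-grade contribution from one factor forces a correspondingly lower total grade. No further subtlety is required; commutativity of the wedge on bivectors is automatic from their even grade, but note that commutativity under the wedge alone would not suffice, since what is really needed is the geometric-product binomial expansion, which is why the hypothesis $B_1\times B_2 = 0$ is essential.
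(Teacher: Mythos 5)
Your proof is correct and follows essentially the same route as the paper's: expand $(B_1+B_2)^a$ by the binomial theorem (valid because $B_1\times B_2=0$ makes the geometric product of $B_1$ and $B_2$ commute), project onto grade $2a$, and use $\expval{B_1^j B_2^{a-j}}_{2a}=\expval{B_1^j}_{2j}\wedge\expval{B_2^{a-j}}_{2(a-j)}$ -- a top-grade factorisation you justify explicitly where the paper leaves it implicit. One correction to your closing remark, though: the hypothesis $B_1\times B_2=0$ is \emph{not} essential for this particular lemma, because $W_a=\tfrac{1}{a!}B^{\wedge a}$ and bivectors always commute under the outer product ($B_1\wedge B_2=B_2\wedge B_1$, both factors having even grade), so the binomial theorem applies directly to wedge powers in the commutative, associative exterior subalgebra they generate and yields the identity for \emph{arbitrary} bivectors; the commutativity hypothesis only becomes genuinely necessary in Lemma \ref{QuadB_product}, where the wedge is upgraded to the geometric product.
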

\begin{proof}
In view of the fact that the bivectors commute, one has that 
\begin{align*}
    W^{(12)}_a &= \frac{1}{a!}\expval{(B_1 + B_2)^a}_{2a} = \frac{1}{a!}\expval{\sum_{j = 0}^a {a \choose j}B_1^j B_2^{a - j}}_{2a}\\ 
    &= \sum_{j = 0}^a \frac{1}{j!(j-a)!}\expval{B_1^j}_{2j} \wedge \expval{B_2^{a-j}}_{2(a-j)}\ ,
\end{align*}
which reduces to the desired result. 
\end{proof}
\begin{cor}
    If two bivectors $B_1$ and $B_2$ commute, i.e. $B_1\times B_2 = 0$, then 
    \[ \Lambda^{B_1 + B_2} = \Lambda^{B_1} \wedge \Lambda^{B_2}\ . \]
\end{cor}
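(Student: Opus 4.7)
The plan is to observe that this corollary is essentially a direct repackaging of the preceding lemma together with the fact that every outer exponential is a \emph{finite} sum. Since $\Lambda^{B} = \sum_{j \geq 0} W_j$ by definition, and since $W^{(12)}_a$ vanishes once $a$ exceeds the effective dimension of $B_1 + B_2$ (and analogously for the factors), there are no convergence issues and I can freely reorder finite sums.

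First I would write out
\[
\Lambda^{B_1 + B_2} \;=\; \sum_{a \geq 0} W^{(12)}_a \;=\; \sum_{a \geq 0} \sum_{j = 0}^{a} W_j^{(1)} \wedge W_{a-j}^{(2)} ,
\]
where the first equality is the definition of the outer exponential and the second applies the preceding lemma, which is precisely where the hypothesis $B_1 \times B_2 = 0$ is used. The inner expression is a Cauchy product, so reindexing with $i = j$ and $\ell = a - j$ gives
\[
\Lambda^{B_1 + B_2} \;=\; \sum_{i \geq 0} \sum_{\ell \geq 0} W_i^{(1)} \wedge W_\ell^{(2)} \;=\; \Bigl(\sum_{i \geq 0} W_i^{(1)}\Bigr) \wedge \Bigl(\sum_{\ell \geq 0} W_\ell^{(2)}\Bigr) \;=\; \Lambda^{B_1} \wedge \Lambda^{B_2} ,
\]
using bilinearity of the wedge product in the last step.

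There is no real obstacle here beyond bookkeeping: the entire content of the statement is carried by the previous lemma, and the corollary is the formal observation that summing that lemma over $a$ turns a convolution into a product. If one wanted to be extra careful, I would check that the indices truly close up (i.e. that $W^{(1)}_i \wedge W^{(2)}_\ell$ vanishes for sufficiently large $i$ or $\ell$, which it does since each factor is by definition a finite sum), so that switching the order of summation and extending the upper limits to infinity introduces no phantom terms.
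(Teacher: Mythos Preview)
Your proof is correct and is exactly the intended argument: the paper states this as an immediate corollary of the preceding lemma without further proof, and your Cauchy-product repackaging is precisely the (only natural) way to read it off.
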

\noindent
Note that the previous result is expressed in terms of outer products. This is interesting in its own right, but it is often better if one can make use of the Clifford product. A slightly stronger set of requirements on the bivectors $B_1$ and $B_2$ leads to the following: 
\begin{lem}\label{QuadB_product}
If $B_1$ and $B_2 \in \mathbb{R}_{p,q,r}^{(2)}$ are commuting bivectors with $\expval{B_1B_2}_0 = 0$, which thus means that $B_1B_2 = B_1 \wedge B_2$, then one has that
\begin{equation}\label{comm_product}
    \Lambda^{B_1 + B_2} = \Lambda^{B_1} \Lambda^{B_2}\ .
\end{equation}
\end{lem}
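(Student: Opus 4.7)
The plan is to leverage the preceding corollary, which already provides the ``wedge version'' of the identity for merely commuting bivectors, and then upgrade the outer product to the full Clifford product using the extra hypothesis $\langle B_1 B_2 \rangle_0 = 0$. Since $B_1 \times B_2 = 0$, the corollary immediately gives $\Lambda^{B_1 + B_2} = \Lambda^{B_1} \wedge \Lambda^{B_2}$, so it suffices to prove that under the added hypothesis $\Lambda^{B_1} \wedge \Lambda^{B_2} = \Lambda^{B_1} \Lambda^{B_2}$. Expanding in the decompositions $\Lambda^{B_\ell} = \sum_j W_j^{(\ell)}$, this further reduces to showing, for every pair $(i,j)$, the identity
\[
W_i^{(1)} W_j^{(2)} = W_i^{(1)} \wedge W_j^{(2)},
\]
i.e.\ that the Clifford product $W_i^{(1)} W_j^{(2)}$ is homogeneous of top grade $2(i+j)$.

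I would then establish this claim by induction on $i+j$. The base cases $i = 0$ or $j = 0$ are trivial, and the case $(i,j) = (1,1)$ is precisely the hypothesis $B_1 B_2 = B_1 \wedge B_2$. For the inductive step, I would combine the recursion $(i+1)W_{i+1}^{(1)} = W_i^{(1)} B_1 - W_i^{(1)} \cdot B_1$ (lifted from the proof of Lemma \ref{lem:W_recursive}) with the observation that $B_1 \times W_j^{(2)} = 0$, which follows from $B_1 \times B_2 = 0$ by the derivation property of the commutator product. Commutation of $B_1$ with $W_j^{(2)}$ allows $B_1$ to be shifted freely past $W_j^{(2)}$ in any Clifford product, and the inductive hypothesis then reduces the target identity to verifying that the lower-grade contractions, most notably $B_1 \cdot W_j^{(2)}$, vanish.

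The main obstacle is precisely this vanishing of the lower-grade contractions. The base case $B_1 \cdot B_2 = \langle B_1 B_2 \rangle_0 = 0$ is immediate, but propagating it through the wedge powers $B_2^{\wedge j}$ requires a secondary induction using a derivation-style identity of the form $B_1 \cdot (X \wedge Y) = (B_1 \cdot X) \wedge Y \pm X \wedge (B_1 \cdot Y)$ together with commutation. A more transparent alternative route is to apply the simultaneous invariant decomposition to the commuting pair $(B_1, B_2)$, writing both $\Lambda^{B_1}$ and $\Lambda^{B_2}$ as products of simple factors $(1 + c_\alpha \pi_\alpha)$ over a shared basis of mutually commuting simple bivectors $\pi_\alpha$; regrouping the two products per plane then reduces the lemma to a term-by-term verification on each $\pi_\alpha$-plane, in which the hypothesis $\langle B_1 B_2 \rangle_0 = 0$ is exactly the condition that kills the obstructing cross-terms.
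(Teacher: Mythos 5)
Your reduction of the lemma to the claim that each mixed product $W_i^{(1)}W_j^{(2)}$ is purely of top grade $2(i+j)$ is exactly the skeleton of the paper's own proof, which expands $\expval{(B_1+B_2)^a}_{2a}$ binomially and then asserts that each term $\expval{B_1^j}_{2j}\expval{B_2^{a-j}}_{2(a-j)}$ ``is already an element of grade $2a$''. But the step you yourself flag as the main obstacle --- the vanishing of the lower-grade contractions such as $B_1\cdot W_j^{(2)}$ --- is a genuine gap that neither of your two sketches can close, because that vanishing does \emph{not} follow from $B_1\times B_2=0$ and $\expval{B_1B_2}_0=0$. Take $B_1=e_{12}-e_{34}$ and $B_2=e_{12}+e_{34}$ in $\mathbb{R}_{4,0}$: they commute and $\expval{B_1B_2}_0=-1+1=0$, yet $B_1\cdot W_2^{(2)}=B_1\cdot e_{1234}=e_{12}-e_{34}\neq 0$, and the mixed product $W_2^{(1)}W_1^{(2)}=-e_{1234}(e_{12}+e_{34})=e_{12}+e_{34}$ has grade $2$, not $6$. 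Indeed the conclusion itself fails for this pair: $\Lambda^{B_1}\Lambda^{B_2}=(1+e_{12})(1-e_{34})(1+e_{12})(1+e_{34})=(1+e_{12})^2(1-e_{34}^2)=4e_{12}$, whereas $\Lambda^{B_1+B_2}=\Lambda^{2e_{12}}=1+2e_{12}$. So no amount of secondary induction with derivation identities can rescue the step as stated.

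Your alternative route actually makes the error easiest to see. Writing $B_\ell=\sum_\alpha c^{(\ell)}_\alpha\pi_\alpha$ over a shared family of commuting simple planes, the obstructing cross-term on the plane $\pi_\alpha$ is the scalar $c^{(1)}_\alpha c^{(2)}_\alpha\,\pi_\alpha^2$ arising from $(1+c^{(1)}_\alpha\pi_\alpha)(1+c^{(2)}_\alpha\pi_\alpha)$, while the hypothesis $\expval{B_1B_2}_0=\sum_\alpha c^{(1)}_\alpha c^{(2)}_\alpha\,\pi_\alpha^2=0$ only kills the \emph{sum} of these terms, not each one --- in the counterexample above the two per-plane contributions are $-1$ and $+1$. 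The lemma needs the per-plane condition $c^{(1)}_\alpha c^{(2)}_\alpha=0$ for every $\alpha$, i.e.\ that $B_1$ and $B_2$ share no invariant plane; under that stronger hypothesis all mixed contractions $W_i^{(1)}\cdot W_j^{(2)}$ vanish identically and both your induction and the paper's binomial argument go through. That stronger situation is also the one in which the paper actually applies the lemma (e.g.\ $B_1=B_r$ and $B_2=\ta_\wedge(B_\mu)$, distinct blades of a single invariant decomposition). To be fair to you: the paper's one-line grade assertion elides exactly the same point, so you located the crux precisely --- but your closing claim that $\expval{B_1B_2}_0=0$ ``is exactly the condition that kills the obstructing cross-terms'' is false; it is strictly weaker, and with it alone the lemma as stated admits the counterexample above.
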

\begin{proof}
Because $B_1B_2 = B_1 \wedge B_2$, one may conclude that 
\begin{align*}
    W^{(12)}_a &= \frac{1}{a!}\expval{(B_1 + B_2)^a}_{2a} = \frac{1}{a!}\expval{\sum_{j = 0}^a {a \choose j}B_1^j B_2^{a - j}}_{2a}\\ 
    &= \sum_{j = 0}^a \frac{1}{j!(j-a)!}\expval{B_1^j}_{2j}  \expval{B_2^{a-j}}_{2(a-j)}\ ,
\end{align*}
where the wedge product has been replaced with the standard Clifford product. Indeed, since each of the products at the right-hand side of the sum is already an element of grade $2a$ (for all $j$), the outer product sign may be ignored.
\end{proof}
\begin{rem}
Note that the extra condition $B_1 B_2 = B_1 \wedge B_2$ ensures that the (commuting) bivectors $B_1$ and $B_2$ are linearly independent as elements of the GA. This extra condition is necessary though, which can easily be seen by considering the case $B_2 = -B_1$ in formula (\ref{comm_product}) above: the left-hand side equals 1 (by definition), but the right-hand side is equal to $|\Lambda^{B_1}|^2$ and this is not necessarily equal to 1. 
\end{rem}
\noindent
Using the relations obtained above, one can then derive formulas such as 
\begin{align}
    \co_\wedge(B_1 + B_2) &= \co_\wedge(B_1) \wedge \co_\wedge(B_2) +  \si_\wedge(B_1) \wedge \si_\wedge(B_2) \label{out_cos_sum}\\
    \si_\wedge(B_1 + B_2) &= \co_\wedge(B_1) \wedge \si_\wedge(B_2) +  \si_\wedge(B_1) \wedge \co_\wedge(B_2)\\
    \ta_\wedge(B_1 + B_2) &= \frac{\ta_\wedge(B_1) + \ta_\wedge(B_2)}{1 + \ta_\wedge(B_1)\wedge\ta_\wedge(B_2)}\label{out_tan_sum}
\end{align}
and the (many) likes thereof, provided the bivectors in the argument of these functions commute. If, moreover, also $\expval{B_1B_2}_0 = 0$, then wedge product may even be replaced by the Clifford product. This does suggest that being able to decompose a bivector into commuting simple bivectors makes outer trig identities easier (even more so if their product is a quadvector). But as we will soon observe, the opposite statement also holds: one can use the outer trig functions to say something about the (invariant) decomposition of a bivector.\\
\noindent
Let us consider a regular bivector $B$ with $W_k^2 \in \mathbb{R}_0$. Here we will again exclude pseudo-null bivectors (having $\mu = 0$ in their spectrum) at first, which safely allows us to divide by $\mu$, but we will include them at the end of our argument. We thus know that the spectrum of $B$ consists of pairs of non-trivial eigenvalues $\pm \mu_j$, with $1 \leq j \leq k$. For any such pair we can pick one of the (opposite) eigenvalues $\mu$, and use this to define an associated bivector 
\[ B_\mu \coloneqq \frac{1}{\mu}B \in \mathbb{R}_{p,q,r}^{(2)}\ . \]
Note that it does not really matter whether one uses $\mu$ or $-\mu$ (see later). The bivectors will obviously be different (with $B_{-\mu} = -B_\mu$), but the analysis below will still hold, regardless of the sign. However, we must keep in mind that in order to use this analysis to conclude something about the bivector $B$ we started from, we will need to take this scaling factor into account. The upshot is that in switching from $B$ to $B_\mu$, one obtains a bivector for which $|\Lambda^{B_\mu}|^2 = 0$, which means that $\co_\wedge^2(B_\mu) = \si_\wedge^2(B_\mu)$. All in all, this inspires us to have a closer look at bivectors $B$ for which the outer tangent satisifes $\ta_\wedge^2(B) = 1$, provided $\ta_\wedge(B)$ exists. 
\begin{lem}\label{pm1_spectrum}
For a bivector $B \in \mathbb{R}_{p,q,r}^{(2)}$, the following is true: 
\[ \mu = \pm 1 \in \textup{spec}(B) \Longleftrightarrow \co_\wedge^2(B) = \si_\wedge^2(B)\ . \]
If $\ta_\wedge(B)$ exists, this is equivalent with saying that $\ta_\wedge^2(B) = 1$. 
\end{lem}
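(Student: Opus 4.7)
The plan is to use the definitions of $\co_\wedge$ and $\si_\wedge$ to convert the condition $\co_\wedge^2(B) = \si_\wedge^2(B)$ into the vanishing of $|\Lambda^B|^2$, and then read off when $|\Lambda^B|^2=0$ from the characteristic polynomial $P_{2k}(\mu)$.

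First I would observe that $\Lambda^B$ and $\Lambda^{-B}$ commute (both are polynomials in $B$), so $\co_\wedge(B)$ and $\si_\wedge(B)$ commute as well. This lets me factor a difference of squares:
\begin{equation*}
\co_\wedge^2(B) - \si_\wedge^2(B) = \bigl(\co_\wedge(B)-\si_\wedge(B)\bigr)\bigl(\co_\wedge(B)+\si_\wedge(B)\bigr) = \Lambda^{-B}\Lambda^{B} = |\Lambda^B|^2.
\end{equation*}
By \cref{lem:mfs}, the right-hand side is a real scalar, so $\co_\wedge^2(B) = \si_\wedge^2(B) \iff |\Lambda^B|^2 = 0$.

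Next I would connect $|\Lambda^B|^2$ to the spectrum via the characteristic polynomial
\[ P_{2k}(\mu) = \mu^{2k}\Lambda^{-B/\mu}\Lambda^{B/\mu}. \]
Evaluating at $\mu = 1$ gives $P_{2k}(1) = \Lambda^{-B}\Lambda^B = |\Lambda^B|^2$. Hence $|\Lambda^B|^2 = 0$ iff $1$ is a root of $P_{2k}$, i.e.\ $1 \in \sigma(B)$. Since $P_{2k}$ is an even polynomial in $\mu$, this is automatic with $-1 \in \sigma(B)$, consistent with the fact (established just before the lemma) that eigenvalues come in opposite pairs $\pm\mu_j$. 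This completes the equivalence $\mu = \pm 1 \in \sigma(B) \iff \co_\wedge^2(B) = \si_\wedge^2(B)$.

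For the final assertion, assume $\co_\wedge(B)$ is invertible so that $\ta_\wedge(B) = \si_\wedge(B)\co_\wedge(B)^{-1}$ is well defined. Since $\co_\wedge(B)$ and $\si_\wedge(B)$ commute with each other (hence with $\co_\wedge(B)^{-1}$), squaring gives $\ta_\wedge^2(B) = \si_\wedge^2(B)\co_\wedge^{-2}(B)$, and multiplying by $\co_\wedge^2(B)$ shows that $\ta_\wedge^2(B)=1$ is equivalent to $\si_\wedge^2(B) = \co_\wedge^2(B)$. There is no real obstacle here; the only subtlety is checking that everything in the $(\co_\wedge \pm \si_\wedge)$-factorisation genuinely commutes, which is immediate from the polynomial-in-$B$ nature of these expressions.
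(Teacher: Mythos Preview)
Your proof is correct and follows essentially the same route as the paper: both reduce $\co_\wedge^2(B)=\si_\wedge^2(B)$ to the vanishing of $\Lambda^{-B}\Lambda^{B}=|\Lambda^B|^2$ and then identify this with $\pm 1\in\sigma(B)$. Your version makes the link to the spectrum slightly more explicit by evaluating $P_{2k}(1)$, whereas the paper simply expands $4\co_\wedge^2(B)$ and $4\si_\wedge^2(B)$ and appeals directly to the definition of the spectrum; the arguments are otherwise the same.
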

\begin{proof}
    Both implications follow from the fact that 
    \begin{align*}
       4\si_\wedge^2(B) &= (\Lambda^B)^2 + (\Lambda^{-B})^2 - 2\Lambda^{-B}\Lambda^B \\
       4\co_\wedge^2(B) &= (\Lambda^B)^2 + (\Lambda^{-B})^2 + 2\Lambda^{-B}\Lambda^B\ .
    \end{align*}
    If $\mu = \pm 1$ is an eigenvalue, then $\Lambda^{-B}\Lambda^B = 0$ which implies $\co_\wedge^2(B) = \si_\wedge^2(B)$. Vice versa, if this equality holds, one finds that $\Lambda^{-B}\Lambda^B = 0$ but this is equivalent with saying that $\mu = \pm 1$ is an eigenvalue for $B$. 
\end{proof}
\begin{rem}
    One might think that the values $\mu = \pm 1$ are crucial here, but the upshot is that upon division by $\mu \neq 0$ one can always force these eigenvalues into the spectrum of $B_\mu$. From now on, we will therefore use $B_\mu$ with $\mu \neq 0$ as a notation for general bivectors with eigenvalue $\mu = \pm 1$. This does mean that the intermediate bivector $B_\mu$ is {\em not} always a {\em real} bivector, but the final result will always be real (after multiplication with $\mu$). 
\end{rem}
\noindent
The following lemma shows that one can `perturb' a given bivector, using a {\em commuting} and {\em linearly independent} bivector, without any impact on the spectrum. Note that the condition on being linearly independent is natural here: otherwise one could just add the opposite bivector, which obviously has a massive impact on the spectrum (one then gets only zero eigenvalues as a result). 
\begin{lem}
If $B_1$ is a bivector such that $\co_\wedge^2(B_1) = \si_\wedge^2(B_1)$ and $B_2$ is another bivector such that $B_1B_2 = B_1\wedge B_2$, then also $\co_\wedge^2(B_1 + B_2) = \si_\wedge^2(B_1+B_2)$.
\end{lem}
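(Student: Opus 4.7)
The plan is to translate the hypothesis and the conclusion through \cref{pm1_spectrum}, then exploit \cref{QuadB_product} to factorise the outer exponentials attached to $B_1+B_2$, and finally use commutativity to isolate the vanishing factor supplied by the assumption on $B_1$.

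\textbf{Step 1: reformulation.} By \cref{pm1_spectrum}, the identity $\co_\wedge^2(B_1)=\si_\wedge^2(B_1)$ is equivalent to the scalar relation $\Lambda^{-B_1}\Lambda^{B_1}=0$, and what we need to prove is likewise equivalent to
\[ \Lambda^{-(B_1+B_2)}\,\Lambda^{B_1+B_2}=0. \]
So the whole proof boils down to establishing this single equation.

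\textbf{Step 2: factorisation.} The hypothesis $B_1B_2=B_1\w B_2$ forces the scalar and the bivector parts of the product $B_1B_2$ to vanish, so in particular $\expval{B_1B_2}_0=0$ and $B_1\times B_2=0$ (since $B_1\w B_2$ is symmetric in $B_1,B_2$). These are exactly the hypotheses of \cref{QuadB_product}, which gives $\Lambda^{B_1+B_2}=\Lambda^{B_1}\Lambda^{B_2}$. The same reasoning applied to $-B_1$ and $-B_2$ (which clearly satisfy the same conditions) yields $\Lambda^{-(B_1+B_2)}=\Lambda^{-B_1}\Lambda^{-B_2}$. Hence
\[ \Lambda^{-(B_1+B_2)}\,\Lambda^{B_1+B_2}=\Lambda^{-B_1}\Lambda^{-B_2}\Lambda^{B_1}\Lambda^{B_2}. \]

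\textbf{Step 3: commutation and conclusion.} Because $B_1B_2=B_2B_1$, any polynomial in $B_1$ commutes with any polynomial in $B_2$, and in particular the outer exponentials $\Lambda^{\pm B_1}$ commute with $\Lambda^{\pm B_2}$. This lets us move $\Lambda^{-B_2}$ past $\Lambda^{B_1}$:
\[ \Lambda^{-B_1}\Lambda^{-B_2}\Lambda^{B_1}\Lambda^{B_2}=\Lambda^{-B_1}\Lambda^{B_1}\Lambda^{-B_2}\Lambda^{B_2}=0, \]
where the final equality uses the hypothesis $\Lambda^{-B_1}\Lambda^{B_1}=0$. Applying \cref{pm1_spectrum} in the reverse direction delivers $\co_\wedge^2(B_1+B_2)=\si_\wedge^2(B_1+B_2)$.

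\textbf{Main obstacle.} There is no real obstacle once one sees the reformulation via \cref{pm1_spectrum}; the only point requiring a moment of care is verifying that the condition $B_1B_2=B_1\w B_2$ really supplies both hypotheses of \cref{QuadB_product} (commutativity and vanishing scalar part) and is preserved under the sign change $B_2\mapsto -B_2$, so that the factorisation applies symmetrically to $\Lambda^{B_1+B_2}$ and $\Lambda^{-(B_1+B_2)}$.
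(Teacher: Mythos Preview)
Your proof is correct and follows essentially the same approach as the paper: translate the hypothesis and conclusion via \cref{pm1_spectrum} into the vanishing of $\Lambda^{-(B_1+B_2)}\Lambda^{B_1+B_2}$, factorise both outer exponentials using \cref{QuadB_product}, and then commute the factors to expose the vanishing $\Lambda^{-B_1}\Lambda^{B_1}$. Your write-up is simply more explicit than the paper's in checking that $B_1B_2=B_1\wedge B_2$ entails both $B_1\times B_2=0$ and $\expval{B_1B_2}_0=0$, and that the factorisation also applies to $\Lambda^{-(B_1+B_2)}$.
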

\begin{proof}
This follows from lemmas \ref{QuadB_product} and \ref{pm1_spectrum}, observing that
\[ \Lambda^{B_1 + B_2}\Lambda^{-(B_1+B_2)} = \Lambda^{B_1}\Lambda^{-B_1}\Lambda^{B_2}\Lambda^{-B_2} = 0\ , \]
hereby using the fact that $\vert\Lambda^{B_1}\vert^2 = 0$. 
\end{proof}
\noindent
In what follows, we will express $\Lambda^B$ as a product of (commuting) factors $\Lambda^{b_j}$ with $b_j$ a simple bireflection. The reason for doing so is the following lemma, which shows how to use the outer exponential to say something about bivectors: 
\begin{lem}\label{out_to_bivector}
    Suppose that $B_1, B_2, B \in \mathbb{R}_{p,q,r}^{(2)}$, where the bivectors $B_1$ and $B_2$ commuting and linearly independent (so $B_1B_2 = B_1 \wedge B_2$). One can then conclude that $\Lambda^B = \Lambda^{B_1}\Lambda^{B_2}\ \Rightarrow\ B = B_1 + B_2$. 
\end{lem}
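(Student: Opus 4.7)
The plan is to reduce the hypothesis $\Lambda^B = \Lambda^{B_1}\Lambda^{B_2}$ to an equality of outer exponentials of two single bivectors, and then to recover the bivectors themselves by a grade-selection argument. The key preliminary observation is that the outer exponential $\Lambda^C = 1 + C + W_2(C) + \cdots + W_k(C)$ has its grade-2 component equal to the input bivector itself, i.e. $\expval{\Lambda^C}_2 = C$ for every bivector $C$, since the higher-order wedge powers $W_j(C) = C^{\wedge j}/j!$ live in grades $2j \geq 4$ and the constant term is in grade $0$.

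First I would invoke \cref{QuadB_product}: the hypothesis that $B_1$ and $B_2$ commute and are linearly independent is exactly the condition $B_1 B_2 = B_1 \wedge B_2$ assumed in that lemma, so its conclusion yields
\[
\Lambda^{B_1}\Lambda^{B_2} = \Lambda^{B_1 + B_2}\ .
\]
Combined with the standing hypothesis $\Lambda^B = \Lambda^{B_1}\Lambda^{B_2}$, this gives the single identity $\Lambda^B = \Lambda^{B_1 + B_2}$ between two outer exponentials of bivectors.

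Next I would apply the grade-2 selection operator $\expval{\cdot}_2$ to both sides. By the observation above, $\expval{\Lambda^B}_2 = B$ and $\expval{\Lambda^{B_1+B_2}}_2 = B_1 + B_2$, which immediately yields $B = B_1 + B_2$ as required.

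There is no real obstacle here, as the lemma is essentially an injectivity statement for $\Lambda$ on bivectors combined with the multiplicativity result of \cref{QuadB_product}; the only point worth verifying carefully is that the hypothesis ``$B_1 B_2 = B_1 \wedge B_2$'' really encodes both commutativity and linear independence, so that \cref{QuadB_product} applies directly without any extra assumption. Once that is in place, the proof is a one-line grade-2 projection.
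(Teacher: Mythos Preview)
Your proof is correct and follows essentially the same route as the paper: both arguments hinge on applying the grade-$2$ selection $\expval{\cdot}_2$ to the hypothesis. The only cosmetic difference is that you route the computation through \cref{QuadB_product} to obtain $\Lambda^{B_1}\Lambda^{B_2}=\Lambda^{B_1+B_2}$ first, whereas the paper computes $\expval{\Lambda^{B_1}\Lambda^{B_2}}_2 = B_1+B_2$ directly from the expanded product, using that $\expval{B_1B_2}_2=0$; these are the same calculation packaged two ways.
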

\begin{proof}
    First of all, it is clear that if two outer exponentials are equal, then their bivector exponents are also equal (it suffices to consider the projection on the subspace of $2$-graded elements). To see why the lemma is true, it then suffices to note that 
    \[ \expval{\left(1 + B_1 + \frac{1}{2}B_1 \wedge B_1 + \ldots\right)\left(1 + B_2 + \frac{1}{2}B_2 \wedge B_2 + \ldots\right)}_2 = B_1 + B_2\ . \]
    This step explicitly uses the fact that $\expval{B_1B_2}_2 = 0$. 
\end{proof}
\noindent
Let us again consider a bivector $B$ with $\mu \neq 0$ an eigenvalue. We will often make use of the outer tangent function below, which means that in what follows we tacitly assume that $\co_\wedge(B_\mu)$ is invertible (we will come back to this assumption later). Let us first of all note the following: 
\begin{align*}
    \ta_\wedge^2(B_\mu) = 1\ &\Rightarrow\ \big(\co_\wedge(B_\mu) + \si_\wedge(B_\mu)\big)\big(\co_\wedge(B_\mu) - \si_\wedge(B_\mu)\big) = 0\\
    &\Leftrightarrow\ \Lambda^{B_\mu}\co_\wedge(B_\mu) = \Lambda^{B_\mu}\si_\wedge(B_\mu)\ ,
\end{align*}
which gives rise to the relation $\Lambda^{B_\mu} = \Lambda^{B_\mu} \ta_\wedge(B_\mu)$ if the outer tangent exists. 
\begin{ex}\label{si=Bco}
    Note that $\co_\wedge(B_\mu)$ not being invertible can also occur when $B$ is regular. Consider for instance the bivector $B = e_{12} + e_{34} \in \mathbb{R}_{4,0,0}$. It is clear that $\sigma(B) = \{\pm i\}$, where each eigenvalue occurs with algebraic multiplicity two (which makes $B$ the generator of a so-called isoclinic rotation). Dividing $B$ by $\mu = i$, we thus have that $1 \in \sigma(-iB)$, but $\co_\wedge(-iB) = 1 - e_{1234}$ is not invertible. Indeed, since $e_{1234}^2 = W_2^2 = 1$, we have that $(1 - W_2)(1 + W_2) = 0$, so we are dealing with a zero-divisor. However, it is easily verified here that $\si_\wedge(e_{12}) = e_{12}\co_\wedge(e_{12})$, and similarly for $e_{34}$, which means that $\ta_\wedge(e_{12})$ and $\ta_\wedge(e_{34})$ are still defined (as the problematic factor can be canceled out). As a matter of fact, this is a key observation, deeply connected with the existence of eigenvectors.
\end{ex}
\noindent
As will become clear in the next section, when we include the eigenvectors for $B$ into the story, the anti-self-reversed element $\ta_\wedge(B_\mu)$ is always a $2$-blade (whenever it exists). The bivectors $B_\mu$ and $\ta_\wedge(B_\mu)$ obviously commute (for $\mu \neq 0$), but they even lead to linearly independent bivectors. For that purpose, let us write
\[ B_\mu = \ta_\wedge(B_\mu) + \big(B_\mu - \ta_\wedge(B_\mu)\big) \coloneqq \ta_\wedge(B_\mu) + B_r\ , \]
where the subscript `r' stands for `the rest' of the bivector $B_\mu$. We then indeed have that $\expval{B_r \ta_\wedge(B_\mu)}_0 = 0$, which means that $B_r\ta_\wedge(B_\mu) = B_r \wedge \ta_\wedge(B_\mu)$. This easily follows from the previous lemma, which says that
\[ \expval{B_r\ta_\wedge(B_\mu)}_0 = \expval{(B_\mu - \ta_\wedge(B_\mu))\ta_\wedge(B_\mu)}_0 = 0\ .\]
One can then also prove the following: 
\begin{thm}
If $B \in \mathbb{R}_{p,q,r}^{(2)}$ is a bivector for which $\mu \neq 0$ defines an eigenvalue, and for which $\ta_\wedge(B_\mu)$ exists, then 
\[ \Lambda^{B_\mu} = \Lambda^{B_\mu - \ta_\wedge(B_\mu)}\Lambda^{\ta_\wedge(B_\mu)} = \Lambda^{B_r}\Lambda^{\ta_\wedge(B_\mu)}\ . \]
\end{thm}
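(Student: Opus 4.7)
The plan is to read the theorem as a direct application of \cref{QuadB_product} to the decomposition $B_\mu = B_r + \ta_\wedge(B_\mu)$, where $B_r := B_\mu - \ta_\wedge(B_\mu)$. The previous lemma provides exactly the right tool, since it converts a bivector sum into a product of outer exponentials whenever the two summands are commuting bivectors whose geometric product has vanishing scalar part. So the proof reduces to checking these two hypotheses for the particular pair $(B_r, \ta_\wedge(B_\mu))$ and then reading off the conclusion.

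First I would verify that both pieces are genuine bivectors. Under the standing assumption that $\ta_\wedge(B_\mu)$ exists, the paper has announced (to be justified in the following section) that $\ta_\wedge(B_\mu)$ is a $2$-blade; granting this, $B_r$ is a difference of bivectors and hence a bivector. Next I would check commutativity. Since $\ta_\wedge(B_\mu) = \si_\wedge(B_\mu)\co_\wedge(B_\mu)^{-1}$ is built from $\Lambda^{\pm B_\mu}$, and the outer exponentials are polynomials in $B_\mu$ (in fact in the wedge powers $W_j$, which all commute with $B_\mu$), the element $\ta_\wedge(B_\mu)$ commutes with $B_\mu$; consequently it commutes with $B_r = B_\mu - \ta_\wedge(B_\mu)$ as well.

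The scalar-part condition $\expval{B_r \, \ta_\wedge(B_\mu)}_0 = 0$ is precisely the observation displayed in the paragraph preceding the theorem statement, namely
\[ \expval{B_r\, \ta_\wedge(B_\mu)}_0 = \expval{(B_\mu - \ta_\wedge(B_\mu))\,\ta_\wedge(B_\mu)}_0 = 0\ , \]
which in turn uses $\expval{B_\mu \ta_\wedge(B_\mu)}_0 = 0$ (a direct consequence of the fact that $B_\mu$ and $\ta_\wedge(B_\mu)$ are linearly independent commuting bivectors, as discussed above). At this stage both hypotheses of \cref{QuadB_product} are met.

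Applying that lemma to $B_1 = B_r$ and $B_2 = \ta_\wedge(B_\mu)$ yields
\[ \Lambda^{B_r + \ta_\wedge(B_\mu)} = \Lambda^{B_r}\,\Lambda^{\ta_\wedge(B_\mu)}\ , \]
and since $B_r + \ta_\wedge(B_\mu) = B_\mu$ by definition of $B_r$, this is exactly the claimed identity. The only step where care is needed, and thus the main potential obstacle, is the justification that $\ta_\wedge(B_\mu)$ is a bivector: if one does not yet want to invoke the (forthcoming) $2$-blade result, one must at least argue that the quotient $\si_\wedge(B_\mu)\co_\wedge(B_\mu)^{-1}$ has no higher-grade components. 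Everything else is a clean application of the already established outer-exponential calculus.
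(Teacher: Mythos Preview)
Your approach is exactly the paper's: verify that $B_r$ and $\ta_\wedge(B_\mu)$ are commuting bivectors with $\expval{B_r\,\ta_\wedge(B_\mu)}_0 = 0$, then invoke \cref{QuadB_product}. The paper's own proof is the one-line version of what you wrote.

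There is, however, a slip in your justification of the scalar-part condition. You claim that the displayed identity $\expval{B_r\,\ta_\wedge(B_\mu)}_0 = 0$ ``in turn uses $\expval{B_\mu\,\ta_\wedge(B_\mu)}_0 = 0$''. That is false: since $\ta_\wedge^2(B_\mu) = 1$ (from \cref{pm1_spectrum}), expanding gives
\[
\expval{B_r\,\ta_\wedge(B_\mu)}_0 = \expval{B_\mu\,\ta_\wedge(B_\mu)}_0 - \expval{\ta_\wedge^2(B_\mu)}_0 = \expval{B_\mu\,\ta_\wedge(B_\mu)}_0 - 1\ ,
\]
so what is actually needed is $\expval{B_\mu\,\ta_\wedge(B_\mu)}_0 = 1$, not $0$. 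Your parenthetical (``linearly independent commuting bivectors'') is therefore the wrong reason and in fact contradicts the required value. The paper does not spell out why $\expval{B_\mu\,\ta_\wedge(B_\mu)}_0 = 1$ either, but it is the correct intermediate claim; one way to see it is to grant the $2$-blade result $\ta_\wedge(B_\mu) = \beta_\mu$ from the next section and compute $\expval{B_\mu\beta_\mu}_0$ using $B_\mu = \beta_\mu + B_r$ with $B_r\beta_\mu = B_r\wedge\beta_\mu$. Apart from this localized glitch, your argument is sound and matches the paper.
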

\begin{proof}
    All the hard work was done in the lemmas above. It suffices to note that $B_r$ and $\ta_\wedge(B_\mu)$ commute, together with the fact that these bivectors are linearly independent. Lemma \ref{QuadB_product} then does the rest. 
\end{proof}
\noindent
We can now address a small issue raised earlier in the paper, concerning the fact that in defining $B_\mu$ (with $\mu \neq 0$) one has the choice of taking $+\mu$ or $-\mu$. Due to the previous theorem and lemma \ref{out_to_bivector}, the bivector $B_\mu$ (defined with a choice for $\mu$) which thus has eigenvalues $\pm 1$, satisfies
\[ \Lambda^{B_\mu} = \Lambda^{B_\mu - \ta_\wedge(B_\mu)}\Lambda^{\ta_\wedge(B_\mu)}\ \Rightarrow\ B_\mu = \mu\ta_\wedge(B_\mu) + (B_\mu - \mu\ta_\wedge(B_\mu))\ . \]
Because $\ta_\wedge(-B_\mu) = -\ta_\wedge(B_\mu)$, it is immediately clear that replacing $+\mu$ by $-\mu$ in this formula indeed has no impact. Now in order to {\em fully} decompose $B$ into simple bireflections, one may feel like one has to recursively apply the previous theorem, replacing the role of $B$ with $B - \mu\ta_\wedge(B_\mu)$. However, at least if all the eigenvalues are different, one immediately has the following: 
\begin{thm}[Invariant Decomposition]\label{theorem_decomp}
    Suppose $B$ is a regular bivector with $\sigma(B) = \{\pm\mu_1,\ldots,\pm\mu_k\}$ where all eigenvalues are different. One then has that 
    \[ B = \sum_{j = 1}^k b_j = \sum_{j = 1}^k \mu_j \ta_\wedge(B_{\mu_j})\ , \]
    provided all the outer tangents indeed exist.
\end{thm}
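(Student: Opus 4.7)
The plan combines two ingredients: an iterated application of the previous theorem to produce a decomposition of $B$ into pairwise commuting simple bivectors, followed by a direct algebraic identification of each simple summand with the claimed outer tangent.

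In the first phase, I would apply the previous theorem with $\mu = \mu_1$ to obtain
\[ B = b_1 + B^{(1)}, \qquad b_1 := \mu_1 \ta_\wedge(B_{\mu_1}), \qquad B^{(1)} := B - b_1, \]
where $b_1$ is a simple $2$-blade with $b_1^2 = \mu_1^2$, and $b_1 B^{(1)} = b_1 \wedge B^{(1)}$. To iterate, I need to show that $\sigma(B^{(1)}) = \{\pm\mu_2, \ldots, \pm\mu_k\}$. Since $b_1$ and $B^{(1)}$ commute and are linearly independent in the quad sense, Lemma~\ref{QuadB_product} gives
\[ \mu^{2k}\,\Lambda^{-B/\mu}\Lambda^{B/\mu} = \bigl(\mu^2 - b_1^2\bigr)\cdot \mu^{2(k-1)}\Lambda^{-B^{(1)}/\mu}\Lambda^{B^{(1)}/\mu}, \]
so the characteristic polynomial factorises as $P_{2k}(\mu) = (\mu^2 - \mu_1^2)\,Q(\mu)$, and reading off the roots yields the desired spectrum for $B^{(1)}$. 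Regularity of $B$ (Corollary~\ref{cor_prod_eigenvals}) ensures all $\mu_j$ are nonzero, hence $B^{(1)}$ is again regular. Iterating $k$ times produces pairwise commuting simple bivectors $b_1, \ldots, b_k$ with $b_i b_j = b_i \wedge b_j$ for $i \neq j$ and $b_j^2 = \mu_j^2$, satisfying $B = \sum_{j=1}^k b_j$. Pairwise commutativity is automatic because each new $b_{i+1}$ is a polynomial expression in $B^{(i)}$, which itself commutes with $b_1, \ldots, b_i$.

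In the second phase, with the decomposition in hand, I identify $b_j$ with $\mu_j \ta_\wedge(B_{\mu_j})$ via the product formula. Since each $b_i$ is simple, iterated use of Lemma~\ref{QuadB_product} gives $\Lambda^{\pm B/\mu_j} = \prod_{i=1}^k (1 \pm b_i/\mu_j)$. Isolating the $i = j$ factor and writing $G_\pm := \prod_{i \neq j}(1 \pm b_i/\mu_j)$, a short computation using $b_j^2 = \mu_j^2$ yields
\[ \si_\wedge(B_{\mu_j}) - (b_j/\mu_j)\,\co_\wedge(B_{\mu_j}) = \tfrac{1}{2}(G_+ - G_-)\bigl(1 - b_j^2/\mu_j^2\bigr) = 0. \]
Therefore, whenever $\ta_\wedge(B_{\mu_j})$ exists, it equals $b_j/\mu_j$; multiplying by $\mu_j$ and summing gives the claim.

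The main obstacle lies in ensuring that the hypotheses required by the previous theorem propagate through each iteration. Distinctness and non-vanishing of the reduced spectra are immediate from the polynomial factorisation above, and the fact that $b_j$ is simple with $b_j^2 = \mu_j^2$ is part of what the previous theorem delivers at each step. The delicate point is the existence of the \emph{intermediate} outer tangents $\ta_\wedge(B^{(i)}/\mu_{i+1})$, since the theorem hypothesises only the existence of $\ta_\wedge(B/\mu_j)$ for the original $B$. I would handle this by relating the two via the sum formula~\eqref{out_tan_sum}, which applies precisely because the $b_j$ pairwise commute and are quad-linearly independent. An alternative route, which bypasses this subtlety altogether, is to invoke the original invariant decomposition from~\cite{GSGPS} as a black box to guarantee the existence of a commuting simple decomposition, and then proceed directly to the identification step of the second phase.
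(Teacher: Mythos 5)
Your proposal is correct, but it takes a genuinely different route from the paper's proof. The paper never runs the full recursion: it short-circuits it by showing, via the sum formula \eqref{out_tan_sum} together with $\ta_\wedge^2(B_r/\nu) = 1$, that $\nu\,\ta_\wedge(B_r/\nu) = \nu\,\ta_\wedge(B/\nu)$ for any eigenvalue $\pm\nu$ of the reduced bivector $B_r = B - \mu_j\ta_\wedge(B_{\mu_j})$; the simple component attached to $\nu$ can therefore be computed from the original $B$ directly, and no reduced bivectors ever need to be considered. You instead carry out the iteration in full, with explicit spectrum bookkeeping via the factorisation $P_{2k}(\mu) = (\mu^2 - \mu_1^2)\,P^{(1)}_{2(k-1)}(\mu)$ (a correct and useful detail the paper leaves implicit), and then verify the identification post hoc from the product factorisation $\Lambda^{\pm B/\mu_j} = \prod_i (1 \pm b_i/\mu_j)$; your vanishing factor $1 - b_j^2/\mu_j^2$ is the same algebraic mechanism as the paper's use of $\ta_\wedge^2 = 1$, and your Phase-2 identity $\si_\wedge(B_{\mu_j}) = (b_j/\mu_j)\,\co_\wedge(B_{\mu_j})$ is precisely the relation the paper only establishes later (\cref{lem:cos_sin}), derived there through eigenvector blades, so obtaining it from the product expansion is arguably cleaner and makes the ``provided the tangents exist'' clause transparent. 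On your acknowledged delicate point, the existence of the intermediate tangents $\ta_\wedge(B^{(i)}/\mu_{i+1})$: note that the paper's proof silently assumes the same thing (it manipulates $\ta_\wedge(B_r/\nu)$ as if defined), and your proposed repair via \eqref{out_tan_sum} genuinely closes the gap, since all the $b_i$ and $\ta_\wedge(B/\nu)$ lie in the commutative algebra generated by the $W_j$, so the relevant denominator satisfies $\bigl(1 - \ta_\wedge(B/\nu)\,b_j/\nu\bigr)\bigl(1 + \ta_\wedge(B/\nu)\,b_j/\nu\bigr) = 1 - \mu_j^2/\nu^2$, a nonzero scalar exactly because the eigenvalues are assumed distinct. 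In short, both arguments rest on \cref{QuadB_product} plus the simplicity $b_j^2 = \mu_j^2$: the paper's version is shorter because it proves invariance of the direct formula under reduction, while yours is more self-contained on the spectral side and anticipates the eigenvector machinery of \cref{sec:eigenouter}; the one step you should still spell out is the pairwise condition $\expval{b_i b_l}_0 = 0$ for all $i \neq l$, which your iteration delivers only between $b_{i+1}$ and the remainder $B^{(i+1)}$ at each stage.
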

\noindent
This is intuitively clear from a symmetry argument, because it does not matter which eigenvalue $\mu_j$ one started the decomposition procedure with. However, a more formal way to see why this works is the following: 
\begin{proof}
    Suppose that $\pm \nu$ is an eigenvalue for $B_r = B - \mu_j \ta_\wedge(B_{\mu_j})$. We then wish to show that 
\[ \nu \ta_\wedge\left(\frac{B_r}{\nu}\right) = \nu \ta_\wedge\left(\frac{B}{\nu}\right)\ , \]
which indeed means that using the very same $B$ with a different eigenvalue $\nu$ is enough. Using the outer trigonometric identity \eqref{out_tan_sum}, and taking into account that for the simple bivector $b_j = \mu_j \ta_\wedge(B_{\mu_j})$ one has that $\si_\wedge(b_j) = b_j$ and $\co_\wedge(b_j) = 1$, we easily find that 
\begin{align*}
    \ta_\wedge\left(\frac{B_r + b_j}{\nu}\right) &= \frac{\ta_\wedge\left(\frac{B_r}{\nu}\right) + \ta_\wedge\left(\frac{b_j}{\nu}\right)}{1 + \ta_\wedge\left(\frac{B_r}{\nu}\right)\left(\frac{b_j}{\nu}\right)}\\
    &= \ta_\wedge\left(\frac{B_r}{\nu}\right)\frac{1 + \ta_\wedge\left(\frac{B_r}{\nu}\right)\left(\frac{b_j}{\nu}\right)}{1 + \ta_\wedge\left(\frac{B_r}{\nu}\right)\left(\frac{b_j}{\nu}\right)} = \ta_\wedge\left(\frac{B_r}{\nu}\right)\ ,
\end{align*}
where we again made critical use of the fact that $\ta_\wedge^2(B) = 1$ for any bivector with eigenvalues $\mu = \pm 1$ (provided $\ta_\wedge(B)$ exists). This shows that there is actually no need to consider the `reduced' bivectors, subtracting the previously obtained simple bireflections $\mu\ta_\wedge(B_\mu)$, it is enough to know the spectrum of $B$ to calculate the simple bireflections using the outer tangent function. 
\end{proof}
\begin{rem}
    Note that the bivectors at the right-hand side in theorem \ref{theorem_decomp} are precisely the simple bivectors $b_j$ appearing in the invariant decomposition from \cite{GSGPS}, i.e. $b_j = \mu_j\ta_\wedge(B_{\mu_j})$ for all $j$. 
\end{rem}
\begin{ex}
To illustrate how this works, consider the (rather trivial, but still insightful) bivector $B = e_{12} + 2e_{34} \in \mathbb{R}_{4,0}$. This bivector has a purely complex spectrum $\sigma(B) = \{\pm i, \pm 2i\}$, so the claim is that
\[ B = i\ta_\wedge\left(\frac{B}{i}\right) + 2i\ta_\wedge\left(\frac{B}{2i}\right)\ . \]
This is indeed true, since 
\begin{align*}
    i\ta_\wedge\left(\frac{B}{i}\right) &= i\frac{-i(e_{12} + 2e_{34})}{1 - 2e_{12}e_{34}} = e_{12}\frac{1 - 2e_{21}e_{34}}{1 - 2e_{12}e_{34}} = e_{12}\\
    2i\ta_\wedge\left(\frac{B}{2i}\right) &= 2i\frac{-\frac{i}{2}(e_{12} + 2e_{34})}{1 - \frac{1}{2}e_{12}e_{34}} = 2e_{34}\frac{1 - \frac{1}{2}e_{12}e_{43}}{1 - \frac{1}{2}e_{12}e_{34}} = 2e_{34}
\end{align*}
Note that the complex numbers $\mu \in \mathbb{C}$ are not instrumental to this story, because $P_{2k}(\mu) = M_\mu \widetilde{M}_\mu$ (see equation (\ref{master_polynomial}) for $k = 2$) is a polynomial in $\lambda = \mu^2$, and similarly $\mu \ta_\wedge(B_\mu)$ is an expression in $\lambda$. Moreover, for the physically relevant cases of rotations, translations and boosts, one will always have that $\lambda \in \mathbb{R}$. 
\end{ex}
\begin{rem}
    The attentive reader may have noticed that we somehow seem to favour the outer tangent function in our argument, despite the fact that the relation $\co_\wedge^2(B_\mu) = \si_\wedge^2(B_\mu)$ can also be read as $\cota_\wedge^2(B_\mu) = 1$. The upshot is that one can choose; both options work. This is encoded in the fact that $\beta_\mu \coloneqq \ta_\wedge(B_\mu)$ is a simple bivector squaring to $1$, so $\beta_\mu^{-1} = \cota_\wedge(B_\mu) = \beta_\mu$. There is however a situation in which one does not really have a choice, and this brings us back to the {\em pseudo-null} bivectors $B$, characterised by eigenvalues $\mu = 0 \in \sigma(B)$. Despite the fact that we cannot properly define $B_\mu$ as a quotient, we do have limit expressions coming to the rescue. Note that this depends on the parity of the parameter $k$ appearing in the effective dimension $2k$. If $k = 2\kappa+1$ is odd, we have that 
\[ \lim_{\mu \rightarrow 0}\mu\ta_\wedge\left(\frac{B}{\mu}\right) = \lim_{\mu \rightarrow 0}\mu\frac{\mu^{2\kappa+1}\si_\wedge(B_\mu)}{\mu^{2\kappa}\co_\wedge(B_\mu)} = \frac{W_k}{W_{k-1}} \]
is well-defined. If $k = 2\kappa$ is even, it suffices to work with the outer cotangent function instead: 
\[ \lim_{\mu \rightarrow 0}\mu\cota_\wedge\left(\frac{B}{\mu}\right) = \lim_{\mu \rightarrow 0}\mu\frac{\mu^{2\kappa}\co_\wedge(B_\mu)}{\mu^{2\kappa-1}\si_\wedge(B_\mu)} = \frac{W_k}{W_{k-1}}\ . \]
Note that this ensures that $W_k$ always appears in the numerator, like it should, since $W_k$ is uninvertable in the case of a pseudo-null bivector. All in all, regular and pseudo-null bivectors can thus be treated on the same footing after all. This observation was already made by the authors of \cite{GSGPS}, but here we managed to reinterpret these limit formulas as special cases in which the standard choice between $\ta_\wedge(B)$ and $\cota_\wedge(B)$ is somehow forced. 
\end{rem}
\noindent
In summary, in order to calculate the invariant decomposition of a bivector $B \in \mathbb{R}_{p,q,r}^{(2)}$ into commuting simple bivectors $b_j$, one first computes the spectrum of the bivector using \cref{master_polynomial}. Subsequently, the commuting simple bivectors are given by $\mu \ta_\wedge(B_\mu)$. This is identical to the closed form solution previously published in \cite{GSGPS}, but explained more succinctly though the lens of outer trigonometry.
As \cref{si=Bco} showed however, this method fails for non-unique eigenvalues, even though the example also indicates that the decomposition of a bivector into commuting simple bivectors still exists.
In the next section we will resolve this problem by alternatively defining $\ta_\wedge(B_\mu)$ not as $\si_\wedge(B_\mu) / \co_\wedge(B_\mu)$ but through the eigenvectors of $B$.

\section{Eigenvectors and outer tangents}\label{sec:eigenouter}
At the end of \cref{si=Bco}, in which we considered an isoclinic bivector $B$, we made an important observation: despite the fact that $\co_\wedge(B)$ was not invertible in the example, we could still define $\ta_\wedge(B)$ because $\si_\wedge(B) = \beta\co_\wedge(B)$ with $\beta$ a simple bivector. This allowed us to cancel out $\co_\wedge(B)$, and to define $\ta_\wedge(B) = \beta$. We will now show that this situation is actually quite general, hence turning the example into a general statement. For that purpose we will assume that a bivector $B$ with non-degenerate eigenvalue $\{+\mu,-\mu\} \subset \sigma(B)$ has two eigenvectors $v_+$ and $v_-$ (we will come back to this non-degeneracy at the end of our argument). Note that these vectors are necessarily null (i.e. they square to zero), as the following result explains: 
\begin{lem}
    If $v_\mu$ is an eigenvector for a bivector $B \in \mathbb{R}_{p,q,r}^{(2)}$, then either $\mu = 0$ or $v_\mu^2 = 0$. 
\end{lem}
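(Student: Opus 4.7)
The plan is to exploit an antisymmetry property of the dot-product coupling between a bivector and a vector. Given the eigenvalue equation $B\cdot v_\mu = \mu v_\mu$, I would contract both sides on the left with $v_\mu$ and compute the scalar $v_\mu\cdot(B\cdot v_\mu)$ in two ways. Substituting the eigenvalue relation on the right-hand side gives $v_\mu\cdot(\mu v_\mu)=\mu v_\mu^2$, so if I can show that the left-hand side vanishes identically for any bivector $B$ and any vector $v_\mu$, I am done: $\mu v_\mu^2=0$ forces $\mu=0$ or $v_\mu^2=0$.

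The key identity to establish is therefore: for any bivector $B$ and any vector $a$ (possibly with complex coefficients, since $\mu\in\mathbb{C}$ and $v_\mu$ lives in the complexification), one has $a\cdot(B\cdot a)=0$. I would prove this either by expanding $B$ as a sum of simple bivectors $b\wedge c$ and directly computing
\[
a\cdot\bigl((b\wedge c)\cdot a\bigr)=(a\cdot b)(c\cdot a)-(a\cdot c)(b\cdot a)=0,
\]
or, more cleanly, by using $B\cdot a=\tfrac12(Ba-aB)$ and taking the scalar part:
\[
\expval{a(B\cdot a)}_0=\tfrac12\expval{aBa-a^2B}_0=\tfrac12\bigl(\expval{Ba^2}_0-a^2\expval{B}_0\bigr)=0,
\]
where I invoke cyclicity of the scalar part together with the fact that a bivector has no grade-0 component.

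There is essentially no obstacle here: the only mild caveat is that $\mu$ may be complex and $v_\mu$ may have complex components, so one needs the bilinear (not sesquilinear) extension of the inner product to the complexification, under which both the identity $a\cdot(B\cdot a)=0$ and the manipulation $v_\mu\cdot(\mu v_\mu)=\mu v_\mu^2$ still hold verbatim. Once the two computations are compared, the dichotomy $\mu=0$ or $v_\mu^2=0$ drops out immediately, and this also clarifies why eigenvectors associated with nonzero eigenvalues are necessarily null — a fact that will be exploited in the next section to build simple $2$-blades $v_+\wedge v_-$ out of opposite-eigenvalue eigenvectors.
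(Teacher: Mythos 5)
Your proof is correct and is essentially the paper's argument in different clothing: the paper applies $B\times{}$ to the scalar $v_\mu^2$ and uses the derivation property to get $0 = B\times v_\mu^2 = 2\mu v_\mu^2$, which is exactly your identity paired with the eigenvalue relation, since $B\times v_\mu^2 = (B\cdot v_\mu)v_\mu + v_\mu(B\cdot v_\mu) = 2\,v_\mu\cdot(B\cdot v_\mu)$, and your skew-symmetry computation $\expval{v_\mu(B\cdot v_\mu)}_0 = 0$ is the same vanishing, established by cyclicity instead of by Leibniz. Your remark about the bilinear (not sesquilinear) extension to the complexification is a fine point the paper leaves implicit, and everything goes through verbatim as you say.
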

\begin{proof}
    Since $v_\mu^2 \in \mathbb{R}$ is scalar, we get that $0 = B \times v_\mu^2 = 2\mu v_\mu^2$, from which the result follows. 
\end{proof}
\noindent
Despite the simplicity of both the statement and its proof, it does have a deep geometrical implication: being null forces the eigenvector to reveal two {\em linearly independent} real vectors which together form a plane in which the action of $B$ happens. This is encoded in the following result: 
\begin{lem}\label{lem:pairing_eigenv}
    If $v_+$ is an eigenvector with eigenvalue $+\mu \in \mathbb{C}_0$ for a bivector $B$ with as many linearly independent eigenvectors as its effective dimension, then there always exists an eigenvector $v_-$ with eigenvalue $-\mu \in \mathbb{C}_0$ such that $v_+ \cdot v_- \neq 0$. 
\end{lem}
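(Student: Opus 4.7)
Plan. My approach is a proof by contradiction built on the skew-symmetry of the adjoint action of $B$ on vectors, followed by a translation of the resulting obstruction through the effective pseudoscalar $W_k$.

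First I would establish that for any two eigenvectors $v_\mu, v_\nu$ with respective eigenvalues $\mu, \nu \in \mathbb{C}$ one has $(\mu + \nu)(v_\mu \cdot v_\nu) = 0$. Since $B \times (-)$ is a graded derivation on the geometric algebra that annihilates scalars, and $v_\mu \cdot v_\nu$ is scalar,
\[
0 \;=\; B \times (v_\mu \cdot v_\nu) \;=\; (B \cdot v_\mu) \cdot v_\nu + v_\mu \cdot (B \cdot v_\nu) \;=\; (\mu + \nu)(v_\mu \cdot v_\nu).
\]
Thus $v_+$ is automatically orthogonal to every eigenvector of $B$ whose eigenvalue is not $-\mu$.

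Now suppose, for contradiction, that $v_+ \cdot v_- = 0$ for every eigenvector $v_-$ with eigenvalue $-\mu$. Combined with the previous observation, $v_+$ is orthogonal to the entire span of the eigenvectors of $B$. By hypothesis this span is the effective subspace $V$ of dimension $2k$ (the carrier of $W_k$), and $v_+$ itself lies in $V$: from $\mu \neq 0$ we get $v_+ = \mu^{-1}\, B \cdot v_+$, which lies in the image of the adjoint action and hence in $V$. So $v_+$ belongs to the radical of the bilinear form restricted to $V$. Using the pseudoscalar characterisation of this radical (a vector $u \in V$ lies in it iff $u \cdot W_k = 0$, since $W_k$ is the pseudoscalar of $V$) together with \cref{lem:W_recursive}, one obtains
\[
0 \;=\; W_k \cdot v_+ \;=\; W_{k-1} \wedge (B \cdot v_+) \;=\; \mu\, W_{k-1} \wedge v_+,
\]
and since $\mu \neq 0$ this forces $W_{k-1} \wedge v_+ = 0$.

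The hardest step is converting $v_+ \in \mathrm{rad}(V)$ into an outright contradiction, because in the pseudo-null case the restricted form is genuinely degenerate and some eigenvectors (those with eigenvalue $0$) do lie in the radical. The resolution I foresee is to show that in fact $\mathrm{rad}(V) \subseteq V_0$, so that $V_\mu \cap \mathrm{rad}(V) = \{0\}$ for $\mu \neq 0$. For this one passes to the quotient $\bar V = V/\mathrm{rad}(V)$, on which the induced bilinear form is non-degenerate and $B$ still acts as an antisymmetric operator; semisimplicity is preserved because the spanning hypothesis gives $V = \bigoplus_\nu V_\nu$ and $\mathrm{rad}(V)$ is $B$-invariant (if $w \in \mathrm{rad}(V)$ then $(B \cdot w) \cdot y = - w \cdot (B \cdot y) = 0$ for all $y \in V$). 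Standard structure theory for antisymmetric semisimple operators on non-degenerate spaces forces $\overline{V_\mu}$ and $\overline{V_{-\mu}}$ to be non-degenerately paired for each $\mu \neq 0$; combined with the symmetry $\dim V_\mu = \dim V_{-\mu}$ coming from the fact that the characteristic polynomial $P_{2k}(\mu)$ is even in $\mu$ (\ref{W_j^2_sympol}), a short dimension count pins the radical inside $V_0$, which completes the contradiction.
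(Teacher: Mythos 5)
Your first two steps are correct and essentially the paper's own opening moves: the identity $(\mu+\nu)(v_\mu\cdot v_\nu)=0$ is exactly the paper's observation that eigenvectors for non-opposite eigenvalues are orthogonal, and placing $v_+$ inside the effective subspace $V=\operatorname{span}(W_k)$ via \cref{lem:W_recursive} is also how the paper begins. The genuine gap sits in your final step. Your quotient-plus-dimension count cannot deliver $\operatorname{rad}(V)\subseteq V_0$: from $\dim V_\nu=\dim\overline{V_\nu}+\dim\big(\operatorname{rad}(V)\cap V_\nu\big)$, the nondegenerate pairing on $\overline{V}$ gives $\dim\overline{V_\mu}=\dim\overline{V_{-\mu}}$, and evenness of $P_{2k}$ gives $\dim V_\mu=\dim V_{-\mu}$, so all that follows is $\dim\big(\operatorname{rad}(V)\cap V_\mu\big)=\dim\big(\operatorname{rad}(V)\cap V_{-\mu}\big)$ --- perfectly consistent with both being nonzero. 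Worse, the abstract statement you invoke is simply false: take $V=\operatorname{span}(r_+,r_-,u_1,u_2)$ with the symmetric form supported on $u_1,u_2$ and $r_\pm$ in the radical, and let $f(r_\pm)=\pm\nu r_\pm$, $f(u_i)=0$. This $f$ is antisymmetric with respect to the form, diagonalisable, has spectrum symmetric under $\mu\mapsto-\mu$, and yet its radical consists entirely of eigenvectors with nonzero eigenvalue. So no structure theory of antisymmetric semisimple operators can pin the radical inside $V_0$; the missing ingredient is that $f$ is not an arbitrary antisymmetric map but contraction by a \emph{bivector of} $V$.

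That ingredient closes the argument in one line. Since $W_k=\tfrac{1}{k!}B^{\wedge k}\neq 0$ while $B^{\wedge(k+1)}=0$, a minimal decomposition $B=\sum_{i=1}^{k}a_i\wedge b_i$ has $W_k\propto a_1\wedge b_1\wedge\cdots\wedge a_k\wedge b_k$, so all $a_i,b_i$ lie in $V$. Hence for any $r\in\operatorname{rad}(V)$ one gets $(a_i\wedge b_i)\cdot r=a_i(b_i\cdot r)-b_i(a_i\cdot r)=0$, so $B\cdot r=0$ and $\operatorname{rad}(V)\subseteq\ker f=V_0$; your contradiction then terminates with $v_+\in\operatorname{rad}(V)\cap V_\mu=\{0\}$, and no quotient, semisimplicity transfer, or counting is needed (your intermediate relation $W_{k-1}\wedge v_+=0$ is correct but unused). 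For comparison: the paper quietly sidesteps the case that worries you, as its proof assumes $\sigma(B)\subset\mathbb{C}_0$, hence $W_k^2\neq 0$ by \cref{cor_prod_eigenvals}, which makes the restricted form nondegenerate, $\operatorname{rad}(V)=0$, and would end your argument already at step two; it then derives its contradiction differently, observing that if all $2k$ eigenvectors were mutually orthogonal their product would be proportional to $W_k$ yet null (eigenvectors for $\mu\neq 0$ being null and anticommuting), contradicting $W_k^2\neq 0$, followed by an induction that pairs off the eigenvectors. Your radical-based route, once repaired by the bivector-contraction fact, is actually slightly more robust than the paper's, since it does not presuppose $W_k^2\neq0$.
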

\begin{proof}
    First of all we note that any eigenvector $v_\mu$ associated to an eigenvalue $\mu \neq 0$ belongs to the image of $B$, with $v_\mu = \mu^{-1}(B\times v_\mu)$, and as such we get from lemma \ref{lem:W_recursive} that $W_k \wedge v_\mu = 0$. This thus means that $v_\mu$ is a (possibly complex) linear combination of vectors spanning the $2k$-dimensional space for which $W_k$ is the effective pseudoscalar (with $W_k^2 \neq 0$ as $\sigma(B) \subset \mathbb{C}_0$). Next, we note that $B\times (v_{\mu_1} \cdot v_{\mu_2}) = 0$ implies that eigenvectors $v_{\mu_1}$ and $v_{\mu_2}$ corresponding to different eigenvalues $\mu_1 \neq \mu_2$ anti-commute, provided $\mu_1 + \mu_2 \neq 0$. Indeed, only when $\mu_2 = -\mu_1$ the inner product $v_{\mu_1} \cdot v_{\mu_2}$ can be different from zero. This thus means that if {\em all} the eigenvectors were to anti-commute, we would have that 
    \[ v_{+\mu_1} \wedge v_{-\mu_1} \wedge \cdots v_{-\mu_k} \wedge v_{-\mu_k} \propto v_{+\mu_1}v_{-\mu_1}\cdots v_{-\mu_k} \propto W_k\ , \]
    where the constant of proportionality could be complex. This clearly leads to a contradiction, since $W_k^2 \neq 0$, whereas $(v_{+\mu_1}\cdots v_{-\mu_k})^2 = 0$ if all eigenvectors anti-commute (see the lemma above). Hence, there must definitely exist a pair of eigenvectors $v_{+}$ and $v_{-}$ corresponding to eigenvalues $\pm \mu$ such that $v_+ \cdot v_- \neq 0$. Once this is done, the rest follows by an inductive argument: we can now say that $W_k \propto W_{k-1} \wedge (v_+ \wedge v_-)$, where $W_{k-1}$ is proportional to the wedge product of the remaining $(2k - 2)$ eigenvectors. If we now pick an arbitrary (remaining) eigenvector and again assume that it anti-commutes with the other eigenvectors, we would find that $W_{k-1}^2 = 0$ which contradicts the fact that $W_k^2 \neq 0$. We can then repeat this until we have written $W_k$ as an outer product of bireflections $v_{\mu_j} \wedge v_{-\mu_j}$ of `paired eigenvectors' (up to a constant of proportionality). 
\end{proof}
\begin{rem}
    The requirement that $B$ has as many eigenvectors as its effective dimension $2k$ may seem odd, but is really needed here. Suppose for instance that we consider the bivector 
    \[ B = \sum_{a < b}e_{ab} = e_{12} + e_{13} + e_{14} + e_{23} + e_{24} + e_{34} \in \mathbb{R}_{2,2}^{(2)}\ . \]
    One can then easily show that $\sigma(B) = \{\pm 1\}$, where each eigenvalue appears with multiplicity 2, but there are only 2 eigenvectors: $v_{+1} = e_1 - e_4$ and $v_{-1} = e_2 + e_3$. Here we clearly see that $v_+ \cdot v_- = 0$, so a pairing as in the lemma above is not possible here. At the same time, we do have that $\sigma(B) \subset \mathbb{C}_0$ and $W_2^2 \neq 0$, which means that the other requirements of the lemma are still satisfied. The upshot here is that $B$ is special because it exhibits `Jordanesque behaviour', referring to the concept of a Jordan matrix from classical linear algebra. Such bivectors will not be treated in this paper and will be the subject of future research. 
\end{rem}
\begin{rem}
    In some situations, the pairing between $v_{+\mu}$ and $v_{-\mu}$ is easy to describe. For instance, if $\mu \in i\mathbb{R}_0$ is a (non-trivial) purely imaginary eigenvalue, we clearly have that $B\cdot v_{+\mu} = \mu v_{+\mu}$ implies that $B\cdot v_{+\mu}^* = -\mu v_{+\mu}^*$ so that it suffices to put $v_{-\mu} \coloneqq v_{+\mu}^*$ ($*$ hereby stands for the complex conjugation). This happens for instance in $\mathbb{R}_{m,0}$, where bivectors are associated to `classical rotations'. Boosts also give an interesting example: if $\mu \in \mathbb{R}_0$ is real, the eigenvector $v_{+\mu}$ will always be of the form $v_t + v_s$ with $v_t$ a `temporal' vector (i.e. with $v_t^2 > 0$) and $v_s$ a `spatial' vector (i.e. with $v_s^2 < 0$). In that case, it suffices to define $v_{-\mu} = v_t - v_s$, which can for instance be accomplished by conjugating $v_{+\mu}$ with the temporal pseudoscalar $e_P$. 
\end{rem}
\noindent
We will now build further upon the conclusion of lemma \ref{lem:pairing_eigenv} and investigate the bireflection defined by the outer product of an eigenvector $v_{+\mu}$ and its partner $v_{-\mu}$. For that purpose, we introduce the (possibly complex) bivector
\[ \beta_\mu \coloneqq \frac{v_{+\mu} \wedge v_{-\mu}}{v_{+\mu} \cdot v_{-\mu}}\ . \]
The subscript $\mu$ hereby refers to the pair of eigenvalues $\pm \mu \in \mathbb{C}_0$. 
\begin{lem}
    The simple bireflection $\beta_\mu$, with $\beta_\mu^2 = 1$, satisfies the following properties: 
    \begin{itemize}
        \item[(i)] The vectors $v_{\pm \mu}$ are eigenvectors with eigenvalue $\pm 1$. 
        \item[(ii)] One has that $W_k\beta_\mu = \beta_\mu W_k \in \mathbb{R}_{p,q,r}^{(2k-2)}$ (i.e. has grade $2k-2$). 
    \end{itemize}
\end{lem}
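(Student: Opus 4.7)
The plan is to prove the three assertions ($\beta_\mu^2 = 1$, property (i), property (ii)) as a single cascade, each relying on the null character of $v_{\pm\mu}$ established in the preceding lemma, together with the decomposition of $W_k$ provided by \cref{lem:pairing_eigenv}.

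First, to see $\beta_\mu^2 = 1$, write $a = v_{+\mu}$, $b = v_{-\mu}$, $c = v_{+\mu}\cdot v_{-\mu}$. Then $ab = c + a\wedge b$ and $ba = c - a\wedge b$, so $(c + a\wedge b)(c - a\wedge b) = abba = a b^2 a = 0$ because $b^2 = 0$ is a scalar and $a^2 = 0$. Hence $(a\wedge b)^2 = c^2$ and so $\beta_\mu^2 = 1$.

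For (i), I would just compute $\beta_\mu \cdot a = \tfrac{1}{2c}\bigl((a\wedge b)a - a(a\wedge b)\bigr)$ using $a\wedge b = ab - c$ and $a^2 = 0$: the calculation collapses, via the identity $ba = 2c - ab$, to $\beta_\mu\cdot a = (2ca)/(2c) = a$. The symmetric computation gives $\beta_\mu\cdot b = -b$, yielding eigenvalues $+1$ and $-1$.

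For (ii), which is the main obstacle, I would invoke \cref{lem:pairing_eigenv} to pair up the remaining $k-1$ eigenvalue pairs into null vectors $v_{\pm\mu_i}$ ($i\neq j$) so that $W_k$ is proportional to the $2k$-blade $w = v_{+1}\wedge v_{-1}\wedge\cdots \wedge v_{+k}\wedge v_{-k}$; since wedge products of bivectors commute, we may reorder as $w = w'\wedge (v_{+\mu}\wedge v_{-\mu})$, where $w'$ is the wedge of the remaining $k-1$ simple bireflections. The key observation is that each vector appearing in $w'$ anti-commutes with both $v_{+\mu}$ and $v_{-\mu}$, since they belong to eigenvalue pairs whose sum with $\pm\mu$ is nonzero (so their inner products vanish by the argument recalled in the proof of \cref{lem:pairing_eigenv}). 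Because $w'$ has even grade $2k-2$, this means $w'$ Clifford-commutes with $v_{+\mu}\wedge v_{-\mu}$, and moreover $w' (v_{+\mu}\wedge v_{-\mu}) = w' \wedge (v_{+\mu}\wedge v_{-\mu})$ since the two blades share no vectors.

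With that in hand, $W_k \beta_\mu \propto w' (v_{+\mu}\wedge v_{-\mu})(v_{+\mu}\wedge v_{-\mu})/c = c\, w'$, using $\beta_\mu^2 = 1$; the same computation applied from the other side gives $\beta_\mu W_k \propto c\, w'$. This proves simultaneously that $W_k$ and $\beta_\mu$ commute, and that the product is a nonzero multiple of $w'$, which is a blade of grade exactly $2k-2$.
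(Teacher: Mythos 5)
Your proof is correct under the section's standing assumptions, and for $\beta_\mu^2 = 1$ and part (i) it coincides with the paper's own direct computations (the paper dismisses (i) as ``a simple direct calculation'', which you actually carry out). For part (ii), however, you take a genuinely different route. The paper argues intrinsically, with only the single pair $v_{\pm\mu}$ in hand: from $W_{k+1} = 0$ and \cref{lem:W_recursive} it deduces $W_k \wedge v_{+\mu} = 0$, hence $W_k v_{+\mu} = W_k \cdot v_{+\mu}$, and then kills the grade-$(2k+1)$ part by expanding $B^k \cdot v_{+\mu} = \sum_j B^{j-1} v_{+\mu} B^{k-j}$, whose maximal grade is $2k-1$ precisely because $v_{+\mu}$ is an eigenvector. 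You instead import the full factorisation $W_k \propto w' \wedge (v_{+\mu} \wedge v_{-\mu})$ from \cref{lem:pairing_eigenv} and exploit anticommutation of the remaining eigenvectors with $v_{\pm\mu}$. What your route buys: a stronger conclusion, since you identify $W_k\beta_\mu$ as a nonzero multiple of the complementary $(2k-2)$-blade $w'$ --- exactly the ``effective pseudoscalar of the reduced space'' reading that the paper only states informally after the lemma --- and an explicit proof of the commutation $W_k\beta_\mu = \beta_\mu W_k$, which the paper invokes without justification. What it costs: your anticommutation step rests on the eigenvalue-sum argument ($\nu + \mu \neq 0$ and $\nu - \mu \neq 0$ for every eigenvalue $\nu$ occurring in $w'$), which holds only because $\pm\mu$ is assumed non-degenerate; if another pair shared the eigenvalues $\pm\mu$ (the isoclinic situation this section is ultimately aiming at), that argument gives nothing, and you would have to re-choose dual bases inside the eigenspaces to restore $u \cdot v_{\pm\mu} = 0$ --- whereas the paper's grade-count proof applies verbatim in the degenerate case. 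Since the lemma sits under the section's explicit assumption that $\{+\mu, -\mu\}$ is non-degenerate, your proof is valid as written. One small imprecision: ``the two blades share no vectors'' is not by itself enough to conclude $w'(v_{+\mu}\wedge v_{-\mu}) = w' \wedge (v_{+\mu}\wedge v_{-\mu})$; what is needed (and what you in fact have, from the vanishing inner products) is that every vector of $w'$ is orthogonal to both $v_{+\mu}$ and $v_{-\mu}$, so that all contractions between the two blades vanish.
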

\begin{proof}
    First of all, we note that 
    \begin{align*}
        \beta_\mu^2 &= -\frac{1}{(v_{+\mu} \cdot v_{-\mu})^2}(v_{+\mu} \wedge v_{-\mu})(v_{-\mu} \wedge v_{+\mu})\\
        &= \frac{1}{(v_{+\mu} \cdot v_{-\mu})^2}\left(v_{+\mu}v_{-\mu} - v_{+\mu} \cdot v_{-\mu}\right)\left(v_{+\mu} \cdot v_{-\mu} - v_{-\mu}v_{+\mu}\right) = 1\ ,
    \end{align*}
    hereby using the fact that the eigenvectors are null. Next, a simple direct calculation shows that $\beta_\mu \cdot v_{\pm\mu} = \pm v_{\pm\mu}$, hereby using the null property $v_{+\mu}^2 = 0$.
    To prove the second statement of the lemma, we first note that 
    \[ W_{k+1} = 0 \Rightarrow W_{k+1}\cdot v_{+\mu} = W_k \wedge v_{+\mu} = 0 \Rightarrow W_kv_{+\mu} = W_k\cdot v_{+\mu}\ .\]
    This means that $\expval{W_k\beta_\mu}_{2k+2} = \expval{W_k v_{+\mu}v_{-\mu}}_{2k+2} = \expval{(W_k\cdot v_{+\mu})v_{-\mu}}_{2k+2}$. If we can now show that $\expval{W_k \cdot v_{+\mu}}_{2k+1} = 0$, then this indeed proves the statement (the fact that $W_k$ and $\beta_\mu$ commute tells us that the $2k$-graded part $\expval{W_k\beta_\mu}_{2k}$ is trivial, so then only the $(2k-2)$-graded part survives). To see this, we note that 
    \begin{align*}
        B^k \cdot v_{+\mu} = \sum_{j = 1}^k B^{j-1}(B\cdot v_{+\mu})B^{k - 1} = \sum_{j = 1}^k B^{j-1}v_{+\mu}B^{k - j}
    \end{align*}
    has a maximal grade $\ell = 2k-1$ (note how crucial it is here that $v_{+\mu}$ defines an eigenvector). 
\end{proof}
\noindent
The second statement in the lemma above essentially says that $W_k \beta_\mu$ is an effective pseudoscalar (up to scale) in a space of dimension $(2k-2)$, a fact which was used (in a slightly different from) in the proof of lemma \ref{lem:pairing_eigenv}. This suggests introducing the complimentary bivector $B_r \coloneqq B - \mu\beta_\mu$, where the supscript `r' again stands for `the rest' (after subtraction). Our aim is then to show that $\si_\wedge(B_\mu) = \beta_\mu\co_\wedge(B_\mu)$, which implies that $\ta_\wedge(B_\mu) = \beta_\mu$ (provided again the outer cosine is invertible). Writing this out in components, this amounts to proving that 
\[ W_1^\mu + W_3^\mu + W_5^\mu + \ldots = \beta_\mu(1 + W_2^\mu + W_4^\mu + \ldots)\ . \]
Comparing the grades at both sides, this boils down to the following technical result (in which we will suppress the upper index $\mu$, to avoid exceedingly overloaded notations): 
\begin{lem}\label{lem:cos_sin}
    For all (relevant) indices $j$, one has that 
    \[ W_{2j+1} = \expval{\beta_\mu(W_{2j} + W_{2j+2})}_{4j+2} = \beta_\mu \wedge W_{2j} + \beta_\mu \cdot W_{2j+2}\ . \]
\end{lem}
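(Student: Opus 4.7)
The plan is to establish the stronger identity $\si_\wedge(B_\mu) = \beta_\mu\,\co_\wedge(B_\mu)$ as a single equation in the Clifford algebra, and then read off the lemma by projecting onto the grade-$(4j+2)$ subspace. The route goes through the decomposition $B_\mu = \beta_\mu + B_r$ with $B_r \coloneqq B_\mu - \beta_\mu$, to which \cref{QuadB_product} will apply.

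First I would verify the two hypotheses of \cref{QuadB_product}. Commutation: since $B_\mu \times (\cdot)$ is a derivation and $B_\mu \times v_{\pm\mu} = \pm v_{\pm\mu}$, one gets $B_\mu \times (v_{+\mu} v_{-\mu}) = v_{+\mu} v_{-\mu} - v_{+\mu} v_{-\mu} = 0$; since $\beta_\mu$ differs from $v_{+\mu} v_{-\mu}/(v_{+\mu}\cdot v_{-\mu})$ only by a scalar, this gives $B_r \times \beta_\mu = 0$. Scalar part: expanding $B_\mu v_{+\mu} = v_{+\mu} + B_\mu\wedge v_{+\mu}$ and noting that the trivector correction contributes nothing to the scalar part after multiplication by $v_{-\mu}$, one finds $\langle B_\mu v_{+\mu} v_{-\mu}\rangle_0 = v_{+\mu}\cdot v_{-\mu}$, so $\langle B_\mu \beta_\mu\rangle_0 = 1$, and therefore $\langle B_r \beta_\mu\rangle_0 = 1 - \beta_\mu^2 = 0$.

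With both hypotheses satisfied, \cref{QuadB_product} yields $\Lambda^{B_\mu} = \Lambda^{B_r}(1 + \beta_\mu)$. Splitting into even and odd outer-exponential parts produces $\co_\wedge(B_\mu) = \co_\wedge(B_r) + \beta_\mu\,\si_\wedge(B_r)$ and $\si_\wedge(B_\mu) = \si_\wedge(B_r) + \beta_\mu\,\co_\wedge(B_r)$, and left-multiplying the first by $\beta_\mu$ together with $\beta_\mu^2 = 1$ collapses everything to $\beta_\mu\,\co_\wedge(B_\mu) = \si_\wedge(B_\mu)$. Extracting the grade-$(4j+2)$ component is then pure bookkeeping: the left side contributes $W_{2j+1}^\mu$; on the right, $\beta_\mu$ commutes with every $W_a^\mu$ (a polynomial in $B_\mu$), so the middle piece of $\beta_\mu W_a^\mu$ vanishes, and only $\beta_\mu \wedge W_{2j}^\mu$ (from $a = 2j$) and $\beta_\mu \cdot W_{2j+2}^\mu$ (from $a = 2j+2$) sit in grade $4j+2$.

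The main obstacle is the scalar-part identity $\langle B_\mu \beta_\mu\rangle_0 = 1$, because the entire factorisation of $\Lambda^{B_\mu}$ rests on it (commutation being essentially automatic once one exploits the derivation property of $B_\mu\times$); after this verification, every subsequent step is forced.
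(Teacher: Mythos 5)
Your proposal is correct, but it takes a genuinely different route from the paper. The paper proves \cref{lem:cos_sin} by brute force at the level of the individual $W_a$: it expands $W_{2j}$ and $W_{2j+2}$ via Newton's binomial formula for $(\beta_\mu + B_r)^a$, uses $\beta_\mu^2 = 1$ to argue that higher powers of $\beta_\mu$ cannot reach the top grade, and then reassembles $\tfrac{1}{(2j+1)!}\expval{(B_r+\beta_\mu)^{2j+1}}_{4j+2} = W_{2j+1}$ under the grade selection. You instead prove the global factorisation $\Lambda^{B_\mu} = \Lambda^{B_r}(1+\beta_\mu)$ by verifying the hypotheses of \cref{QuadB_product} --- your computation $\expval{B_\mu \beta_\mu}_0 = 1$, hence $\expval{B_r\beta_\mu}_0 = 0$, is a clean step that appears nowhere in the paper --- and then obtain the stronger identity $\si_\wedge(B_\mu) = \beta_\mu\co_\wedge(B_\mu)$ in one stroke (applying the factorisation to both $\pm B_\mu$), with the lemma falling out as the grade-$(4j+2)$ projection. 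What your route buys is structure: the paper only assembles $\si_\wedge(B_\mu) = \beta_\mu\co_\wedge(B_\mu)$ \emph{after} the lemma, whereas you derive it first and make the lemma pure bookkeeping; what the paper's route buys is independence from \cref{QuadB_product}, whose stated hypotheses (commutation plus $\expval{B_1B_2}_0 = 0$) are actually weaker than what its proof needs --- e.g.\ $B_1 = e_{12}+e_{34}$, $B_2 = e_{12}-e_{34}$ in $\mathbb{R}_{4,0}$ satisfy both, yet $\Lambda^{B_1}\Lambda^{B_2} = 4e_{12} \neq 1 + 2e_{12} = \Lambda^{B_1+B_2}$, because cross terms like $\expval{B_1^2}_4\expval{B_2^2}_4$ are not of pure grade. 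Your application is nonetheless safe: since $\beta_\mu$ is simple and invertible ($\beta_\mu^2 = 1$, so its plane contains no null directions), commutation with $\beta_\mu$ forces any bivector into $\mathrm{span}(\beta_\mu) \oplus \Lambda^2(\beta_\mu^\perp)$, and your condition $\expval{B_r\beta_\mu}_0 = 0$ then kills the $\beta_\mu$-component, so $B_r$ has support orthogonal to the $\beta_\mu$-plane --- exactly the regime in which \cref{QuadB_product} is sound. It is worth noting that the paper's own proof implicitly leans on the same orthogonality (its claim that ``only the second term will contribute'' requires $\beta_\mu \cdot \expval{B_r^{2j+2}}_{4j+4} = 0$, which fails for a generic commuting $B_r$), so your argument, far from hiding a gap, actually isolates and verifies the hypothesis that the paper's computation uses silently.
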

\begin{proof}
    Newton's binomial formula tells us that 
    \begin{align*}
        W_{2j} &= \tfrac{1}{(2j)!}\expval{(\beta_\mu + B_r)^{2j}}_{4j} = \tfrac{1}{(2j)!}\expval{B_r^{2j} + 2j\beta_\mu B_r^{2j-1} + \order{\beta_\mu^2}}_{4j}\ ,
    \end{align*}
    whereby $\order{\beta_\mu^2}$ stands for higher order terms in $\beta_\mu$. The crucial thing to note here is that because $\beta_\mu^2 = 1$, these terms can never contribute to the $4j$-graded part. As a matter of fact, here we have that 
    \[ \expval{\beta_\mu W_{2j}}_{4j+2} = \tfrac{1}{(2j)!}\expval{\beta_\mu (B_r^{2j} + 2j\beta_\mu B_r^{2j-1})}_{4j+2} = \tfrac{1}{(2j)!}\expval{\beta_\mu B_r^{2j}}_{4j+2}\ , \]
    so we did not even use the second term in the expansion. However, for $W_{2j+2}$ it is precisely the second term we will need, since 
    \begin{align*}
        W_{2j+2} &= \tfrac{1}{(2j+2)!}\expval{B_r^{2j+2} + (2j+2)\beta_\mu B_r^{2j+1} + \order{\beta_\mu^2}}_{4j+4}\ .
    \end{align*}
    Multiplying with $\beta_\mu$, we indeed find that only the second term will contribute: 
    \[ \expval{\beta_\mu W_{2j+2}}_{4j+2} = \tfrac{1}{(2j+2)!}\expval{(2j+2)\beta^2_\mu B_r^{2j+1})}_{4j+2} = \tfrac{1}{(2j+1)!}\expval{B_r^{2j+1}}_{4j+2}\ . \]
    If we now add these contributions together, we get
    \begin{align*}
        \expval{\beta_\mu (W_{2j} + W_{2j+2})}_{4j+2} &= \tfrac{1}{(2j+1)!}\expval{B_r^{2j+1} + (2j+1)\beta_\mu B_r^{2j}}_{4j+2}\ ,
    \end{align*}
    and because of the grade selection operator $\expval{\cdots}_{4j+2}$ appearing here, we can add the required lower grade terms to once again apply Newton's binomial formula (in the opposite direction), leading to 
    \[ \expval{\beta_\mu (W_{2j} + W_{2j+2})}_{4j+2} = \tfrac{1}{(2j+1)!}\expval{(B_r + \beta_\mu)^{2j+1}}_{4j+2} = W_{2j+1}\ . \]
    This proves the lemma. 
\end{proof}
\noindent
We can now indeed conclude that $\si_\wedge(B_\mu) = \beta_\mu \co_\wedge(B_\mu)$, and this then implies that $\ta_\wedge(B_\mu) = \beta_\mu$. It is important to point out here that $\si_\wedge(B_\mu) = \beta_\mu \co_\wedge(B_\mu)$ allows us to define $\ta_\wedge(B_\mu)$ {\em without} having to think about the invertibility of the outer sine and/or cosine functions. As a matter of fact, in the case of an isoclinic bivector (see for instance the remark below) one typically has that $\co_\wedge(B_\mu)$ will not be invertible. The upshot is of course that this has no impact on the existence of $\beta_\mu = \ta_\wedge(B_\mu)$. Describing invertible self-reversed elements (such as the outer cosine of a bivector) in full generality is a related, but much harder question which will not be tackled in the present paper. 
\begin{rem}
    Let us then come back to the requirement we stated earlier this section: $\mu = \pm 1$ had to be an eigenvalue with algebraic multiplicity equal to one. Again turning our attention to the bivector $B = e_{12} + e_{34} \in \mathbb{R}_{4,0,0}$, it is clear that one must be careful when there are different eigenvectors for a shared eigenvalue. One could take $v_+ = e_1 + ie_2$ and $v_- = e_3 - ie_4$ here, but this will not work. Not only is $\beta_\mu$ not even real here, but one also observes that the normalisation factor $v_+ \cdot v_- = 0$. But this seemingly annoying fact is actually a blessing in disguise, because the `correct' bireflection to work with is the one obtained by pairing the eigenvectors $v_+$ with eigenvectors $v_-$ for which their scalar part $v_+ \cdot v_-$ is non-trivial. 
\end{rem}
\noindent
We then have the following updated version of theorem \ref{theorem_decomp}: 
\begin{thm}
    Suppose $B$ is a regular bivector which has as many eigenvectors as its effective pseudodimension. One then has that 
    \[ B = \sum_{j = 1}^k \mu_j\ta_\wedge(B_{\mu_j}) = \sum_{j = 1}^k \mu_j \frac{v_{+\mu_j} \wedge v_{-\mu_j}}{v_{+\mu_j} \cdot v_{-\mu_j}}\ , \]
    where in case of repeated eigenvalues one must ensure that $v_{+\mu}$ is paired up with a partner eigenvector $v_{-\mu_j}$ such that the denominator in the second summation above is different from zero. 
\end{thm}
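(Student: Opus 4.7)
The plan is to verify the identity $B = \sum_{j=1}^k \mu_j \beta_{\mu_j}$, where $\beta_{\mu_j} \coloneqq (v_{+\mu_j} \wedge v_{-\mu_j})/(v_{+\mu_j}\cdot v_{-\mu_j})$, by comparing the adjoint actions of both sides on a full set of eigenvectors. The second equality in the theorem statement then follows directly from \cref{lem:cos_sin}, which already established $\ta_\wedge(B_{\mu_j}) = \beta_{\mu_j}$, so the real content is the first equality.

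First I would invoke \cref{lem:pairing_eigenv}, extended to the repeated-eigenvalue case, to select for each pair $\pm\mu_j$ eigenvectors $v_{\pm\mu_j}$ with $v_{+\mu_j}\cdot v_{-\mu_j}\neq 0$ and, crucially, $v_{+\mu_i}\cdot v_{-\mu_j} = 0$ whenever $i\neq j$. The case $\mu_i \neq \pm\mu_j$ is automatic, since eigenvectors whose eigenvalues have non-zero sum must have vanishing inner product (by applying $B\times$ to $v_{\mu_1}\cdot v_{\mu_2}$ and using that scalars are central). The only extra work is within each repeated eigenspace: here one uses that the Clifford bilinear form pairs $E_{+\mu}$ and $E_{-\mu}$ non-degenerately (since $W_k^2\neq 0$, by the same argument as in \cref{lem:pairing_eigenv} a non-trivial vector in $E_{+\mu}$ orthogonal to all of $E_{-\mu}$ would force $W_k^2=0$), so one can simply pick dual bases on the two sides.

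Next I would compute the adjoint action of the candidate bivector $\widetilde{B} \coloneqq \sum_i \mu_i \beta_{\mu_i}$ on each eigenvector. Using the standard identity $(a\wedge b)\cdot c = (b\cdot c)\, a - (a\cdot c)\, b$, one has
\[
\beta_{\mu_i}\cdot v_{\pm\mu_j} = \frac{(v_{-\mu_i}\cdot v_{\pm\mu_j})\, v_{+\mu_i} - (v_{+\mu_i}\cdot v_{\pm\mu_j})\, v_{-\mu_i}}{v_{+\mu_i}\cdot v_{-\mu_i}}\ ,
\]
which vanishes for $i\neq j$ by the pairing choice, and equals $\pm v_{\pm\mu_j}$ for $i=j$ (this is the content of the short unnamed lemma preceding this final theorem). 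Consequently $\widetilde{B}\cdot v_{\pm\mu_j} = \pm \mu_j v_{\pm\mu_j} = B\cdot v_{\pm\mu_j}$ for every paired eigenvector.

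To conclude, I would observe that both $B$ and $\widetilde{B}$ live inside the effective subspace: for $B$ this is the definition of the effective pseudoscalar $W_k$, and for each $\beta_{\mu_j}$ it holds because the 2-blade is built from eigenvectors in that subspace. Hence both bivectors annihilate (under $\cdot$) any vector in the orthogonal complement, while agreeing on a spanning set of eigenvectors inside the effective subspace; since a bivector is determined by its dot-product action on vectors, this forces $B = \widetilde{B}$. The main obstacle in this proof is really the pairing step for repeated eigenvalues — one must justify that the Clifford form restricts to a non-degenerate pairing between $E_{+\mu}$ and $E_{-\mu}$ so that dual bases exist, and that this is the correct generalisation of \cref{lem:pairing_eigenv}. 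Once that is settled, everything else is routine bookkeeping built on top of \cref{lem:pairing_eigenv,lem:cos_sin}.
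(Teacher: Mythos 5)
Your proof is correct, and it takes a genuinely different route from the paper's. The paper gives the updated theorem essentially no standalone proof: it obtains $\ta_\wedge(B_{\mu_j})=\beta_{\mu_j}$ from \cref{lem:cos_sin} (via $\si_\wedge(B_\mu)=\beta_\mu\co_\wedge(B_\mu)$, sidestepping invertibility of the outer cosine), lets \cref{lem:pairing_eigenv} guarantee admissible pairings with $v_{+\mu_j}\cdot v_{-\mu_j}\neq 0$, and then inherits the decomposition itself from the tangent-addition argument of \cref{theorem_decomp}, which was proved under the assumption of distinct eigenvalues. You instead verify $B=\sum_j \mu_j\beta_{\mu_j}$ directly by comparing adjoint actions on a full eigenbasis: cross-pair orthogonality $v_{+\mu_i}\cdot v_{-\mu_j}=0$ for $i\neq j$ is automatic across distinct eigenvalue pairs (the derivation property applied to the scalar $u\cdot w$ kills it unless the eigenvalues sum to zero), while within a repeated eigenvalue you argue that the Clifford form pairs $E_{+\mu}$ with $E_{-\mu}$ non-degenerately — your justification via $W_k^2\neq 0$ is sound, since a vector of $E_{+\mu}$ orthogonal to all of $E_{-\mu}$ would be orthogonal to all eigenvectors, which span the complexified effective subspace precisely by the theorem's hypothesis of as many eigenvectors as the effective dimension, contradicting the non-degeneracy of the restricted metric encoded in $W_k^2\neq 0$; dual bases then exist. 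The contraction identity $(a\wedge b)\cdot c = a(b\cdot c)-b(a\cdot c)$ and nullity of eigenvectors give $\widetilde{B}\cdot v_{\pm\mu_j}=\pm\mu_j v_{\pm\mu_j}$, and since both $B$ and $\widetilde{B}$ are supported in the effective subspace (for $B$ because $B\wedge W_k=0$ with $W_k$ built from $B$, for the $\beta_{\mu_j}$ because eigenvectors with $\mu\neq 0$ satisfy $W_k\wedge v_\mu=0$), non-degeneracy forces $B=\widetilde{B}$. What your route buys: it is self-contained linear algebra, it treats repeated eigenvalues uniformly instead of leaning on the distinct-eigenvalue \cref{theorem_decomp}, and it makes explicit the cross-pair orthogonality that the paper's phrase ``one must ensure'' only gestures at. What it gives up: the contact with the outer tangent — the conceptual point of the section — enters only by citing \cref{lem:cos_sin} for the relabelling $\beta_{\mu_j}=\ta_\wedge(B_{\mu_j})$. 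One small point worth stating rather than assuming: regularity already forces all $\mu_j\neq 0$ by \cref{cor_prod_eigenvals}, which your argument tacitly uses when invoking eigenvalue sums $\mu_i\pm\mu_j\neq 0$ and when placing eigenvectors inside the effective subspace.
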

\noindent
The difference with theorem \ref{theorem_decomp} lies in the fact degenerate (i.e. repeated) eigenvalues are now allowed. But as was shown in lemma \ref{lem:pairing_eigenv}, eigenvectors $v_{\pm \mu}$ can always be paired up in such a way that $v_{+\mu_j} \cdot v_{-\mu_j} \neq 0$. 

\section{The (outer) tangent and the Cayley transform}\label{sec:cayley}
In this section we will see how the outer trigonometric functions can appear in the framework of the Cayley transform, imposing relations which mimic group morphism properties. Let us first seek inspiration in the complex plane, where the isomorphism $i\mathbb{R} \cong \mathbb{R}_{2,0}^{(2)}$ allows us to say that
\[ C : \mathbb{R}_{2,0}^{(2)} \rightarrow \textup{Spin}(2) : B = \lambda e_{12} \mapsto C(B) := \frac{1 - B}{1 + B} = \frac{1 - \lambda e_{12}}{1 + \lambda e_{12}} \]
is the Cayley transform (with $\lambda \in \mathbb{R}$). This mapping should be contrasted with the classical exponential map, which maps a bivector $B$ to the spin group element
\[ R = e^B = \exp(B) = \sum_{j = 0}^\infty \frac{B^j}{j!} \in\ \textup{Spin}(p,q,r)\ . \]
 Once the bivector $B$ has been decomposed as $B = b_1 + \cdots + b_k$, it is clear that $R = \exp(B)$ can be written as a product of $k$ rotors $R_j = \exp(b_j)$ which will all mutually commute. This inspires us to look at the Cayley transform in such a way that a bivector $B$, which can be decomposed as the sum of simple and commuting bivectors, is also mapped to a product of commuting rotors. For that purpose we first note that if $b_1$ and $b_2$ are simple and commuting, then one has that
\[ C(b_1)C(b_2) = \frac{1 - b_1}{1 + b_1}\frac{1 - b_2}{1 + b_2} = \frac{1 - \frac{b_1 + b_2}{1 + b_1b_2}}{1 + \frac{b_1 + b_2}{1 + b_1b_2}}\ . \]
This computation suggests defining a (new) binary operation on simple and commuting bivectors, by means of
\[ b_1 \oplus b_2 = \frac{b_1 + b_2}{1 + b_1b_2}\ , \]
such that $C(b_1)C(b_2) = C(b_1 \oplus b_2)$.
Extending this to $k$ simple and commuting elements, one has the following: 
\begin{lem}
    If $x_1,\ldots,x_k$ denote $k$ commuting elements (for instance simple bivectors in a geometric algebra), then one has that 
    \[ x_1 \oplus \ldots \oplus x_k = \frac{\si_\wedge(x_1,\ldots,x_k)}{\co_\wedge(x_1,\ldots,x_k)} = \ta_\wedge(x_1,\ldots,x_k)\ . \]
    The outer functions are hereby defined in terms of the elementary symmetric polynomials, in the sense that 
    \begin{align*}
       \si_\wedge(x_1,\ldots,x_k) &= \sum_{i} e_{2i+1}(x_1,\ldots,x_k) \\
       \co_\wedge(x_1,\ldots,x_k) &= \sum_{i} e_{2i}(x_1,\ldots,x_k)\ .
    \end{align*}
    The summations hereby run over all odd (resp. even) indices $2i+1$ (resp. $2i$) which are smaller than or equal to $k$. 
\end{lem}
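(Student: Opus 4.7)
The natural approach is induction on $k$. The base case $k=1$ is immediate since $e_1(x_1) = x_1$ and $e_0 = 1$, so $\si_\wedge/\co_\wedge = x_1$, which matches the empty application of $\oplus$. The case $k = 2$ is exactly the definition of the binary operation, since $e_1(x_1,x_2) = x_1 + x_2$ and $e_0 + e_2(x_1,x_2) = 1 + x_1 x_2$.

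Before the inductive step I would check that, on any family of pairwise-commuting elements, the binary operation $\oplus$ is associative. A direct computation gives
\[ (a\oplus b)\oplus c \;=\; \frac{a+b+c+abc}{1+ab+ac+bc} \;=\; a\oplus(b\oplus c), \]
so the iterated expression $x_1\oplus\cdots\oplus x_k$ is unambiguous. Associativity is what guarantees we may bracket the induction step as $(x_1\oplus\cdots\oplus x_k)\oplus x_{k+1}$.

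The inductive step then reduces to the standard Pascal-type recursion for elementary symmetric polynomials,
\[ e_j(x_1,\ldots,x_{k+1}) \;=\; e_j(x_1,\ldots,x_k) + x_{k+1}\,e_{j-1}(x_1,\ldots,x_k). \]
Splitting this into odd and even indices produces the outer-trigonometric addition rules
\[ \si_\wedge(x_1,\ldots,x_{k+1}) = \si_\wedge(x_1,\ldots,x_k) + x_{k+1}\co_\wedge(x_1,\ldots,x_k), \]
\[ \co_\wedge(x_1,\ldots,x_{k+1}) = \co_\wedge(x_1,\ldots,x_k) + x_{k+1}\si_\wedge(x_1,\ldots,x_k). \]
Using the inductive hypothesis $\ta_\wedge(x_1,\ldots,x_k) = \si_\wedge/\co_\wedge$ and clearing the denominator, one gets
\[ (x_1\oplus\cdots\oplus x_k)\oplus x_{k+1} \;=\; \frac{\si_\wedge + x_{k+1}\co_\wedge}{\co_\wedge + x_{k+1}\si_\wedge} \;=\; \frac{\si_\wedge(x_1,\ldots,x_{k+1})}{\co_\wedge(x_1,\ldots,x_{k+1})}, \]
closing the induction.

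The only genuine obstacle is the invertibility of the outer cosines appearing in the intermediate quotients (the same subtlety that surfaced earlier around \cref{si=Bco} for isoclinic bivectors). Since all $x_j$ commute, all quantities involved lie in a commutative subalgebra, so the identity can be read as a polynomial identity in the $x_j$ with the formal variable $\co_\wedge$ cleared from both sides; interpreting the quotient via the limiting arguments already used in the paper then extends the conclusion to cases where the outer cosine fails to be invertible. This removes any algebraic worry and yields the stated equality.
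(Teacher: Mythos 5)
Your proof is correct and takes essentially the same route as the paper: induction on $k$, with the inductive step carried by the recursion $e_p(x_1,\ldots,x_{k+1}) = e_p(x_1,\ldots,x_k) + x_{k+1}\,e_{p-1}(x_1,\ldots,x_k)$ split into odd and even indices. Your explicit verification of the associativity of $\oplus$ and your remark on non-invertible outer cosines are reasonable supplements that the paper leaves implicit, but they do not alter the argument.
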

\begin{proof}
    This property can easily be proved using induction on the para\-meter $k$. The statement holds for $k = 2$, so let us then focus on 
    \begin{align*}
        x_1 \oplus \ldots \oplus x_k \oplus x_{k+1} &= \frac{\frac{\si_\wedge(x_1,\ldots,x_k)}{\co_\wedge(x_1,\ldots,x_k)} + x_{k+1}}{1 + \frac{x_{k+1}\si_\wedge(x_1,\ldots,x_k)}{\co_\wedge(x_1,\ldots,x_k)}}\\
        &= \frac{\si_\wedge(x_1,\ldots,x_k) + x_{k+1}\co_\wedge(x_1,\ldots,x_k)}{\co_\wedge(x_1,\ldots,x_k) + x_{k+1}\si_\wedge(x_1,\ldots,x_k)}\ .
    \end{align*}
    The lemma then follows from the fact that for all indices $1 \leq p \leq k+1$ one has that 
    \[ e_p(x_1,\ldots,x_k,x_{k+1}) = e_p(x_1,\ldots,x_k) + x_{k+1}e_{p-1}(x_1,\ldots,x_p)\ , \]
    a relation which easily follows from the definition of the elementary symmetric polynomials. 
\end{proof}
\noindent
It is clear that if the commuting variables $x_j$ stand for mutually commuting simple bivectors $b_j$, then these functions can be seen as $\si_\wedge(B)$ and $\co_\wedge(B)$ with $B = b_1 + \ldots + b_k$ (which explains why the very same symbols were used to denote these sums of elementary symmetric polynomials). 
Note that $\ta_\wedge(B) = b_1 \oplus \ldots \oplus b_k$ is not necessarily a bivector, but it is still anti-self-reversed (this hinges upon the fact that the $b_j$ commute) and that $C(b_1 \oplus \ldots \oplus b_k)$ will indeed be a rotor $R$. 
\\
\\
We can now `forget' about the simple commuting bivectors $b_j$ again and work with $B$ instead: the mapping
\[ B \in \mathbb{R}_{p,q,r}^{(2)} \mapsto R := \frac{1 - \ta_\wedge(B)}{1 + \ta_\wedge(B)} \in \textup{Spin}(p,q,r) \]
then provides the generalisation of the Cayley transform, mapping a bivector to a rotor. This has the nice consequence that 
\[ \ta_\wedge(B) = \frac{1 - R}{1 + R}\ \Rightarrow\ B = \arctan_\wedge\left(\frac{1 - R}{1 + R}\right)\ , \]
a formula for $B$ which begs the question whether there is a connection with the `standard' tangent function defined in \cite{GSGPS} as
\[ \ta(B) = \frac{\si(B)}{\co(B)} = \frac{e^B - e^{-B}}{e^B + e^{-B}} = \frac{R - \widetilde{R}}{R + \widetilde{R}}\ , \]
where we have put $R = e^B$. In \cite{GSGPS}, the authors proved that this rotor $R = e^B$ can also be expressed as an outer exponential, with
\begin{equation}\label{rotor_outerform}
   R = \expval{R}_0 \Lambda^{T}\ \ \textup{with}\ \ T = \frac{\expval{R}_2}{\expval{R}_0} \in \mathbb{R}_{p,q,r}^{(2)}\ . 
\end{equation}
This formula does not appear literally in the paper, but it easily follows from the so-called `tangent decomposition' in section 8. In view of formula (\ref{rotor_outerform}), it is then clear that 
\[ \ta(B) = \frac{R - \widetilde{R}}{R + \widetilde{R}} = \frac{\Lambda^T - \Lambda^{-T}}{\Lambda^T + \Lambda^{-T}} = \ta_\wedge(T) = \ta_\wedge\left(\frac{\expval{e^B}_2}{\expval{e^B}_0}\right)\ . \]
\begin{ex}
Consider a rotor $R = \exp(B)$ where $B = b_1 + b_2$. Then working out $\ta(B)$ explicitly, we get
        \begin{align*}
            \ta(b_1 + b_2) &= \frac{\si(b_1)\co(b_2) + \si(b_2)\co(b_1)}{\co(b_1)\co(b_2) + \si(b_1)\si(b_2)} = \frac{\ta(b_1) + \ta(b_2)}{1 + \ta(b_1)\ta(b_2)} = \ta_{\wedge}[\ta(b_1) + \ta(b_2)]\ .
        \end{align*}
\end{ex}

\section{Cayley-Hamilton for bivectors}\label{sec:ch}
In matrix language, the Cayley-Hamilton theorem says that square matrices over a commutative ring satisfy their own characteristic polynomial. To arrive at a GA version of this result, we have to turn an arbitrary bivector $B \in \mathbb{R}_{p,q,r}^{(2)}$ into a \emph{mapping} $f$ on vectors, defined by $f(v) := B \times v$. Defining the repeated action as $f^{a}(v) = f(f^{a-1}(v))$ and adding the identity map $f^0(v) = v$, we will eventually prove the following: 
\begin{thm}\label{CH_bivector}
    For any bivector $B$ with effective pseudoscalar $W_k$, the mapping $f(v) = B \times v$ satisfies its own characteristic polynomial, in the sense that 
\begin{equation}\label{CH_bivector_formula}
    P_{2k}(f) = \sum_{j = 0}^k (-1)^{k - j}\expval{W_j^2}_0 f^{2(k-j)} = 0\ .
\end{equation}
\end{thm}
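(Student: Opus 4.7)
The plan is to invoke classical Cayley-Hamilton for the linear operator $f$ on the $2k$-dimensional effective subspace, and then identify its characteristic polynomial with the one in \eqref{CH_bivector_formula} via the elementary-symmetric-polynomial interpretation from \cref{W_j^2_sympol}. Specifically, since $\langle W_j^2\rangle_0 = e_j(\mu_1^2, \ldots, \mu_k^2)$, the operator polynomial on the left of \eqref{CH_bivector_formula} equals $(-1)^k$ times the formal substitution $\lambda \to f$ into
\[ P_{2k}(\lambda) = \prod_{j=1}^k (\lambda^2 - \mu_j^2) = \sum_{j=0}^k (-1)^j \langle W_j^2\rangle_0\, \lambda^{2(k-j)}, \]
so the statement reduces to showing that $P_{2k}$, viewed as a polynomial identity in $f$, annihilates every vector of the effective subspace.

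Next, I would appeal to the regularity assumption together with \cref{lem:pairing_eigenv}, which furnishes a basis $\{v_{\pm\mu_j}\}_{j=1}^{k}$ of eigenvectors of $f$ on the effective subspace. This makes $f|_{\mathrm{eff}}$ diagonalizable with spectrum $\{\pm\mu_1, \ldots, \pm\mu_k\}$. From here there are two equivalent routes: one may either invoke classical Cayley-Hamilton for this diagonalizable operator directly, or verify vanishing on each eigenvector by iterating $f(v_\mu) = \mu v_\mu$ to obtain $f^{2(k-j)}(v_\mu) = \mu^{2(k-j)} v_\mu$, so that the operator polynomial in \eqref{CH_bivector_formula} evaluates on $v_\mu$ to $(-1)^k P_{2k}(\mu)\, v_\mu = 0$, since $\mu$ is a root of $P_{2k}$. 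Either way, \eqref{CH_bivector_formula} vanishes on every basis vector of the effective subspace, and hence on the whole effective subspace by linearity.

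The hard part will not be the main argument but rather pinning down the scope of the identity. The claim cannot hold unmodified on the complement of the image of $B$: on any $v$ with $f(v) = 0$, every higher-power term in \eqref{CH_bivector_formula} vanishes and one is left with $\langle W_k^2\rangle_0\, v$, which is nonzero for a regular $B$ by \cref{cor_prod_eigenvals}. So the identity must be read on the effective subspace. For pseudo-null bivectors (those with $0 \in \sigma(B)$) a limit argument should recover the statement, since both the coefficients $\langle W_j^2\rangle_0$ and the operator $f$ depend polynomially on $B$ and can be approached from the regular side by perturbing the eigenvalues. The Jordanesque case, which would require extending the argument to generalized eigenvectors, is already excluded in \cref{sec:eigenouter}.
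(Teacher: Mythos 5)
Your route is genuinely different from the paper's, and part of it is sharper than the paper itself: your scope observation is correct, since on any $v$ with $f(v)=0$ the operator polynomial returns $\expval{W_k^2}_0\, v \neq 0$ for regular $B$ with $2k<n$, so the identity can only be read on the effective subspace (indeed, the paper's own derivation of \cref{eq:sqrtCH} silently uses $W_k v = W_k \cdot f^0(v)$, i.e.\ $W_k \wedge v = 0$, in its first line, which is exactly that restriction). However, there is a genuine gap in the main argument: you conflate regularity with diagonalizability. \cref{lem:pairing_eigenv} operates under the hypothesis that $B$ has as many linearly independent eigenvectors as its effective dimension, and regularity does \emph{not} furnish this: the paper's own example $B = \sum_{a<b} e_{ab} \in \mathbb{R}_{2,2}^{(2)}$ has $W_2^2 \neq 0$ and $\sigma(B) \subset \mathbb{C}_0$, yet only two eigenvectors. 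Your closing remark that the Jordanesque case ``is already excluded in \cref{sec:eigenouter}'' does not apply here: that exclusion concerns the eigenvector-based decomposition theorems of that section, whereas \cref{CH_bivector} is asserted for \emph{any} bivector, and the paper's proof really does cover Jordanesque $B$ because it never touches eigenvectors. It derives the non-scalar ``square root'' identity $M_f(v)=0$ of \cref{eq:sqrtCH} by telescoping $W_{j+1}\cdot f^a = W_j \wedge f^{a+1}$ (a direct consequence of \cref{lem:W_recursive}), and then scalarizes the coefficients by repeatedly substituting $M_f(f^a(v))=0$ under the grade-$1$ projection, in effect multiplying $M_f$ by its reverse; no spectral data, no regularity, and no limits enter.

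The limit argument you propose for pseudo-null $B$ is also underspecified in ways that matter. First, the degree and coefficients of $P_{2k}$ are tied to the effective dimension $2k$, which can jump under perturbation (perturb $e_{12}$ to $e_{12} + \epsilon e_{34}$ in $\mathbb{R}_{4,0}$ and $k$ jumps from $1$ to $2$), so your approximating family must be constructed to preserve $k$, which you do not address. Second, there are algebras with no regular bivectors at all: in $\mathbb{R}_{1,0,1}$ every nonzero bivector is a multiple of $e_1e_0$ with $e_0$ null, hence pseudo-null, and moreover $f$ is there a nilpotent Jordan block at eigenvalue $0$ (note $P_2(f) = -f^2 = 0$ does hold, exactly as the paper's algebraic proof guarantees). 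So ``approaching from the regular side'' would require first embedding into a larger nondegenerate algebra and then controlling the limit of effective subspaces, none of which is sketched. In sum, your argument correctly proves the theorem for diagonalizable $B$ on its effective subspace and correctly flags the subspace restriction the paper glosses over, but the extension to regular-yet-Jordanesque and to pseudo-null bivectors, both covered at one stroke by the paper's purely algebraic proof, remains a nontrivial hole.
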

\noindent
First of all, we can rewrite lemma \ref{lem:W_recursive} in terms of the mapping $f$, since
\[ W_{j+1} \cdot f^a(v) = W_j \wedge \big(B \times f^a(v)\big) = W_j \wedge f^{a+1}(v)\ . \]
This holds for all vectors $v$ and can thus be read as a relation for $f$ (omitting $v$ from the formula). Because the (repeated) action of $f$ on a vector will always be vector-valued, we can rewrite the wedge product as follows: 
\[ W_{j+1} \cdot f^a = W_j \wedge f^{a+1} = W_jf^{a+1} - W_j \cdot f^{a+1}\ . \]
We can now repeatedly use this in a way which resembles our derivation of the (eigenvalue) equation $M_\mu v_\mu = 0$, starting from the effective pseudoscalar: 
\begin{align*}
    W_k v = W_k \cdot f^0(v) &=  W_{k - 1}f^1(v) - W_{k - 1}\cdot f^1(v)\\
    &= W_{k - 1}f^1(v) - \big(W_{k - 2}f^2(v) - W_{k - 2} \cdot f^2(v)\big)\ ,
\end{align*}
and so on (until $f^k$ appears). This leads to the following result: 
\begin{thm}
    For a bivector $B \in \mathbb{R}_{p,q,r}^{(2)}$, the associated mapping $f(v) = B \times v$ satisfies
    \begin{equation}\label{eq:sqrtCH}
        0 = M_{f}(v) \coloneqq f^{k}(v) - B f^{k-1}(v) + W_2 f^{k-2}(v) + \ldots + (-1)^k W_k f^0(v)\ .
    \end{equation}
\end{thm}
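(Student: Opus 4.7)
The plan is to iterate the reformulation of Lemma~\ref{lem:W_recursive} as an operator identity, namely
\[
W_{j+1} \cdot f^{a} \;=\; W_{j} \wedge f^{a+1} \;=\; W_{j}\, f^{a+1} - W_{j} \cdot f^{a+1},
\]
which is immediate once one observes that for any vector $u$ the product $W_{j}\, u$ splits cleanly as $W_{j} \cdot u + W_{j} \wedge u$. Starting from $W_{k} v = W_{k} \cdot f^{0}(v)$, the strategy is to apply this rewrite once, feed the resulting dot-product remainder back into the same rewrite, and continue. This is precisely the pattern used in Section~\ref{sec:outerexp} to derive $M_{\mu} v_{\mu} = 0$, so the present statement should be read as the operator-valued analogue of that scalar computation, with $f$ playing the role of $\mu$.

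After $m$ applications a short induction would yield the closed-form partial sum
\[
W_{k}\, v \;=\; \sum_{j=1}^{m} (-1)^{j-1}\, W_{k-j}\, f^{j}(v) \;+\; (-1)^{m}\, W_{k-m} \cdot f^{m}(v),
\]
and I would carry this out until $m = k - 1$. At that stage the leftover dot-product term is $(-1)^{k-1}\, W_{1} \cdot f^{k-1}(v) = (-1)^{k-1}\, B \cdot f^{k-1}(v)$, and this equals $(-1)^{k-1}\, f^{k}(v)$ because the dot product of a bivector with a vector coincides with the commutator product (as recorded in the introduction: $B \cdot u = B \times u$). Substituting back, collecting everything on one side, and multiplying through by $(-1)^{k-1}$ then reproduces the claimed identity
\[
0 \;=\; f^{k}(v) - B\, f^{k-1}(v) + W_{2}\, f^{k-2}(v) + \cdots + (-1)^{k}\, W_{k}\, v.
\]

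The only real obstacle is sign-bookkeeping across the induction, a routine but error-prone accounting that the plan above packages into the single closed-form partial sum. A minor subtlety worth flagging is that one must \emph{stop} the recursion at the step $W_{1} \cdot f^{k-1}$ and invoke $B \cdot v = B \times v$ by hand, rather than pushing formally to $j = 0$ where the identity degenerates (for the scalar $W_{0} = 1$, both $W_{0} \cdot u$ and $W_{0} \wedge u$ are simply $u$, so the decomposition is vacuous and would spuriously give zero). Everything else is a direct unwinding of Lemma~\ref{lem:W_recursive}.
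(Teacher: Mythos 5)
Your proposal is correct and is essentially the paper's own argument: the paper likewise recasts Lemma~\ref{lem:W_recursive} as the operator identity $W_{j+1}\cdot f^{a} = W_{j}\wedge f^{a+1} = W_{j}f^{a+1} - W_{j}\cdot f^{a+1}$ and unwinds it starting from $W_k v = W_k \cdot f^0(v)$, continuing ``until $f^k$ appears''. Your explicit closed-form partial sum and the remark about terminating the recursion at $W_1 \cdot f^{k-1}(v) = f^{k}(v)$ via $B \cdot u = B \times u$ (rather than formally pushing to $W_0$) merely make precise the termination step the paper leaves informal, so the two proofs coincide.
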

\noindent
Note that this result holds for arbitrary vectors $v$, not just for eigenvectors $v_\mu$, but in the special case that $f(v_\mu) = \mu v_\mu$, we recover $M_f(v_\mu) = M_\mu v_\mu = 0$. The formula above has a Cayley-Hamilton flavour to it, but the `coefficients' in front of the mappings $f^a$ are not scalar. Because $M_\mu \widetilde{M}_\mu = P_{2k}(\mu)$ we can interpret $M_f(v) = 0$ as a `square root' of the relation (\ref{CH_bivector_formula}). Without any reference to square roots, we can also note that relation (\ref{CH_bivector_formula}) contains powers of the {\em square} of the mapping $f$. A similar observation was also made in \cite{La} in the context of the group SU$(3)$. As is to be expected, we can also use the previous relation to arrive at a Cayley-Hamilton theorem which {\em does} involve scalar coefficients. This then proves theorem \ref{CH_bivector}. 
\begin{proof}
    Choosing the vector $w = f^a(v)$, the lemma above tells us that 
    \[ f^{k+a}(v) - B f^{k+a-1}(v) + W_2 f^{k+a-2}(v) + \cdots + (-1)^k W_k f^a(v) = 0\ , \]
    a relation which still holds for all $v$. The main idea behind the proof is that we will now repeatedly make use of the relation $M_{f}(w) = 0$ for a suitable $a \in \mathbb{N}$ to get rid of the odd powers of $f$. Writing 
    \[ M_f(w) = 0\ \Leftrightarrow\ f^k(w) = Bf^{k-1}(w) + \cdots + (-1)^{k+1}W_k f^0(w)\ , \]
    it is clear that the second equality still holds if we take the 1-graded part of the right-hand side. This will allow us to discard terms which have higher grades. Putting $a = k-1$, the equation $M_f(w) = 0$ gives: 
    \begin{align*}
        f^{2k}(v) &= \expval{Bf^{2k-1}(v)}_1\\
        &= \expval{B(Bf^{2k-2}(v) - W_2f^{2k-3}(v) + \cdots + (-1)^{k+1}f^{k-1}(v))}_1\\
        &= \expval{B^2}_0 f^{2k-2}(v) - \expval{BW_2f^{2k-3}(v)}_1\ .
    \end{align*}
    The first term at the right-hand side is what it should be, and the second term can be rewritten using $M_f(w) = 0$ for $a = k-2$ and picking up an expression for $Bf^{2k-3}(v)$: 
    \begin{align*}
        \expval{BW_2f^{2k-3}(v)}_1 &= \expval{W_2(f^{2k-2}(v) + W_2f^{2k-4}(v) - W_3f^{2k-5}(v) + \textup{L.O.T.})}_1\\
        &= \expval{W_2^2}_0 f^{2k-4}(v) - \expval{W_2W_3f^{2k-5}(v)}_1\ ,
    \end{align*}
    where the L.O.T. (lower order terms, hereby referring to lower exponents which come with a factor $W_j$ which is too high to contribute to the 1-graded part) could safely be ignored. This argument generalises and allows us to rewrite the second term in each step as a new sum of two terms, using $M_f(w) = 0$ for $a = k-j$. In explicit terms: 
    \begin{align*}
        \expval{W_{j-1}W_jf^{2k-2j+1}(v)}_1 &= \expval{W_j^2}_0 f^{2k-2j}(v) - \expval{W_{j+1}W_jf^{2k-2j-1}(v)}_1\ .
    \end{align*}    
    It then suffices to repeatedly use this formula until $j = k$ (or $a = 0$), in which case the second term will disappear because $W_{k+1} = 0$ is trivial. 
\end{proof}
\noindent
Finally, note that since $f(v) = B \times v$ is a linear map, we can also look at the Cayley-Hamilton theorem as proved by Hestenes and Sobczyk in \cite{HS}: 
\begin{equation}\label{He_So_CH}
        0 = \sum_{j=0}^n (-1)^{n-j} \expval{\partial_{(j)} f_{(j)}}_0 \; f^{(n - j)}(v) \, ,
    \end{equation}
with $n$ the dimension and where $\expval{\partial_{(j)} f_{(j)}}_0$ is the scalar part of the so-called simplicial derivative
\begin{equation*}
        \partial_{(r)} f_{(r)} \coloneqq \frac{1}{r!} (\partial^{a_r} \wedge \cdots \wedge  \partial^{a_1}) f(a_1) \wedge \cdots \wedge f(a_r)\ .
\end{equation*}
Since formula (\ref{He_So_CH}) is again a `scalar CH theorem' (the coefficients in front of powers of $f$ are real numbers), this suggests that this version is related to our theorem \ref{CH_bivector}. To see how this works, we first prove some lemmas. Note that we restrict ourselves to non-degenerate signatures $(p,q)$ in what follows, since we will make use of an orthonormal frame (and its dual) for the effective space in which a bivector $B$ acts. The orthonormal frame $\{e_j\}$ and its dual $\{e^j\}$, where $1 \leq j \leq 2k$ if $W_k$ is the effective pseudoscalar, thus satisfy
\[ e_i \cdot e_j = g_{ij} \qquad \mbox{and} \qquad e^i \cdot e_j = \delta^i_{j}\ . \]
\begin{lem}\label{lem:simplicial}
    For a bivector $B \in \mathbb{R}_{p,q}^{(2)}$, one has that $\partial_{(1)}f_{(1)} = -2B$. 
\end{lem}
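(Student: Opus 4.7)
The plan is to unfold the simplicial derivative at $r=1$ and then carry out a short bookkeeping computation in an orthonormal frame for the $2k$-dimensional effective subspace in which $B$ acts. For $r = 1$, the definition collapses to $\partial_{(1)} f_{(1)} = \partial^{a} f(a)$. Writing this out in the orthonormal basis $\{e_i\}$ with reciprocal $\{e^i\}$ satisfying $e^i \cdot e_j = \delta^i_j$ and $\sum_i g_{ki} e^i = e_k$, one obtains the finite sum
\[ \partial_{(1)} f_{(1)} = \sum_i e^i (B \cdot e_i)\ , \]
so the whole question reduces to evaluating this sum explicitly.

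To do so, I would expand $B = \tfrac{1}{2} B^{jk} e_j \wedge e_k$ with $B^{jk}$ antisymmetric, and apply the standard contraction identity $(e_j \wedge e_k) \cdot e_i = g_{ki} e_j - g_{ji} e_k$. Antisymmetry of $B^{jk}$ merges the two resulting terms into the compact expression $B \cdot e_i = B^{jk} g_{ki} e_j$. Plugging this back in and using $\sum_i g_{ki} e^i = e_k$ collapses the sum to $B^{jk}\, e_k e_j$, and one final use of antisymmetry (together with $e_k e_j = e_k \wedge e_j$ for $k \ne j$) rewrites this as $-B^{jk}(e_j \wedge e_k) = -2B$, which is the claim.

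The main obstacle is purely notational: one must keep track of the sign flip $B \cdot a = -a \cdot B$ between a bivector and a vector (which is the reason the factor $-2$ rather than $+2$ appears), and one should flag that the argument uses an orthonormal frame on the effective space — which is precisely why the author restricted to non-degenerate signatures $(p,q)$ just before stating the lemma. As a sanity check, or as a one-line alternative proof, one may invoke the Hestenes--Sobczyk identity $\partial^a (a \cdot A_r) = r A_r$ for a constant $r$-vector $A_r$, which combined with the sign above yields $\partial^a(B \cdot a) = -\partial^a(a \cdot B) = -2B$ immediately; I would nonetheless include the explicit frame computation, since the author's subsequent lemmas for higher $r$ will presumably need the same index machinery.
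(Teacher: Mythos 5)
Your proposal is correct and coincides with the paper's approach: the paper's entire proof of this lemma is the remark that it ``follows from direct calculations (or see \cite{La})'', and your frame computation is precisely that direct calculation carried out, with the sign flip $B \cdot a = -a \cdot B$ correctly accounting for the factor $-2$. Both your explicit expansion $\sum_i e^i (B\cdot e_i) = B^{jk}e_k e_j = -2B$ and the one-line shortcut via $\partial_a(a\cdot A_r) = rA_r$ check out.
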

\begin{proof}
This follows from direct calculations (or see \cite{La}). 
\end{proof}
\noindent
Next, let us look at the highest-grade part of the simplicial derivative (again for the map $f$ associated to a bivector $B$): 
\begin{lem}\label{lem:simplicial_Wj}
    For all $1 \leq j \leq k$ one has that $\expval{\partial_{(j)}f_{(j)}}_{2j} = (-2)^jW_j$. 
\end{lem}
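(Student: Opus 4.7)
The plan is to proceed by induction on $j$, using \cref{lem:simplicial} as the base case and exploiting the structure of the simplicial derivative together with the multiplicativity relation \eqref{wedge_W}, which states $W_{j-1} \wedge W_1 = j\, W_j$.

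For the base case $j = 1$, \cref{lem:simplicial} gives $\partial_{(1)}f_{(1)} = -2B$, which is purely of grade $2$. Hence its grade-$2$ component equals $-2B = (-2)^{1}W_1$. Expanded in a (non-degenerate) dual frame $\{e_i\},\{e^i\}$, this reads
\[
\sum_i e^i \wedge (B \cdot e_i) = -2B,
\]
a concrete formula that will be re-used inside the induction step.

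For the inductive step, the key observation is that both $(e^{i_j}\wedge\cdots\wedge e^{i_1})$ and $(B\cdot e_{i_1})\wedge\cdots\wedge(B\cdot e_{i_j})$ are grade-$j$ multivectors, so the top-grade component (grade $2j$) of their geometric product coincides with their wedge product. This rewrites
\[
\langle\partial_{(j)}f_{(j)}\rangle_{2j} = \frac{1}{j!}\sum_{i_1,\ldots,i_j}(e^{i_j}\wedge\cdots\wedge e^{i_1}) \wedge (B\cdot e_{i_1}) \wedge \cdots \wedge (B\cdot e_{i_j})
\]
entirely in terms of wedge products. I would then isolate the last index $i_j$ by pulling $e^{i_j}$ to the far left and $(B\cdot e_{i_j})$ to the far right, and recognise the inner sum over the remaining indices as $(j-1)!\,\langle\partial_{(j-1)}f_{(j-1)}\rangle_{2(j-1)}$, yielding
\[
\langle\partial_{(j)}f_{(j)}\rangle_{2j} = \frac{1}{j}\sum_{i_j} e^{i_j} \wedge \langle\partial_{(j-1)}f_{(j-1)}\rangle_{2(j-1)} \wedge (B\cdot e_{i_j}).
\]

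Applying the induction hypothesis $\langle\partial_{(j-1)}f_{(j-1)}\rangle_{2(j-1)} = (-2)^{j-1} W_{j-1}$, and using graded commutativity of the wedge product to move $e^{i_j}$ through the even-grade element $W_{j-1}$ without picking up a sign, the sum collapses to $\frac{(-2)^{j-1}}{j}\,W_{j-1} \wedge (-2B) = \frac{(-2)^j}{j}\,W_{j-1} \wedge W_1$. Relation \eqref{wedge_W} finally gives $W_{j-1}\wedge W_1 = j\,W_j$, and the factor of $j$ cancels to deliver the desired $(-2)^j W_j$.

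I expect the main obstacle to be expositional rather than technical: one must be explicit that the ``product'' in the simplicial derivative is the geometric product (not the contraction), so that the grade-$2j$ projection is meaningful; one must keep track of the parity of $W_{j-1}$ when commuting $e^{i_j}$ through it; and the argument tacitly relies on a non-degenerate signature, where an orthonormal dual frame is available, precisely the hypothesis already announced in the paper just before \cref{lem:simplicial}. Each of these points is routine once flagged.
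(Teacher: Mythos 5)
Your proposal is correct and is essentially the paper's own argument: both reduce the grade-$2j$ projection to a pure wedge product and pair each frame vector $e^{i}$ with its image $f(e_{i})$ via \cref{lem:simplicial} to produce factors of $-2B$. The paper simply regroups all $j$ pairs simultaneously, obtaining $\tfrac{1}{j!}(-2B)^{\wedge j} = (-2)^j W_j$ in one line, whereas you unroll the same regrouping as an induction using $W_{j-1}\wedge W_1 = j\,W_j$ from \eqref{wedge_W} — a presentational difference, not a different route.
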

\begin{proof}
The result follows by direct computation and the lemma above: 
    \begin{align*}
        \expval{\partial_{(j)} f_{(j)}}_{2j} &= \tfrac{1}{j!} \; \partial^{a_j} \wedge \cdots \wedge  \partial^{a_1} \wedge f(a_1) \wedge \cdots \wedge f(a_j) \\
        &= \tfrac{1}{j!} \; (\partial^{a_1} \wedge f(a_1)) \wedge \cdots \wedge (\partial^{a_j} \wedge f(a_j)) \\
        &= \tfrac{1}{j!} \; (- 2 B) \wedge \cdots \wedge (-2 B) = (-2)^j W_j\ ,
    \end{align*}
as was to be shown. 
\end{proof}
\noindent
Before we formulate the final conclusion, we consider an example to illustrate the two main ideas behind the general proof: 
\begin{align*}
    \expval{\partial_{(2)}f_{(2)}}_0 &= \sum_{a < b}\expval{\partial^b \wedge \partial^a f(e_a) \wedge f(e_b)}_0 = \tfrac{1}{4}\sum_{a,b} \expval{\expval{\partial^b\partial^a}_2\expval{f(e_a)f(e_b)}_2}_0\ .
\end{align*}
First of all, note that the summation over $a < b$ was replaced by the full summation over $a$ and $b$, which allows to replace the wedge products by {\em half of} an ordinary GA product. To exclude the contribution coming from $a = b$ we included a grade-2 projection, but using the property $\expval{ABCD}_0 = \expval{BCDA}_0$ this can be omitted. Indeed, moving the derivative $\partial^b$ next to $f(e_b)$ and using lemma \ref{lem:simplicial} will again lead to a product of grade-2 elements: 
\[ \expval{\partial_{(2)}f_{(2)}}_0 = \frac{1}{4}\sum_a\sum_b \expval{\partial^af(e_a)f(e_b)\partial^b}_0 = \expval{(-B)(-\widetilde{B})}_0 = -\expval{W_1^2}_0\ . \]
\begin{thm}
    Let $B \in \mathbb{R}_{p,q}^{(2)}$ be a bivector with associated map $f$, then
    \[ \expval{\partial_{(2j)} f_{(2j)}}_{0} = (-1)^j\expval{W_j^2}_0  \]
    for all $1 \leq j \leq k$, with $k$ the effective dimension of $B$. 
\end{thm}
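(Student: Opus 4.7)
The plan is to generalise the $j=1$ computation sketched in the example immediately preceding the theorem. The argument has three moves: (i) turn the wedge products into unrestricted geometric products (the outer grade-$0$ projection automatically suppresses the wedge/product discrepancy), (ii) exploit cyclicity of the scalar part to bring each derivative $\partial^{a_i}$ adjacent to its matching factor $f(a_i)$, and (iii) recognise two copies of a simplicial derivative that, by lemma~\ref{lem:simplicial_Wj}, each yield $(-2)^{j}W_{j}$ (up to reversion).

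First I would expand
\[ \expval{\partial_{(2j)} f_{(2j)}}_0 \;=\; \tfrac{1}{(2j)!}\expval{\bigl(\partial^{a_{2j}}\wedge\cdots\wedge\partial^{a_1}\bigr)\bigl(f(a_1)\wedge\cdots\wedge f(a_{2j})\bigr)}_0 \]
and replace each wedge by the corresponding full Clifford product, at the cost of a purely numerical factor. As in the $j=1$ case, where $\sum_{a<b}$ was rewritten as $\tfrac{1}{4}\sum_{a,b}$, the conversion for general $j$ costs $1/4^{j}$: the grade-$0$ projection at the outside, together with the antisymmetry already present in the other wedge, kills all lower-grade cross-terms automatically.

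Next I would use the cyclic identity $\expval{XY}_{0}=\expval{YX}_{0}$ to split the long string of $2j$ derivatives and $2j$ functions into two halves of $j$ pairs each, arranged so that each $\partial^{a_i}$ stands adjacent to the matching $f(a_i)$. A factor $\binom{2j}{j}$ counts the choice of which indices land in the first block; the remaining $j!$'s arising from the wedge antisymmetrizations then combine with the $1/(2j)!$ in the definition. Each of the two resulting blocks is now of the form evaluated by lemma~\ref{lem:simplicial_Wj}: the first block, after restriction to its grade-$2j$ part (which is all the grade-$0$ projection sees), is $(-2)^{j}W_{j}$, and the second block gives its reverse $\widetilde{W}_{j}=(-1)^{j}W_{j}$, since $W_{j}$ has grade $2j$. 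Putting the pieces together,
\[ \expval{\partial_{(2j)}f_{(2j)}}_{0} \;=\; (\text{coefficient})\cdot \expval{W_{j}\widetilde{W}_{j}}_{0} \;=\; (-1)^{j}\expval{W_{j}^{2}}_{0}, \]
provided the numerical coefficient collapses to $1$.

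The principal obstacle, and the only real work, is exactly this combinatorial bookkeeping: verifying that the $1/4^{j}$ from the wedge-to-product conversion, the binomial $\binom{2j}{j}$ from the partition of indices, the $1/(j!)^{2}$ from the two wedge antisymmetrizations inside each block, the $1/(2j)!$ from the definition of $\partial_{(2j)}$, and the $(-2)^{j}\cdot(-2)^{j}=4^{j}$ from the two applications of lemma~\ref{lem:simplicial_Wj} all conspire to unity, leaving only the sign $(-1)^{j}$ produced by the reversion of $W_{j}$. A minor subsidiary worry is that cyclically permuting a block of $2j$ vectors past another within $\expval{\cdot}_{0}$ could introduce a hidden sign; however, the total grade $2j$ of each block is even, so no such penalty is incurred. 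The $j=1$ case worked out in the example provides a convenient sanity check at every stage.
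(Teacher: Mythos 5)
Your high-level strategy is the same skeleton as the paper's proof: pass to unrestricted sums, use cyclicity of $\expval{\cdot}_0$ to bring each derivative next to its partnered function, collapse the pairs via Lemma~\ref{lem:simplicial} (respectively Lemma~\ref{lem:simplicial_Wj}), and extract the sign $(-1)^j$ from reversion of a grade-$2j$ element. However, the step you explicitly defer (``the only real work'') is in fact the entire proof, and the factor inventory you propose does not close: $\tfrac{1}{4^j}\cdot\binom{2j}{j}\cdot\tfrac{1}{(j!)^2}\cdot\tfrac{1}{(2j)!}\cdot 4^j = \tfrac{1}{(j!)^4}$, not $1$ (nor does it become $1$ if Lemma~\ref{lem:simplicial_Wj} is applied to unnormalized blocks, which would only restore a factor $(j!)^2$). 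The $\binom{2j}{j}$ is spurious: no sum over partitions of indices into blocks ever occurs. Moving half of the derivatives cyclically to the right-hand end of the string is a single fixed rearrangement, not a choice to be counted. The paper's normalization is instead $\tfrac{1}{2^j j!}$ per side --- a factor $2$ for each pair $(a_i,b_i)$ and a $j!$ for permutations of the pairs --- giving $\tfrac{1}{(2^jj!)^2}$ in total, matched against $(-2B)^j$ and $(-2\widetilde{B})^j$, whose grade-$2j$ parts are $(-2)^j j!\,W_j$ and $2^j j!\,W_j$.

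Two of your supporting claims would moreover fail as stated. First, the cross terms in the wedge-to-product conversion are \emph{not} ``killed automatically'' by the outer grade-$0$ projection plus antisymmetry: already for $j=1$ with $B = e_{12}\in\mathbb{R}_{2,0}$ one checks that $\sum_{a,b}\expval{\expval{\partial^b\partial^a}_2\expval{f(e_a)f(e_b)}_2}_0 = 2$ while the unprojected sum $\sum_{a,b}\expval{\partial^a f(e_a)f(e_b)\partial^b}_0 = 4$; the prefactor $\tfrac14$ belongs to the \emph{unprojected} sum, so the diagonal and dot-product terms survive and are essential. What rescues the computation is not antisymmetry but the exact identity $\sum_a \partial^a f(e_a) = -2B$ with the full geometric product, whose scalar part vanishes because $f(v) = B\times v$ is trace-free --- a property of this specific $f$, not of antisymmetrization. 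Second, after the first pair collapses, the accumulated factor $B$ sits between the remaining derivatives and functions; making the next $\partial^{a_i}$ adjacent to $f(a_i)$ requires commuting $B$ through via $B\partial^i = \partial^i B + \comm{B}{\partial^i}$ and arguing that the commutator term, being grade-lowering, cannot contribute --- a step entirely absent from your outline. Relatedly, your assertion that the grade-$0$ projection ``only sees'' the grade-$2j$ parts of your two blocks holds only if each block is of pure grade $2j$; once wedges are replaced by geometric products it is not, and one must retain the grade-$2j$ projections inherited from the original $2j$-fold wedges (as the paper does) to cut the lower-grade parts of $B^j$. Your $j=1$ sanity check cannot detect these problems, since all the numerical discrepancies above first bite at $j\geq 2$ or are masked by the compensating role of the diagonal terms.
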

\begin{proof}
    Like in the example above (the case $j = 1$), we will start by rewriting the summation over indices $a_1 < b_1 < \cdots < a_j < b_j$. We will switch to a full summation over all indices $a_1, \cdots, b_j$ but this comes with a correction factor: 
    \begin{align*}
        \sum_{a_1 < \cdots < b_j}\partial^{b_j} \wedge \cdots \wedge \partial^{a_1} &= \frac{1}{2^j j!}\sum_{a_1}\cdots\sum_{b_j}\expval{(\partial^{b_j}\partial^{a_j})\cdots(\partial^{b_1}\partial^{a_1})}_{2j}\ .
    \end{align*}
    Indeed, there is a compensating factor $2$ per pair of indices $(a_i,b_i)$, and a factor $j!$ to compensate the permutations leading to the same projection on the $(2j)$-graded part. Since $\partial_{(2j)} f_{(2j)}$ is by definition a product of two elements of grade $(2j)$, we can safely perform manipulations until we have again obtained a product of such elements. Using the cyclic property mentioned above, we will move the partial derivatives until they are again next to their partnered function (either from the left or the right), which means that factors $\pm 2B$ will appear. During this process, one may use that $B\partial^i = \partial^iB + [B,\partial^i]$, whereby this second term can safely be ignored since it will lower the grade and hence cannot contribute to the result. Combining these ideas, this means that 
    \begin{align*}
        \expval{\partial_{(2j)}f_{(2j)}}_0 &= \frac{1}{(2^jj!)^2}\sum_{a_1,\cdots,b_j}\expval{\expval{\cdots(\partial^{b_i}\partial^{a_i})\cdots}_{2j}\expval{\cdots(f(a_i)f(b_i))\cdots}_{2j}}_0\\
        &= \frac{1}{(2^jj!)^2}\expval{(-2B)^j(-2\widetilde{B})^j}_0 = (-1)^j\expval{W_j^2}_0\ ,
    \end{align*}
    where we gathered the factors $B$ in such a way that two elements of grade $(2j)$ appeared naturally. 
\end{proof}
\noindent
Using a similar argument, one shows that $\expval{\partial_{(2j+1)}f_{(2j+1)}}_0 = 0$, after which it is clear that the Cayley-Hamilton theorem from Hestenes and Sobczyk is indeed compatible with our \cref{CH_bivector}.\\
\noindent
This leads us to the interesting conclusion that \cref{eq:sqrtCH} is a `square root' of the Cayley-Hamilton theorem, which to the authors best knowledge has no known matrix equivalent. 
We conclude this section with an example of the matrix equivalent to \cref{eq:sqrtCH}, which clearly illustrates that it represents something new.
\begin{ex}
To see the connection with the classical Cayley-Hamilton theorem and matrices, we consider a bivector $B = \sum_{i < j} B_{ij} e_{i}e_j$ in 4 dimensions (Euclidean signature). The map $f(v) = B \times v$ has a matrix representation
    \begin{equation}
        A = \smqty(0 & - B_{12} & - B_{13} & - B_{14} \\
        B_{12} & 0 & - B_{23} & - B_{24} \\
        B_{13} & B_{23} & 0 & - B_{34} \\
        B_{14} & B_{24} & B_{34} & 0 \\) \, .
    \end{equation}
    The characteristic equation of $A$ is given by
    \begin{align*}
        0 = \det(A - \mu \mathbb{I}) &= \mu^{4} + \mu^{2} \left(B_{12}^{2} + B_{13}^{2} + B_{14}^{2} + B_{23}^{2} + B_{24}^{2} + B_{34}^{2}\right) \\
        &\quad + B_{12}^{2} B_{34}^{2} - 2 B_{12} B_{13} B_{24} B_{34} + 2 B_{12} B_{14} B_{23} B_{34} \\
        &\quad + B_{13}^{2} B_{24}^{2} - 2 B_{13} B_{14} B_{23} B_{24} + B_{14}^{2} B_{23}^{2} \\
        &= \mu^4 - \mu^2 \expval{B^2}_0 + \expval{W_2^2}_0\ .
    \end{align*}
    This equation is then identical to $P_{4}(\mu) = 0$, as was to be expected. One may wonder how the matrix realisation of our refined Cayley-Hamilton theorem $M_f(v) = 0$ looks like. Explicitly, we have
    \begin{align}\label{eq:Mf_example}
        0 = M_f(v) &= B \times (B \times v) - B (B \times v) + \tfrac{1}{2} (B \wedge B) v\ .
    \end{align}
This formula clearly contains terms of grade 1, grade 1 and 3, and grade 3. It is therefore impossible that $M_f(v)$ can be written in terms of the matrix $A$ alone, since $A$ maps vectors to vectors. We therefore need {\em another matrix} to represent the geometric product of $\tfrac{1}{2} (B \wedge B)$ with a vector $v$, which maps vectors $\vec{v}$ into trivectors $\vec{\tau}$:
    \begin{equation*}
        T = (- B_{12} B_{34} + B_{13} B_{24} - B_{14} B_{23}) \smqty(0 & 0 & 0 & 1\\0 & 0 & 1 & 0\\0 & 1 & 0 & 0\\1 & 0 & 0 & 0)
    \end{equation*}
    With some effort (on the part of computer algebra software) it can be shown that $B (B \times v)$ has an $(8 \times 4)$ matrix representation $\smqty(A^2 \\ T) \vec{v}$ and therefore has both a vector and a trivector contribution, and we thus conclude that the matrix representation of \ref{eq:Mf_example} is given by
    \begin{equation*}
        \mqty(\vec{u} \\ \vec{\tau}) = \mqty(A^2 \\ \mathbf{0}) \vec{v} - \mqty(A^2 \\ T) \vec{v} + \mqty(\mathbf{0} \\ T) \vec{v} = \mqty(\vec{\mathbf{0}} \\ \vec{\mathbf{0}})\ ,
    \end{equation*}
    where $\vec{u}$ and $\vec{v}$ represent vectors, and $\vec{\tau}$ represent the trivector part.
\end{ex}

\section{Conclusion}
In this paper we have shown that the outer exponential $\Lambda^B$ of a bivector, as an alternative for the classical exponential $e^B$, encodes a lot of information about both the spectrum of $B$ and the invariant decomposition (it captures all the invariants associated to $B$). We have then seen how properties of the outer exponential $\Lambda^B$ relate the simple bivectors $b_1 + \cdots + b_k$ appearing in the invariant decomposition for $B$ to the outer tangent function, via the non-invertibility of the scalar quantity $|\Lambda^B|^2$.
Herein lies the main difference with $e^B$, since $|e^B|^2 = 1$ always implies invertibility. This has further implications than the ones explored in this paper, because one can relate bivectors $B$ with $|\Lambda^B|^2 = 0$ to spinors via the idempotents used to define them. This, together with the problem concerning bivectors with `Jordanesque behaviour' will be treated in a follow-up publication.  


\begin{thebibliography}{99}

\bibitem{GSGPS} De Keninck, S., Roelfs, M., {\em Graded Symmetry Groups: Plane and Simple}. 

\bibitem{DKD} De Keninck, S., Dorst, L., {\em A guided tour to the plane-based geometric algebra PGA}, see https://bivector.net/PGA4CS.html. 

\bibitem{LGaSG} Doran, C., Hestenes, D., Sommen, F., Vanacker, N. {\em Lie-groups as Spin groups}. J. Math. Phys. {\bf 34} No. 8 (1993), p. 3642–3669.

\bibitem{GA4Ph} Doran, C., Lasenby, A., {\em Geometric Algebra for Physicists}, Cambridge University Press, Cambridge, 2003. https://doi.org/10.1017/CBO9780511807497. 

\bibitem{GA4CS} Dorst, L., Fontijne, D., Mann, S., {\em Geometric Algebra for Computer Science: An Object-Oriented Approach to Geometry}, Morgan Kaufmann Publishers Inc., San Francisco, CA, USA, 2009.

\bibitem{FH} Fulton, W., Harris, J., {\em Representation theory: a first course}, Springer-Verlag, 1991. 

\bibitem{HS} Hestenes, D., Sobczyk, G., {\em Clifford algebra to geometric calculus : a unified language for mathematics and physics}, Kluwer Academic Publishers, 1984. 

\bibitem{La} Lasenby, A., {\em Some recent results for SU(3) and octonions within the geometric algebra approach to the fundamental forces of nature}, Math. Meth. Appl. Sciences {\bf 47} Issue 3 (2024), p. 1471-1491. 

\bibitem{Lou} Lounesto, P., {\em Clifford Algebras and Spinors}, Cambridge University Press, 2001. 

\bibitem{Mo} Mozzi, G., {\em Discorso matematico sopra il rotamento momentaneo dei corpi} (1763).

\bibitem{Po} Porteous, I., {\em Clifford algebras and the classical groups}, Cambridge University Press, 1995. 

\bibitem{Riesz} Riesz, M., {\em Clifford Numbers and Spinors}, Springer Netherlands (Dordrecht), 1993. 

\end{thebibliography}
\end{document}